\newtheorem{theorem}{Theorem}
\newtheorem{lemma}{Lemma}
\newtheorem{definition}{Definition}
\newtheorem{corollary}{Corollary}
\newtheorem{proposition}{Proposition}
\newtheorem{remark}{Remark}
\newcommand{\nd}{{\rm and }}
\newcommand{\mR}{\mathbb{R}}
\newcommand{\mC}{\mathbb{C}}
\newcommand{\mN}{\mathbb{N}}
\newcommand{\mE}{\mathbb{E}}
\newcommand{\mS}{\mathbb{S}}
\newcommand{\mB}{\mathbb{B}}
\newcommand{\cM}{\mathcal{M}}
\newcommand{\cH}{\mathcal{H}}
\newcommand{\cF}{\mathcal{F}}
\newcommand{\cP}{\mathcal{P}}
\newcommand{\cC}{\mathcal{C}}
\newcommand{\cR}{\mathcal{R}}
\newcommand{\cS}{\mathcal{S}}
\newcommand{\cW}{\mathcal{W}}
\newcommand{\ux}{\underline{x}}
\newcommand{\uxb}{\underline{x} \grave{}}
\newcommand{\uxi}{\underline{\xi}}
\newcommand{\uyb}{\underline{y} \grave{}}
\newcommand{\uy}{\underline{y}}
\newcommand{\ds}{d \sigma_{\ux}}
\newcommand{\pj}{\partial_{x_j}}
\newcommand{\pjb}{\partial_{{x \grave{}}_{j}}}
\newcommand{\px}{\partial_x}
\newcommand{\upx}{\partial_{\underline{x}}}
\newcommand{\upxb}{\partial_{\underline{{x \grave{}}} }}
\begin{document}
\title{Integration in superspace using distribution theory}

\author{K.\ Coulembier\thanks{Ph.D. Fellow of the Research Foundation - Flanders (FWO), E-mail: {\tt Coulembier@cage.ugent.be}}, H.\ De Bie\thanks{Corresponding author} \thanks{Ph.D. Fellow of the Research Foundation - Flanders (FWO), E-mail: {\tt Hendrik.DeBie@UGent.be}} , F.\ Sommen\thanks{E-mail: {\tt fs@cage.ugent.be}}}

\date{\small{Clifford Research Group -- Department of Mathematical Analysis}\\
\small{Faculty of Engineering -- Ghent University\\ Krijgslaan 281, 9000 Gent,
Belgium}}

\maketitle

\begin{abstract}
In this paper, a new class of Cauchy integral formulae in superspace is obtained, using formal expansions of distributions. This allows to solve five open problems in the study of harmonic and Clifford analysis in superspace.
\end{abstract}

\textbf{MSC 2000 :}   58C50, 30G35, 26B20\\
\noindent
\textbf{Keywords :}   Cauchy formula, superspace, supersphere, Clifford analysis

\newpage
\section{Introduction}

In a previous set of papers (see a.o. \cite{DBS1, DBS2, DBS4, 
DBS5, DBE1, DBS9}) we have developed a theory of harmonic analysis and 
Clifford analysis in superspace. Superspaces are spaces equipped with 
both a set of commuting variables and a set of anti-commuting 
variables (generating a so-called Grassmann algebra). They are usually 
studied from the point of view of algebraic or differential geometry 
(see \cite{MR732126, MR565567, MR778559, MR574696}). Our approach, on the other hand, was 
based on a generalization of harmonic and Clifford analysis by introducing a set of differential operators (such as 
a Dirac and Laplace operator) and on the study of the special 
functions and orthogonal polynomials related to these operators.

The aim of the present paper is to solve several open questions that 
have arisen during our previous research. It turns out that these 
problems can be solved by using a distributional approach to 
integration in superspace.

First of all, let us briefly discuss the problems that will be 
answered in the present paper.

In \cite{DBS5}, we introduced an integral over the supersphere of a 
polynomial $R$ using an old result of Pizzetti (see \cite{PIZZETTI}) 
as follows:
\begin{equation}
\label{Pizzetti1}
\int_{SS} R = \sum_{k=0}^{\infty} (-1)^k \frac{2 \pi^{M/2}}{2^{2k} 
k!\Gamma(k+M/2)} (\Delta^{k} R )(0),
\end{equation}
where $\Delta$ is the super Laplace operator and $M$ the associated 
superdimension.
However, with this approach we could only integrate polynomials and 
not more general superfunctions. In principle, one could use the 
Hahn-Banach theorem and the density of polynomials in several types of 
function spaces to extend the Pizzetti formula, but this would require 
good estimates of derivatives, which are of course hard to obtain.
A second problem is that we could only prove the uniqueness of the 
integral (\ref{Pizzetti1}) in the case when the superdimension $M$ is 
not even and negative (see \cite{DBE1}).

Furthermore, in \cite{DBSCauchy} we obtained a Cauchy integral formula 
in superspace and several related results, leading to nice 
generalizations of well-known facts in complex analysis. Again, two 
problems could not be settled. We did not obtain a generalization of 
Morera's theorem, stating that if the integral of a function over 
every contour in an open set $\Omega$ is zero, then this function is 
monogenic in $\Omega$. Secondly, although we did obtain a 
Cauchy-Pompeiu formula, this formula does not allow to reconstruct a 
monogenic function completely. Apart from these two major problems, we would also like to connect the Green formula, connecting integration over the supersphere with integration over the superball (see \cite{DBS5}) with more general Cauchy integral formulae.

Finally, in \cite{DBS9}, we introduced a Fourier transform in 
superspace. As an application we defined a Radon transform in 
superspace by means of two consecutive Fourier transforms. We were at 
that time not able to find an interpretation of this transform as an 
integral over the set of all hyperplanes.

Summarizing, we want to solve the following 5 problems:
\begin{itemize}
\item \textbf{P1}: find a closed formula for the Pizzetti integral 
over the supersphere
\item \textbf{P2}: prove uniqueness of the Pizzetti integral in the 
case $M \in -2 \mN$
\item \textbf{P3}: obtain Morera's theorem for monogenic functions in 
superspace
\item \textbf{P4}: obtain a suitable Cauchy-Pompeiu formula in superspace
\item \textbf{P5}: give an interpretation of the super Radon transform.
\end{itemize}
All these problems can be solved using a similar technique. Let us 
briefly sketch the main idea in the case we are dealing with 
$\mR^{m}$. Suppose that $\cM_{1}$ is a $m-1$ dimensional manifold 
contained in $\mR^{m}$ and determined by an equation $\nu(x_{1}, 
\ldots, x_{m}) =0$. Then integration over this manifold can be rewritten 
as follows:
\begin{equation}
\int_{\cM_{1}} . = \int_{\mR^{m}} \delta(\nu(x_{1},\ldots, x_{m})) (.) dV(\ux)
\label{intrand}
\end{equation}
with $\delta$ the Dirac distribution and $dV(\ux)$ the classical 
Lebesgue measure.

Similarly, integration over an $m$-dimensional manifold $\cM_{2}$ in 
$\mR^{m}$, determined by some inequality $\nu(x_{1},\ldots, x_{m})>0$ 
can be rewritten as
\begin{equation}
\int_{\cM_{2}} . = \int_{\mR^{m}} H(\nu(x_{1},\ldots, x_{m}))(.) dV(\ux)
\label{intinhoud}
\end{equation}
where $H$ is the Heaviside function.

Problems \textbf{P1} and \textbf{P5} will be solved by 
generalizing equations (\ref{intrand}) and (\ref{intinhoud}) to 
superspace, where the distributions $\delta$ and $H$ will be developed 
in formal Taylor series. Problem \textbf{P2} is solved using different means, but we have drawn inspiration from our work on problem \textbf{P1}. Problems \textbf{P3} and \textbf{P4} will be 
solved by reformulating the Cauchy integral formula in $\mR^{m}$ (a 
well-known result in Clifford analysis, see e.g. \cite{MR697564, MR1169463}) in terms of distributions, 
which will allow us to obtain a suitable superspace extension.

The paper is organized as follows. In section 2 we repeat some results on the theory of harmonic and Clifford analysis in superspace, necessary for the sequel. In section 3 we obtain a new class of Cauchy formulas in superspace. In section 4 we obtain Morera's theorem and we state a new Cauchy-Pompeiu formula, hence solving \textbf{P3} and \textbf{P4}. In section 5 we discuss integration over the supersphere and solve problems \textbf{P1} and \textbf{P2}. Finally, in section 6 we give a new definition for the Radon transform in superspace and connect it with the previous definition, thus solving \textbf{P5}.

\section{Preliminaries}
\label{preliminaries}

Superspaces are spaces where one considers not only commuting (bosonic) but also anti-commuting (fermionic) co-ordinates (see a.o. \cite{MR732126,MR565567}). In our approach to superspace (see \cite{DBS1}), we start with the real algebra $\cP \otimes\cC=\mbox{Alg}(x_i,{x \grave{}}_j)\otimes  \mbox{Alg}(e_i,{e \grave{}}_j)= \mbox{Alg}(x_i, e_i; {x \grave{}}_j,{e \grave{}}_j)$, $i=1,\ldots,m$, $j=1,\ldots,2n$
generated by
\begin{itemize}
\item $m$ commuting variables $x_i$ and $m$ orthogonal Clifford generators $e_i$
\item $2n$ anti-commuting variables ${x \grave{}}_i$ and $2n$ symplectic Clifford generators ${e \grave{}}_i$
\end{itemize}
subject to the multiplication relations
\[ \left \{
\begin{array}{l} 
x_i x_j =  x_j x_i\\
{x \grave{}}_i {x \grave{}}_j =  - {x \grave{}}_j {x \grave{}}_i\\
x_i {x \grave{}}_j =  {x \grave{}}_j x_i\\
\end{array} \right .
\quad \mbox{and} \quad
\left \{ \begin{array}{l}
e_j e_k + e_k e_j = -2 \delta_{jk}\\
{e \grave{}}_{2j} {e \grave{}}_{2k} -{e \grave{}}_{2k} {e \grave{}}_{2j}=0\\
{e \grave{}}_{2j-1} {e \grave{}}_{2k-1} -{e \grave{}}_{2k-1} {e \grave{}}_{2j-1}=0\\
{e \grave{}}_{2j-1} {e \grave{}}_{2k} -{e \grave{}}_{2k} {e \grave{}}_{2j-1}=\delta_{jk}\\
e_j {e \grave{}}_{k} +{e \grave{}}_{k} e_j = 0\\
\end{array} \right .
\]
and where moreover all elements $e_i$, ${e \grave{}}_j$ commute with all elements $x_i$, ${x \grave{}}_j$. The algebra generated by all generators $e_i$, ${e \grave{}}_j$ is denoted by $\cC$. In the case where $n = 0$ we have that $\cC \cong \mR_{0,m}$, the standard orthogonal Clifford algebra with signature $(-1,\ldots,-1)$. When $m=0$, we have that $\cP\otimes\cC=\Lambda_{2n}\otimes \cW_{2n}$, with $\Lambda_{2n}$ the Grassmann algebra generated by the  ${x \grave{}}_i$ and $\cW_{2n}$ the Weyl algebra generated by the ${e \grave{}}_{j}$. Unless explicitly mentioned, we will always assume a superspace with bosonic variables ($m\not=0$). The most important element of the algebra $\cP \otimes \cC$ is the vector variable $x = \ux+\uxb$ with
\[
\ux = \sum_{i=1}^m x_i e_i \,\;,\;\;\;\;\; \uxb= \sum_{j=1}^{2n} {x \grave{}}_{j} {e \grave{}}_{j}.
\]

The square of $x$ is scalar-valued and equals $x^2 =  \sum_{j=1}^n {x\grave{}}_{2j-1} {x\grave{}}_{2j}  -  \sum_{j=1}^m x_j^2=\uxb^2+\ux^2$. We will often write $r=\sqrt{-\ux^2}$. The bosonic part $\ux^2$ is invariant under $SO(m)$ while $\uxb^2$ is invariant under the symplectic group $Sp(2n)$, so $x^2$ is invariant under $SO(m)\times Sp(2n)$.

On the other hand, the super Dirac operator is defined as
\[
\px = \upxb-\upx = 2 \sum_{j=1}^{n} \left( {e \grave{}}_{2j} \partial_{{x\grave{}}_{2j-1}} - {e \grave{}}_{2j-1} \partial_{{x\grave{}}_{2j}}  \right)-\sum_{j=1}^m e_j \pj.
\]

Its square is the super Laplace operator
\[
\Delta = \px^2 =4 \sum_{j=1}^n \partial_{{x \grave{}}_{2j-1}} \partial_{{x \grave{}}_{2j}} -\sum_{j=1}^{m} \pj^2=\Delta_f+\Delta_b.
\]

Usually the Dirac operator acts from the left, but for Cauchy formulas we will also need the right Dirac operator. This is defined by
\[
\begin{array}{ll}
\cdot \px = - \cdot \upxb - \cdot \upx ;& \quad F \rightarrow - F \upxb - F \upx = F\px,
\end{array}
\]
where we have introduced an extra minus sign. The derivatives act from the right in the normal way, $\cdot {x\grave{}}_i \partial_{{x\grave{}}_j}=\cdot \delta_{ij}-\cdot  \partial_{{x\grave{}}_j}{x\grave{}}_i$. If we let $\px$ act on $x$ we obtain
\[
\px x = m-2n = M=(x\px)
\]
where $M$ is the so-called super-dimension. Note that the anti-commuting variables behave as if their dimension is negative. The numerical parameter $M$ gives a global characterization of our superspace and will be very important in the sequel.

When we consider a more general bosonic function space $\cF$ (e.g $\cS(\mR^m)$, $C^{k}(\Omega)$ or $L_2(\mR^m)$), we use the notation $\cF_{m|2n}$ for $\cF\otimes\Lambda_{2n}$.

Furthermore we introduce the super Euler operator by
\begin{eqnarray*}
\mE =\mE_b+\mE_f&=& \sum_{j=1}^m x_j \pj+\sum_{j=1}^{2n} {x \grave{}}_{j} \pjb,
\end{eqnarray*}
which allows us to decompose $\cP$ as
\begin{eqnarray*}
\cP &=& \bigoplus_{k=0}^{\infty} \cP_k, \quad \cP_k=\left\{ p \in \cP \; | \; \mE p=k p \right\}.
\end{eqnarray*}

Now we define spherical harmonics in superspace.

\begin{definition}
An element $F \in \cP$ is a spherical harmonic of degree $k$ if it satisfies
\begin{eqnarray*}
\Delta F &=&0\\
\mE F &=& kF, \quad \mbox{i.e. $F \in \cP_k$}.
\end{eqnarray*}
Moreover the space of all spherical harmonics of degree $k$ is denoted by $\cH_k$.
\end{definition}
In the purely bosonic case we denote $\cH_{k}$ by $\cH_{k}^{b}$, in the purely fermionic case by $\cH_{k}^{f}$.

We have the following decomposition (see \cite{DBS2}).

\begin{lemma}[Fischer decomposition 1]
If $M \not \in -2 \mN$, $\cP$ decomposes as
\begin{equation}
\cP = \bigoplus_{k=0}^{\infty} \cP_k= \bigoplus_{j=0}^{\infty} \bigoplus_{k=0}^{\infty} x^{2j}\cH_k.
\label{superFischer}
\end{equation}
If $m=0$, then the decomposition is given by
\begin{equation}
\Lambda_{2n} = \bigoplus_{k=0}^{n} \left(\bigoplus_{j=0}^{n-k} \uxb^{2j} \cH^f_k \right).
\label{fermFischer}
\end{equation}
\label{scalFischer}
\end{lemma}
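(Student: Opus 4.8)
The strategy is to derive both statements from a single operator identity. From the explicit forms of $\Delta$ and $x^2$ one obtains the commutation relation
\[
[\Delta,\, x^2] = 4\mE + 2M ,
\]
so that $x^2\cdot$, $\Delta$ and $\mE+M/2$ span a copy of $\mathfrak{sl}_2$. First I would iterate this to check, by induction on $b\ge 1$, that for every $H_\ell\in\cH_\ell$
\[
\Delta\,(x^2)^{b}H_\ell = 2b\,(2b+2\ell+M-2)\,(x^2)^{b-1}H_\ell ,
\]
and then, applying this $b$ times in succession,
\[
\Delta^{b}\,(x^2)^{b}H_\ell = \lambda_{b,\ell}\,H_\ell ,\qquad
\lambda_{b,\ell}=\prod_{i=1}^{b} 2i\,(2i+2\ell+M-2).
\]
Everything rests on the remark that, for $b\ge 1$ and $\ell\ge 0$, the factor $2i+2\ell+M-2$ vanishes precisely when $M=-2(i+\ell-1)$; hence $\lambda_{b,\ell}\neq 0$ for all admissible $b,\ell$ exactly when $M\notin -2\mN$.

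Assume now $M\notin -2\mN$. I would prove, by induction on $k$, that $\cP_k=\bigoplus_{j=0}^{\lfloor k/2\rfloor} x^{2j}\cH_{k-2j}$ and that $H\mapsto x^{2j}H$ is injective on $\cH_{k-2j}$; combined with the Euler decomposition $\cP=\bigoplus_k\cP_k$ this gives \eqref{superFischer}. The cases $k=0,1$ are immediate since $\Delta$ annihilates constants and linear elements, so $\cP_k=\cH_k$. For the inductive step: injectivity of $x^{2j}$ on $\cH_{k-2j}$ follows from $\Delta^{j}x^{2j}H=\lambda_{j,k-2j}H$ together with $\lambda_{j,k-2j}\neq 0$. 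For directness, apply $\Delta$ to a vanishing combination $\sum_{j} x^{2j}H_{k-2j}=0$; by linearity and the displayed identity this becomes a vanishing expression of an element of $\cP_{k-2}$ already written in the form $\sum_i x^{2i}(\text{harmonic})$, so the inductive hypothesis plus the non-vanishing of the coefficients forces $H_{k-2j}=0$ for $j\ge 1$, whence $H_k=0$. For surjectivity, given $R_k\in\cP_k$ one writes $\Delta R_k\in\cP_{k-2}$ as $\sum_i x^{2i}U_{k-2-2i}$ with $U$ harmonic (inductive decomposition), rescales each term by the reciprocal of the corresponding nonzero coefficient to obtain $S_{k-2}\in\cP_{k-2}$ with $\Delta(x^2 S_{k-2})=\Delta R_k$; then $R_k-x^2S_{k-2}\in\cH_k$, so $R_k\in\cH_k+x^2\cP_{k-2}$, which is exactly the asserted decomposition.

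For the purely fermionic case $m=0$ this argument collapses, because now $M=-2n\in -2\mN$ and the coefficients $\lambda_{b,\ell}$ genuinely vanish — which is precisely why \eqref{fermFischer} is a \emph{finite} sum. Here I would instead exploit that $\Lambda_{2n}$ is finite-dimensional, using the $\mathfrak{sl}_2$-triple
\[
e = \uxb^2\cdot ,\qquad f = -\tfrac14\Delta_f ,\qquad h = \mE_f - n ,
\]
whose relations $[h,e]=2e$, $[h,f]=-2f$, $[e,f]=h$ follow at once from $[\Delta_f,\uxb^2]=4(\mE_f-n)$ and from $\uxb^2\cdot$ and $\Delta_f$ raising/lowering the fermionic degree by $2$. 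As a finite-dimensional $\mathfrak{sl}_2$-module $\Lambda_{2n}$ is completely reducible; in each irreducible summand the lowest-weight line is killed by $f$, hence by $\Delta_f$, so it is spanned by a fermionic harmonic, which has $h$-weight $k-n$ if it sits in degree $k$. Thus only harmonics of degree $k\le n$ can occur as lowest weight vectors, each one generating the string $\uxb^{2j}\cH^f_k$, $j=0,\dots,n-k$ (with $\uxb^{2(n-k+1)}\cH^f_k=0$), and $\ker\Delta_f$ is exactly the sum of these lines. Summing the corresponding irreducible strings over a basis of each $\cH^f_k$ ($k=0,\dots,n$) yields \eqref{fermFischer}, with directness and the injectivity of the maps $\uxb^{2j}$ below the cutoff automatically encoded in the module structure; in particular $\cH^f_k=0$ for $k>n$.

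I expect the only real obstacle to be conceptual, and only in the fermionic case: for $M\notin -2\mN$ the content is just the coefficient computation and the observation that its non-vanishing is equivalent to the hypothesis, everything else being bookkeeping. For $m=0$ one cannot push through the naive induction — the vanishing of $\lambda_{b,\ell}$ must be interpreted, not avoided — and the cleanest way to see that it corresponds to the truncation of the strings (rather than to a breakdown of the decomposition) is the $\mathfrak{sl}_2$ complete-reducibility argument above; alternatively one could verify the identity $\dim(\Lambda_{2n})_k=\sum_j\dim\cH^f_{k-2j}$ directly and combine it with the injectivity of $\uxb^{2j}$ below the cutoff.
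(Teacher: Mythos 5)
Your argument is correct. Note that the paper itself offers no proof of this lemma: it simply cites \cite{DBS2} and records, immediately afterwards, the identity (\ref{relationslaplace}) with the remark that this computation ``leads to'' the decomposition. Your treatment of the case $M\notin-2\mN$ is exactly the argument that remark points to: your iterated relation $\Delta\,(x^2)^b H_\ell=2b(2b+2\ell+M-2)(x^2)^{b-1}H_\ell$ is (\ref{relationslaplace}) specialized to harmonic $R_k$, and the induction on the degree (injectivity of $x^{2j}$ from $\lambda_{b,\ell}\neq0$, directness by applying $\Delta$, surjectivity by solving $\Delta(x^2S_{k-2})=\Delta R_k$ term by term) is the standard way to convert the non-vanishing of these constants into (\ref{superFischer}); your observation that $\lambda_{b,\ell}\neq0$ for all admissible indices is \emph{equivalent} to $M\notin-2\mN$ correctly locates the role of the hypothesis. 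Where you genuinely add something is the purely fermionic case: rather than reproducing the computational treatment of the reference, you package $\uxb^2$, $-\tfrac14\Delta_f$ and $\mE_f-n$ into an $\mathfrak{sl}_2$-triple and invoke complete reducibility of the finite-dimensional module $\Lambda_{2n}$, so that the truncation of the strings at $j=n-k$ (and the vanishing $\cH^f_k=0$ for $k>n$) come out of the weight structure instead of explicit coefficient bookkeeping; this is a clean, self-contained route to (\ref{fermFischer}), at the mild cost of importing Weyl's theorem, whereas the computational route stays entirely elementary. The only small caveats are bookkeeping ones you have already implicitly handled: lowest-weight vectors are automatically homogeneous because $h$-eigenvectors are $\mE_f$-eigenvectors, and the coefficient appearing in the surjectivity step, $2(i+1)(2k-2i+M-4)$, is again of the excluded form $M=-2(k-i-2)$ with $k-i-2\geq0$, so it is indeed nonzero under the hypothesis.
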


The following calculation leads to the Fischer decomposition in lemma \ref{scalFischer}. It will also be needed in the sequel. For $R_k \in \cP_k$ we have that
\begin{equation}
\label{relationslaplace}
\Delta (x^{2t} R_{k})= 2t(2k+M+2t-2) x^{2t-2} R_k + x^{2t} \Delta R_k.
\end{equation}

The notion of a spherical harmonic can be refined to

\begin{definition}
An element $F \in \cP\otimes\cC$ is a spherical monogenic of degree $k$ if it satisfies
\begin{eqnarray*}
\px F &=&0\\
\mE F &=& kF, \quad \mbox{i.e. $F \in \cP_k$}.
\end{eqnarray*}
Moreover the space of all spherical monogenics of degree $k$ is denoted by $\cM_k$.
\end{definition}
In the purely bosonic case we denote $\cM_{k}$ by $\cM_{k}^{b}$, in the purely fermionic case by $\cM_{k}^{f}$.

It is clear that every spherical monogenic is a spherical harmonic. This allows us to refine the Fischer decomposition, leading to (see \cite{DBS2})

\begin{lemma}[Fischer decomposition 2]
If $M \not \in -2 \mN$, $\cP_k\otimes\cC$ decomposes as
\begin{equation*}
\cP_k\otimes\cC  = \bigoplus_{i=0}^{k} x^{i} \cM_{k-i}.
\end{equation*}
If $m=0$, then the decomposition is given by
\begin{eqnarray}
\label{cliffordfermionicfischer}
\cP_k\otimes\cW_{2n}  &=& \bigoplus_{i=0}^{k} \uxb^{\, i} \cM_{k-i}^f, \quad k\leq n\\
\label{cliffordfermfischer2}
\cP_{2n-k} \otimes\cW_{2n}&=& \bigoplus_{i=0}^{k} \uxb^{2n-2k+i} \cM_{k-i}^f, \quad k\leq n.
\end{eqnarray}
\label{cliffordfischerdecomp}
\end{lemma}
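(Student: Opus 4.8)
The plan is to deduce Lemma~\ref{cliffordfischerdecomp} from the harmonic Fischer decomposition (Lemma~\ref{scalFischer}) by splitting each spherical harmonic into a monogenic part plus $x$ times a monogenic of one degree lower. The only structural ingredient needed is the anticommutation relation $\px x+x\px=2\mE+M$ on $\cP\otimes\cC$ (see \cite{DBS1,DBS2}); since $\mE$ is a derivation with $\mE x=x$, it yields, for every $F\in\cM_{l-1}$,
\[
\px(xF)=(2\mE+M)F-x\,\px F=c_{l}\,F,\qquad c_{l}:=2(l-1)+M .
\]
The scalar $c_{l}$ vanishes exactly when $M=-2(l-1)\in-2\mN$, hence $c_{l}\neq0$ for all $l\geq1$ as soon as $M\not\in-2\mN$; in the purely fermionic case $M=-2n$ and $c_{l}=2(l-1-n)<0$ for every $l\leq n$.

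The first step is a one-step refinement: if $l\geq1$ and $c_{l}\neq0$, then
\[
\cH_{l}\otimes\cC=\cM_{l}\oplus x\,\cM_{l-1}.
\]
Indeed $x\cM_{l-1}\subseteq\cH_{l}\otimes\cC$, because $xF$ has degree $l$ and $\Delta(xF)=\px(c_{l}F)=c_{l}\px F=0$; the sum is direct, because $xF\in\cM_{l}$ forces $c_{l}F=0$, i.e. $F=0$; and it exhausts $\cH_{l}\otimes\cC$, because for $H\in\cH_{l}\otimes\cC$ the element $\px H$ is monogenic of degree $l-1$ (as $\px(\px H)=\Delta H=0$), so $F:=c_{l}^{-1}\px H\in\cM_{l-1}$ and $H-xF\in\cM_{l}$. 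For $l=0$ the statement is simply $\cH_{0}\otimes\cC=\cM_{0}$.

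Next I would assemble, in the case $M\not\in-2\mN$. Taking the degree-$k$ part of \eqref{superFischer} gives $\cP_{k}=\bigoplus_{j\ge0}x^{2j}\cH_{k-2j}$, so $\cP_{k}\otimes\cC=\bigoplus_{j\ge0}x^{2j}(\cH_{k-2j}\otimes\cC)$ is a direct sum. Moreover multiplication by $x$ is injective on $\cP\otimes\cC$: from $x^{2}=\uxb^{2}-r^{2}$ (recall $\ux^{2}=-r^{2}$), with $r^{2}=\sum_{i}x_{i}^{2}$ a non-zero-divisor and $\uxb^{2}$ nilpotent, an induction on the fermionic degree shows $xF=0\Rightarrow F=0$, and likewise for $x^{2j}$. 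Applying the one-step refinement inside each summand then gives the direct decomposition $x^{2j}(\cH_{k-2j}\otimes\cC)=x^{2j}\cM_{k-2j}\oplus x^{2j+1}\cM_{k-2j-1}$, and collecting the even powers $x^{2j}\cM_{k-2j}$ and the odd powers $x^{2j+1}\cM_{k-2j-1}$ (reindex by $i=2j$, resp. $i=2j+1$) produces $\cP_{k}\otimes\cC=\bigoplus_{i=0}^{k}x^{i}\cM_{k-i}$.

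For the purely fermionic identities \eqref{cliffordfermionicfischer}--\eqref{cliffordfermfischer2} I would run the same argument with \eqref{fermFischer} in place of \eqref{superFischer}, $\uxb$ in place of $x$, $\cW_{2n}$ in place of $\cC$, and $\cM^{f}_{k}$ in place of $\cM_{k}$; the one-step refinement stays valid as $\cH^{f}_{l}\otimes\cW_{2n}=\cM^{f}_{l}\oplus\uxb\,\cM^{f}_{l-1}$ for $1\leq l\leq n$ since $c_{l}=2(l-1-n)\neq0$ there. For \eqref{cliffordfermionicfischer} one uses that when $k\leq n$ the expansion of $\cP_{k}$ coming from \eqref{fermFischer} contains every term $\uxb^{2j}\cH^{f}_{k-2j}$ with $k-2j\geq0$ (the remaining constraints in \eqref{fermFischer} then being automatic) and reindexes as above; for \eqref{cliffordfermfischer2} one starts instead from the expansion of $\cP_{2n-k}$, whose terms are $\uxb^{2n-2k+2t}\cH^{f}_{k-2t}$ for $0\leq t\leq\lfloor k/2\rfloor$, refines each, and reindexes. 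I expect the real obstacle to be precisely this directness in the fermionic setting: because $\uxb$ is nilpotent, injectivity of the powers $\uxb^{\,i}$ on the subspaces involved is not automatic and must be extracted from the hard-Lefschetz-type property already built into \eqref{fermFischer} (the symplectic $\mathfrak{sl}_{2}$-triple formed by $\uxb^{2}$, $\Delta_{f}$ and $\mE_{f}$); the anticommutation relation and the degree- and monogenicity-checks above are routine.
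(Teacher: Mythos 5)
The paper does not actually prove this lemma --- it is quoted from \cite{DBS2} --- so there is no in-paper proof to compare with; judged on its own merits, your route is the standard one and its skeleton is sound. The one-step refinement $\cH_l\otimes\cC=\cM_l\oplus x\,\cM_{l-1}$ via the relation $\px x+x\px=2\mE+M$, the injectivity of multiplication by $x$ on $\cP\otimes\cC$ when $m\neq 0$, and the assembly along the degree-$k$ part of \eqref{superFischer} settle the case $M\notin-2\mN$ completely; likewise your identification of the degree-homogeneous parts of \eqref{fermFischer} relevant for \eqref{cliffordfermionicfischer} and \eqref{cliffordfermfischer2}, and the validity of the refinement $\cH^f_l\otimes\cW_{2n}=\cM^f_l\oplus\uxb\,\cM^f_{l-1}$ for $l\le n$ (since $c_l=2(l-1-n)\neq 0$ there), are correct.

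The weak point is exactly the one you flag and then defer: the directness of the fermionic sums requires that multiplication by $\uxb^{2j}$ be injective on $\cH^f_l\otimes\cW_{2n}$ whenever $j\le n-l$ (in \eqref{cliffordfermionicfischer} the powers are $\uxb^{2t}$ acting on $\cH^f_{k-2t}$, in \eqref{cliffordfermfischer2} they are $\uxb^{2n-2k+2t}$ acting on $\cH^f_{k-2t}$; in both cases the exponent lies in the admissible range). Writing that this ``must be extracted from the hard-Lefschetz-type property'' names the right tool but is not an argument, so as written the proof is incomplete at precisely that point. The gap is local and fillable: either invoke the $\mathfrak{sl}_2$-triple generated by $\uxb^2$, $\Delta_f$ and $\mE_f-n$, for which each $h\in\cH^f_l$ generates an irreducible module spanned by $\uxb^{2j}h$, $0\le j\le n-l$; or argue by dimensions, since $\dim\cH^f_l=\binom{2n}{l}-\binom{2n}{l-2}$ gives $\sum_{l=0}^{n}(n-l+1)\dim\cH^f_l=4^n=\dim\Lambda_{2n}$, and the directness of \eqref{fermFischer} then forces $\dim\bigl(\uxb^{2j}\cH^f_l\bigr)=\dim\cH^f_l$ for all $j\le n-l$, which is the required injectivity (and it passes to $\cH^f_l\otimes\cW_{2n}$ since $\uxb^2$ commutes with $\cW_{2n}$). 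With that single step inserted, your proof is complete.
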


For the purely fermionic case we thus obtain the full decomposition 
\begin{eqnarray}
\label{fulferFischer}
\Lambda_{2n}\otimes\cW_{2n}&=&\bigoplus_{k=0}^n\bigoplus_{j=0}^{2n-2k}\uxb^j\cM_k^f.
\end{eqnarray}

In general a function $f\in C^{1}(\Omega)_{m|2n}$ with $\Omega$ an open domain in $\mR^m$ is called left-monogenic in $\Omega$ if $\px f=0$.

By analogy with the purely bosonic (see \cite{MR1169463}) case we define the Gamma operator as
\begin{equation}
\label{EulerGamma}
\Gamma=x\px-\mE.
\end{equation}
Unlike the Euler operator, the Gamma operator is not the sum of the purely bosonic and fermionic Gamma operator. The super Laplace-Beltrami operator is defined by
\begin{equation}
\label{GammaLB}
\Delta_{LB}=\Gamma(M-2-\Gamma ),
\end{equation}
and satisfies 
\begin{equation}
\label{defLB}
x^2\Delta=\Delta_{LB}+\mE(M-2+\mE).
\end{equation}

The integration used on $\Lambda_{2n}$ is the so-called Berezin integral (see \cite{MR732126,DBS5}), defined by
\[
\int_{B} = \pi^{-n} \partial_{{x \grave{}}_{2n}} \ldots \partial_{{x \grave{}}_{1}} = \frac{(-1)^n \pi^{-n}}{4^n n!} \Delta_f^{n}.
\]

Suppose $R_{2n-2k}$ is an element of $\Lambda_{2n}$ with $\mE R_{2n-2k}=(2n-2k)R_{2n-2k}$. Then, as was obtained in \cite{DBSCauchy},
\begin{eqnarray}
\label{berekeningBer}
\int_B\uxb^{2k} R_{2n-2k} &=& \frac{k!  (-1)^{n-k} }{\pi^n4^{n-k}(n-k)!} \Delta_f^{n-k}( R_{2n-2k}).
\end{eqnarray}

The following property of the Berezin integral is easily derived.

\begin{lemma}
\label{basicprop}
If for $p\in\Lambda_{2n}$ holds that 
\begin{eqnarray*}
\int_B p\, q&=&0
\end{eqnarray*}
for every $q\in\Lambda_{2n}$, then $p=0$.
\end{lemma}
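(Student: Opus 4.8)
The statement is precisely the non-degeneracy of the bilinear form $(p,q)\mapsto\int_B pq$ on the Grassmann algebra $\Lambda_{2n}$ (i.e. the fact that the Berezin integral is a Frobenius form), and my plan is to check this directly on the monomial basis. Recall that $\Lambda_{2n}$ has the real basis $\{{x \grave{}}_A : A\subseteq\{1,\dots,2n\}\}$, where for $A=\{i_1<\cdots<i_k\}$ one writes ${x \grave{}}_A={x \grave{}}_{i_1}\cdots{x \grave{}}_{i_k}$ and ${x \grave{}}_\emptyset=1$. From the definition $\int_{B}=\pi^{-n}\partial_{{x \grave{}}_{2n}}\cdots\partial_{{x \grave{}}_{1}}$ one reads off that $\int_B{x \grave{}}_{\{1,\dots,2n\}}=\pi^{-n}$ while $\int_B{x \grave{}}_A=0$ for every proper subset $A$, since differentiating a monomial of degree $<2n$ a total of $2n$ times yields $0$.

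First I would expand $p=\sum_A p_A\,{x \grave{}}_A$ with $p_A\in\mR$ and argue by contraposition: assuming $p\neq 0$, pick $A$ with $p_A\neq 0$, let $A^c$ denote its complement in $\{1,\dots,2n\}$, and use the test element $q={x \grave{}}_{A^c}\in\Lambda_{2n}$. Then in $pq=\sum_{A'}p_{A'}\,{x \grave{}}_{A'}{x \grave{}}_{A^c}$ a term vanishes unless $A'\cap A^c=\emptyset$, because ${x \grave{}}_i^{\,2}=0$; and of the surviving terms (those with $A'\subseteq A$) only $A'=A$ has degree $2n$, the others being annihilated by $\int_B$. Hence $\int_B pq=p_A\int_B\bigl({x \grave{}}_A{x \grave{}}_{A^c}\bigr)=\varepsilon\,p_A\,\pi^{-n}$, where $\varepsilon=\pm1$ is the sign of the shuffle that reorders the concatenation of $A$ and $A^c$ into increasing order. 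Since $\varepsilon\,p_A\,\pi^{-n}\neq 0$, this contradicts the hypothesis $\int_B pq=0$ for all $q$, so $p=0$.

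I do not expect a genuine obstacle here: the whole argument is bookkeeping with the monomial basis. The only point requiring a little care is verifying that every cross term ${x \grave{}}_{A'}{x \grave{}}_{A^c}$ with $A'\neq A$ disappears under $\int_B$, which is exactly where the nilpotency ${x \grave{}}_i^{\,2}=0$ and the degree count are used. Equivalently, one may observe that after pairing each basis element ${x \grave{}}_A$ with ${x \grave{}}_{A^c}$ the Gram matrix of the form becomes block-anti-diagonal with nonzero scalar entries $\pm\pi^{-n}$, hence invertible.
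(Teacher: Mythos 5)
Your argument is correct and complete: expanding $p$ in the monomial basis, testing against the complementary monomial ${x\grave{}}_{A^c}$, and using nilpotency plus the degree count to isolate the single top-degree term $\pm\,p_A\,{x\grave{}}_1\cdots{x\grave{}}_{2n}$ with $\int_B{x\grave{}}_1\cdots{x\grave{}}_{2n}=\pi^{-n}\neq 0$ is exactly the standard non-degeneracy verification for the Berezin pairing. The paper itself gives no proof (it merely calls the lemma ``easily derived''), so your write-up simply supplies the routine check the authors omitted; the only point worth noting is that the precise shuffle sign $\varepsilon$ is irrelevant, as you observe, since only its non-vanishing matters.
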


The integration used on a general superspace is defined by
\begin{equation}
\label{superint}
\int_{\mR^{m | 2n}} = \int_{\mR^m} dV(\ux)\int_B=\int_B \int_{\mR^m} dV(\ux),
\end{equation}
with $dV(\ux)$ the usual Lebesgue measure in $\mR^{m}$.
In \cite{DBS5} and \cite{DBE1} integration over the supersphere for polynomials was introduced using the following formula
\begin{equation}
\label{Pizzetti}
\int_{SS} R  =  \sum_{k=0}^{\infty} (-1)^k \frac{2 \pi^{M/2}}{2^{2k} k!\Gamma(k+M/2)} (\Delta^{k} R )(0).
\end{equation}
This is based on an old result of Pizzetti (see \cite{PIZZETTI}) concerning integration of polynomials over the ordinary unit sphere. The properties of this integral are listed below, and when $M\not\in -2\mN$, this defines the integral uniquely (see \cite{DBE1}).
\begin{theorem}
If $M \not \in -2\mN$, the only linear functional $T: \cP \rightarrow \mR$ satisfying the following properties for all $f(x) \in \cP$:
\begin{itemize}
\item $T(x^2 f(x)) = - T(f(x))$
\item $T(f(g \cdot x)) = T(f(x))$, \quad $\forall g \in SO(m)\times Sp(2n)$
\item $k \neq l \quad \Longrightarrow \quad T(\cH_k \cH_l) = 0 = T(\cH_l \cH_k)$, i.e. $H_{k} \, \bot \, H_{l}$
\item $T(1) = \dfrac{2 \pi^{M/2}}{\Gamma(M/2)}$,
\end{itemize}
is given by the Pizzetti integral (formula (\ref{Pizzetti})).
\label{uniciteitgeval}
\end{theorem}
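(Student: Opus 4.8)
The plan is to show that the four listed properties pin down $T$ completely by using the Fischer decomposition (Lemma \ref{scalFischer}) to reduce any $f \in \cP$ to a sum of terms of the form $x^{2j}\cH_k$, and then to evaluate $T$ on each such term. First I would observe that, since $M \not\in -2\mN$, every $f \in \cP$ can be written uniquely as $f = \sum_{j,k} x^{2j} H_{j,k}$ with $H_{j,k} \in \cH_k$, the sum being finite. By linearity it therefore suffices to compute $T(x^{2j} H_k)$ for an arbitrary $H_k \in \cH_k$. Applying the first property $j$ times gives $T(x^{2j} H_k) = (-1)^j T(H_k)$, so in fact everything reduces to computing $T(H_k)$ for spherical harmonics $H_k$ of each degree $k$.

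Next I would treat $T(H_k)$. For $k = 0$ this is just $T(1) = 2\pi^{M/2}/\Gamma(M/2)$, given by the fourth property. For $k \geq 1$ the claim is that $T(H_k) = 0$, and this is where the third (orthogonality) property together with invariance enters. The idea is that $1 \in \cH_0$, so the third property with $l = 0$ gives $T(\cH_k \cdot 1) = T(\cH_k) = 0$ for all $k \neq 0$ — provided one is careful that $\cH_k \cdot \cH_0 = \cH_k$ really does land in the space to which the orthogonality hypothesis applies, which it does since $\cH_0 = \mR$. Thus $T(H_k) = 0$ for $k \geq 1$ directly. One then assembles: $T(f) = \sum_{j,k}(-1)^j T(H_{j,k}) = \sum_j (-1)^j T(H_{j,0}) = \frac{2\pi^{M/2}}{\Gamma(M/2)} \sum_j (-1)^j c_{j}$, where $c_j$ is the scalar with $H_{j,0} = c_j$, i.e. $c_j$ is the "degree-$0$, power-$2j$" Fischer component of $f$.

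It remains to check that the Pizzetti functional (\ref{Pizzetti}) actually satisfies all four properties and produces exactly this value, so that the unique candidate is indeed realized. The fourth property is immediate by evaluating (\ref{Pizzetti}) at $f = 1$. For the third, if $H_k \in \cH_k$ and $H_l \in \cH_l$ with $k \neq l$, then $H_k H_l \in \cP_{k+l}$ and each $\Delta^p(H_k H_l)$ evaluated at $0$ vanishes unless $k + l = 2p$; moreover one must verify that even when $k+l$ is even the relevant derivative at $0$ is zero — this follows because $\Delta^p(x^{0}\cH_k\cH_l)$ with both factors harmonic forces, via the Fischer decomposition of the product and relation (\ref{relationslaplace}), the constant term to vanish when $k \neq l$ (the product of two distinct-degree harmonics has no $x^{2p}\cH_0$ component). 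For the first property one uses (\ref{relationslaplace}) with $R_k$ harmonic to get $\Delta^{k+1}(x^2 H_k) = $ (a multiple of) $\Delta^k H_k$ shifted, and a direct manipulation of the coefficients $(-1)^k \tfrac{2\pi^{M/2}}{2^{2k}k!\Gamma(k+M/2)}$ shows the telescoping identity $T(x^2 f) = -T(f)$; this is the one genuinely computational point, and the coefficient bookkeeping (using $\Gamma(k+M/2+1) = (k+M/2)\Gamma(k+M/2)$) is where care is needed. The second property, $SO(m)\times Sp(2n)$-invariance, follows because $\Delta$ commutes with this group action and evaluation at $0$ is invariant.

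The main obstacle I anticipate is the uniqueness half done rigorously: one must be sure the Fischer decomposition is available in the exact form needed (it is, by the hypothesis $M \not\in -2\mN$ and Lemma \ref{scalFischer}) and that the orthogonality property is being applied to genuine pairs of distinct-degree harmonics — in particular handling products $H_k H_l$ where $k+l$ is even requires knowing that such a product, though a polynomial of even degree, contributes nothing to $T$, which again comes back to the first property plus the fact that its Fischer expansion contains no pure constant piece. The verification that Pizzetti satisfies property one, i.e. the coefficient identity via (\ref{relationslaplace}), is routine but must be carried out explicitly.
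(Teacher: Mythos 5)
Your uniqueness argument is correct, and it takes a genuinely different (and more elementary) route than the paper's own machinery: the paper does not prove Theorem \ref{uniciteitgeval} at all but quotes it from \cite{DBE1}, and the related technique it develops in section \ref{nietstand} --- which by remark \ref{fkto} is also valid for $M\not\in-2\mN$ --- first uses the $SO(m)\times Sp(2n)$-invariance together with Schur's lemma to show that the space $I_{m|2n}$ of functionals satisfying the first two properties is $(n+1)$-dimensional (lemma \ref{dimensiespace}), and then eliminates all but the Pizzetti functional by testing the orthogonality condition against the special harmonics $f_{k,t+1,0}H^b_{t+1}$ (lemmas \ref{nieuwebasisSS} and \ref{ffnot0}). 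You instead use the Fischer decomposition (\ref{superFischer}) --- available exactly because $M\not\in-2\mN$ --- together with the first property and only the $l=0$ instance of orthogonality; remarkably, you never need the invariance property. What the paper's heavier approach buys is that it survives when the Fischer decomposition fails, i.e. the case $M\in-2\mN$ of problem \textbf{P2}; your approach is shorter but tied to $M\not\in-2\mN$.

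One step of your existence half is not a proof as written: the verification that the Pizzetti functional satisfies the third property for general $k\neq l$. Asserting that ``the product of two distinct-degree harmonics has no $x^{2p}\cH_0$ component'' is just a reformulation of $\Delta^{p}(H_kH_l)(0)=0$, i.e. of the claim itself; relation (\ref{relationslaplace}) alone does not deliver it. The standard argument is the Green-type identity of Proposition \ref{Green2} (equivalently the computation in \cite{DBS5}), which gives $(l-k)\int_{SS}H_kH_l=0$. Fortunately, the full-strength orthogonality is not needed for this theorem: your uniqueness step only uses orthogonality against constants, and for the identification with Pizzetti you only need its values on the Fischer pieces, namely $\int_{SS}x^{2j}H_k=0$ for $k\geq 1$ (immediate, since $\Delta^{p}(x^{2j}H_k)$ is by (\ref{relationslaplace}) a multiple of a power of $x^2$ times $H_k$, and $H_k(0)=0$, with all other terms vanishing at $0$ by homogeneity) and $\int_{SS}x^{2j}=(-1)^j\,2\pi^{M/2}/\Gamma(M/2)$, which is exactly your telescoping computation. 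If you restructure the second half this way the circular step disappears; if you insist on verifying the third property in full generality, you should invoke or reprove Proposition \ref{Green2} rather than appeal to the Fischer decomposition of the product.
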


We will prove that this unicity also holds in case $M\in-2\mN$ in section \ref{nietstand}.

If $R_{k} \in \cP_{k}$, we have the following connection between integration over the supersphere and integration over the entire superspace
\begin{equation}
\label{RmnSS}
\int_{\mR^{m|2n}}R_k\exp(x^2)=\frac{1}{2}\Gamma (\frac{k+M}{2})\int_{SS}R_k=\int_{0}^\infty v^{k+M-1}\exp(-v^2)dv\int_{SS}R_k.
\end{equation}
The last expression only holds in case the integral is finite $(k+M>0)$. When $M\in-2\mN$, the gamma function in equation (\ref{RmnSS}) can become infinite, but this is compensated by the gamma function in equation (\ref{Pizzetti}). 

We repeat some important facts about spherical harmonics in superspace when $m\not=0$. The proofs can be found in \cite{DBE1}.

\begin{lemma}
If $q < n$ and $k < n-q+1$, there exists a homogeneous polynomial $f_{k,p,q}=f_{k,p,q}(\ux^2,\uxb^2)$ of total degree $k$, unique up to a multiplicative constant, such that $f_{k,p,q} \cH_p^b \otimes \cH_q^f \neq 0$ and
\[
\Delta (f_{k,p,q} \cH_p^b \otimes \cH_q^f) = 0.
\]
The explicit form of this polynomial is given by
\[
 f_{k,p,q}=\sum_{s=0}^k \binom{k}{s}\frac{(n-q-s)!}{\Gamma (\frac{m}{2}+p+k-s)} \ux^{2k-2s}\uxb^{2s}.
\]
\label{polythm}
\end{lemma}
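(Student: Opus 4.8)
The plan is to look for the harmonic polynomial in the form $f_{k,p,q}\,\cH_p^b\otimes\cH_q^f$ with $f_{k,p,q}=\sum_{s=0}^k c_s\,\ux^{2k-2s}\uxb^{2s}$ an unknown combination of the two basic radial invariants, and to determine the coefficients $c_s$ by forcing the super-Laplacian to vanish. Since $\Delta=\Delta_b+\Delta_f$, and $\cH_p^b$, $\cH_q^f$ are annihilated by $\Delta_b$, $\Delta_f$ respectively, I would apply the product-rule form of $\Delta$ to each term $\ux^{2k-2s}\uxb^{2s}\,\cH_p^b\otimes\cH_q^f$. The bosonic piece acts only on $\ux^{2k-2s}\cH_p^b$ and the fermionic piece only on $\uxb^{2s}\cH_q^f$; for the bosonic factor one uses the classical identity $\Delta_b(\ux^{2t}\cH_p^b)=2t(2p+m+2t-2)\ux^{2t-2}\cH_p^b$ (the $n=0$, scalar case of \eqref{relationslaplace}), and for the fermionic factor the analogous identity $\Delta_f(\uxb^{2s}\cH_q^f)=-2s(-2q+(-2n)+2s-2)\uxb^{2s-2}\cH_q^f$, which is again \eqref{relationslaplace} specialized to $m=0$ (so that $M=-2n$ there). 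Collecting the coefficient of $\ux^{2k-2s}\uxb^{2s-2}\cH_p^b\otimes\cH_q^f$ in $\Delta(f_{k,p,q}\,\cH_p^b\otimes\cH_q^f)$ then yields a two-term recurrence relating $c_{s-1}$ (via the bosonic contribution) and $c_s$ (via the fermionic contribution).

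Next I would solve this recurrence explicitly. Writing it out, the coefficient of $c_{s-1}$ comes from $\Delta_b$ acting on $\ux^{2(k-s+1)}$ and equals $2(k-s+1)(2p+m+2(k-s+1)-2)=2(k-s+1)(m+2p+2k-2s)$, while the coefficient of $c_s$ comes from $\Delta_f$ acting on $\uxb^{2s}$ and equals $-2s(2n-2q+2s-2)\cdot(\text{something})$; balancing the two gives $c_s$ as a ratio involving the factors $(k-s+1)$, $(m/2+p+k-s)$ in the numerator/denominator against $s$ and $(n-q-s+1)$. Telescoping from $c_0$ one obtains, up to an overall constant,
\[
c_s=\binom{k}{s}\frac{(n-q-s)!}{\Gamma\!\left(\tfrac m2+p+k-s\right)},
\]
which is exactly the claimed formula. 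The hypotheses $q<n$ and $k<n-q+1$ (equivalently $k\le n-q$) are precisely what guarantee that $n-q-s\ge 0$ for all $0\le s\le k$, so that every factorial $(n-q-s)!$ is well-defined and nonzero, no fermionic factor $\uxb^{2s}$ vanishes identically for $s\le k\le n$, and the recurrence never degenerates; this is where I expect the main bookkeeping care to be needed — matching the range conditions to non-degeneracy of the recurrence and to $f_{k,p,q}\,\cH_p^b\otimes\cH_q^f\ne 0$.

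Finally, uniqueness up to a multiplicative constant follows from the structure of the recurrence: once one insists that $\Delta(f\,\cH_p^b\otimes\cH_q^f)=0$ with $f$ of the prescribed bidegree pattern, each $c_s$ is forced in terms of $c_{s-1}$, so the solution space is one-dimensional; and one checks the leading coefficient $c_0=\tfrac{(n-q)!}{\Gamma(m/2+p+k)}$ is nonzero under the stated hypotheses, so the polynomial is genuinely nonzero. The only real obstacle is verifying the fermionic analogue of \eqref{relationslaplace} with the correct signs and the $-2n$ playing the role of the dimension; once that identity is in hand, the rest is an explicit (if slightly tedious) telescoping of a first-order recurrence.
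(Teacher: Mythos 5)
Your proposal follows essentially the same route as the source the paper itself relies on (the paper gives no proof here, citing \cite{DBE1}): make the ansatz $f_{k,p,q}=\sum_{s=0}^k c_s\,\ux^{2k-2s}\uxb^{2s}$, apply $\Delta=\Delta_b+\Delta_f$ using the purely bosonic and purely fermionic specializations of equation (\ref{relationslaplace}), and solve the resulting first-order recurrence, which also gives uniqueness since each relation forces $c_s$ from $c_{s-1}$ with a nonvanishing coefficient precisely because $k\le n-q$. One correction: the fermionic identity as you wrote it has the wrong signs; specializing (\ref{relationslaplace}) to $m=0$, $M=-2n$ gives, for $H^f_q\in\cH^f_q$, $\Delta_f(\uxb^{2s}H^f_q)=2s(2q-2n+2s-2)\,\uxb^{2s-2}H^f_q=-4s(n-q-s+1)\,\uxb^{2s-2}H^f_q$, not $-2s(-2q-2n+2s-2)$ nor $-2s(2n-2q+2s-2)$. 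Fortunately the ratio you actually telescoped, $c_s/c_{s-1}=(k-s+1)(\tfrac m2+p+k-s)\big/\bigl(s(n-q-s+1)\bigr)$, is the one produced by the correct identity, so the explicit coefficients, the nonvanishing of $c_0$, and the uniqueness statement all go through as you describe.
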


Using these polynomials we can obtain a full decomposition of $\cP$ under the action of $SO(m)\times Sp(2n)$. This follows from lemma \ref{scalFischer} and the following theorem.

\begin{theorem}[Decomposition of $\cH_k$]
Under the action of $SO(m) \times Sp(2n)$ the space $\cH_k$ decomposes as
\label{decompintoirreps}
\begin{equation}
\cH_{k} = \bigoplus_{i=0}^{\min(n,k)} \cH^b_{k-i} \otimes \cH^f_{i} \;\; \oplus \;\; \bigoplus_{j=0}^{\min(n, k-1)-1} \bigoplus_{l=1}^{\min(n-j,\lfloor \frac{k-j}{2} \rfloor)} f_{l,k-2l-j,j} \cH^b_{k-2l-j} \otimes \cH^f_{j},
\end{equation}
with $f_{l,k-2l-j,j}$ the polynomials determined in lemma \ref{polythm}.
\end{theorem}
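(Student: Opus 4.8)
The plan is to regard $\cH_k$ as a module over the reductive group $G=SO(m)\times Sp(2n)$, split it into isotypic components, and match the multiplicity of each irreducible against the right-hand side. Since $\cH_k\subset\cP\cong\cP^b\otimes\Lambda_{2n}$, where $\cP^b$ denotes the polynomials in $x_1,\dots,x_m$, the only irreducibles that can occur are $V_{p,q}:=\cH^b_p\otimes\cH^f_q$ with $p\geq 0$, $0\leq q\leq n$ and $\cH^b_p\neq 0$. Accordingly, three things must be established: (a) every summand on the right is a $G$-submodule of $\cH_k$ isomorphic to some $V_{p,q}$; (b) no type $V_{p,q}$ occurs twice in the list and the listed sum is direct; (c) $\dim\mathrm{Hom}_G(V_{p,q},\cH_k)$ equals the multiplicity of $V_{p,q}$ in the list, for every $(p,q)$.

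For (a) I would argue as follows. The first family is harmonic because $\Delta=\Delta_b+\Delta_f$ and $\Delta(\cH^b_{k-i}\otimes\cH^f_i)=(\Delta_b\cH^b_{k-i})\otimes\cH^f_i+\cH^b_{k-i}\otimes(\Delta_f\cH^f_i)=0$. The second family is harmonic by Lemma~\ref{polythm}, applied with its parameter $q$ equal to $j$, its $p$ equal to $k-2l-j$, and its first index (the total degree of $f$ in $\ux^2,\uxb^2$) equal to $l$: one checks that the ranges $0\leq j\leq\min(n,k-1)-1$ and $1\leq l\leq\min(n-j,\lfloor(k-j)/2\rfloor)$ imposed in the statement are exactly those forcing $j<n$, $l\leq n-j$ and $k-2l-j\geq 0$, i.e.\ the hypotheses of that lemma, which also yields nonvanishing. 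Finally, each summand is the image of the irreducible $V_{p,q}$ under multiplication by a $G$-invariant element of $\cP^b\otimes\Lambda_{2n}$ (the constant $1$, respectively the polynomial $f_{l,k-2l-j,j}$), hence under a $G$-equivariant map with nonzero image; by Schur's lemma such a map is injective, so the summand is isomorphic to $V_{p,q}$.

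For (b): in the first family $p+q=k$, in the second $p+q=k-2l<k$, and within the second family $l=(k-p-q)/2$ is fixed by $(p,q)$; hence each type occurs at most once. Assuming $m\geq 2$ the $V_{p,q}$ with distinct $(p,q)$ are pairwise non-isomorphic, so each summand sits inside the isotypic component of $\cP_k$ of its own type, and projecting any relation between the summands onto those components gives directness. For (c) I would invoke Lemma~\ref{scalFischer}, which gives $\cP_k=\cH_k\oplus x^2\cP_{k-2}$ and hence $\dim\mathrm{Hom}_G(V_{p,q},\cH_k)=\dim\mathrm{Hom}_G(V_{p,q},\cP_k)-\dim\mathrm{Hom}_G(V_{p,q},\cP_{k-2})$. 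Using $\cP_k=\bigoplus_{a+b=k}\cP^b_a\otimes\Lambda_{2n}^{(b)}$ (with $\Lambda_{2n}^{(b)}$ the degree-$b$ component), the classical bosonic Fischer decomposition of $\cP^b_a$, the fermionic one~(\ref{fermFischer}) of $\Lambda_{2n}^{(b)}$, and the fact that $\uxb^{2t}\cH^f_q\neq 0$ precisely for $0\leq t\leq n-q$, one computes (writing $2N:=k-p-q$) that $\dim\mathrm{Hom}_G(V_{p,q},\cP_k)$ equals $\min(N,n-q)+1$ when $q\leq n$, $N\in\mZ_{\geq 0}$ and $\cH^b_p\neq 0$, and $0$ otherwise. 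Subtracting the same expression with $N$ replaced by $N-1$ yields multiplicity $1$ exactly when $q\leq n$ and $0\leq N\leq n-q$ (and $\cH^b_p\neq 0$), and $0$ otherwise --- which is precisely the collection of types on the right, with $N=0$ giving the first family and $N=l\geq 1$ the second. It follows that the direct sum built in (a)--(b) exhausts $\cH_k$.

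The step I expect to carry the real content is the fermionic truncation $\uxb^{2t}\cH^f_q\neq 0\iff 0\leq t\leq n-q$, i.e.\ the symplectic ``ceiling'' already reflected in the cut-off of~(\ref{fermFischer}); this is what both caps $l$ at $n-j$ in the second family and makes every higher multiplicity vanish. After that the remaining work is index bookkeeping with $N$, $l$, $j$, the floor and the minima, together with a short separate treatment of $m=1$ (where $\cH^b_p=0$ for $p\geq 2$ and $\cH^b_0\cong\cH^b_1$, so both the statement and the count degenerate). Note also that, because step (c) uses Lemma~\ref{scalFischer}, the argument as written applies for $M\notin-2\mN$; the excluded superdimensions require the corresponding variant of the Fischer decomposition.
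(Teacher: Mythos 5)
First, a remark on the comparison you ask for: this paper does not prove Theorem~\ref{decompintoirreps} at all --- it is stated in the preliminaries with the proof delegated to \cite{DBE1} --- so your argument can only be measured against that reference, where the result is obtained in a similar spirit (explicit construction of the polynomials of Lemma~\ref{polythm} plus a counting argument). Taken on its own terms, your multiplicity-count proof is sound where it applies: steps (a) and (b) are fine (in (a), injectivity of multiplication by $f_{l,k-2l-j,j}$ on $\cH^b_{k-2l-j}\otimes\cH^f_j$ can also be seen directly, without the slightly delicate real-form Schur argument, by expanding $f_{l,k-2l-j,j}$ in powers of $\uxb^2$ and using that the spaces $\uxb^{2s}\cH^f_j$, $s\le n-j$, are linearly independent while multiplication by $\ux^{2l-2s}$ is injective on bosonic polynomials), and the computation in (c), giving multiplicity $\min(N,n-q)+1$ in $\cP_k$ and hence $0$ or $1$ in $\cH_k$ exactly on the stated index set, is correct.

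The genuine shortfall is the restriction you flag at the end. The theorem carries no hypothesis on $M$, and this is not a cosmetic issue for the present paper: section~\ref{nietstand} uses the decomposition (through Theorem~\ref{integorth} and the polynomials $f_{k,t+1,0}$) precisely when $M=-2t\in-2\mN$, where Lemma~\ref{scalFischer} is unavailable, so as written your proof misses the case that is actually needed. The repair is cheap and you should make it explicit: you never need the full Fischer decomposition, only the identity $\dim\mathrm{Hom}_G(\cH^b_p\otimes\cH^f_q,\cH_k)=\dim\mathrm{Hom}_G(\cH^b_p\otimes\cH^f_q,\cP_k)-\dim\mathrm{Hom}_G(\cH^b_p\otimes\cH^f_q,\cP_{k-2})$. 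This holds for every superdimension, because $\Delta=\px^2$ is surjective on super polynomials (the surjectivity of $\px$ is quoted in section~\ref{preliminaries} from \cite{DBS2} without any restriction on $M$), it is $G$-equivariant, and the finite-dimensional modules $\cP_k$ are completely reducible; hence on each isotypic component $\Delta$ maps $\cP_k$ onto the corresponding component of $\cP_{k-2}$, and the multiplicity in $\cH_k=\ker(\Delta|_{\cP_k})$ is the difference of multiplicities. With this substitution your bookkeeping in (c) goes through verbatim for all $M$ with $m\neq 0$ (the degenerate case $m=1$, which you mention, needs the same separate comment in any treatment), which is what the statement asserts and what section~\ref{nietstand} requires.
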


The orthogonality condition for integration over the supersphere can now be made even stronger. 

\begin{theorem}
\label{integorth}
One has that 
\[
f_{i,p,q} \cH^b_{p} \otimes \cH^f_{q} \quad  \bot \quad f_{j,r,s} \cH^b_{r} \otimes \cH^f_{s}
\]
with respect to the Pizzetti integral if and only if $(i,p,q) \neq (j,r,s)$.
\end{theorem}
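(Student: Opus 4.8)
My plan is to read the statement off the $SO(m)\times Sp(2n)$-decomposition of $\cH_k$ together with Schur's lemma, the only non-formal ingredient being a non-degeneracy property of the Pizzetti functional which I would deduce from formula (\ref{RmnSS}) and Lemma \ref{basicprop}. First note that, by Lemma \ref{polythm}, each non-zero space $V_{i,p,q}:=f_{i,p,q}\cH^b_p\otimes\cH^f_q$ is contained in $\cH_{2i+p+q}$, and by Theorem \ref{decompintoirreps} it is exactly one of the irreducible summands in the decomposition of $\cH_{2i+p+q}$. A quick inspection of the index ranges in that theorem shows that the isomorphism types $\cH^b_a\otimes\cH^f_b$ occurring there are pairwise distinct, so the decomposition is multiplicity-free; moreover each $\cH^b_a\otimes\cH^f_b$ is self-dual. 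Put $B(F,G)=\int_{SS}FG$; by the invariance property in Theorem \ref{uniciteitgeval}, the restriction of $B$ to any $\cH_k$ is an $SO(m)\times Sp(2n)$-invariant bilinear form, and it records all products $\int_{SS}FG$, $\int_{SS}GF$ defining the relation $\perp$.

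For the implication "$(i,p,q)\neq(j,r,s)\Rightarrow$ orthogonality": if $2i+p+q\neq 2j+r+s$, the two spaces lie in spherical-harmonic spaces of different degree, and orthogonality is the third property of Theorem \ref{uniciteitgeval}. If instead $2i+p+q=2j+r+s=k$, then necessarily $(p,q)\neq(r,s)$, so $V_{i,p,q}$ and $V_{j,r,s}$ are non-isomorphic irreducible submodules of $\cH_k$; being self-dual, Schur's lemma applied to the invariant form $B|_{\cH_k}$ forces $B$ to vanish between them (in either order).

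For the converse I would prove the stronger statement that $B$ restricted to $V_{i,p,q}\times V_{i,p,q}$ is non-degenerate. By the same Schur argument $B|_{\cH_k}$ is block-diagonal along the (multiplicity-free) decomposition, and on each irreducible self-dual block it is either zero or non-degenerate; hence it suffices to show $B$ is non-degenerate on all of $\cH_k$. Using (\ref{RmnSS}), for $F,G\in\cH_k$ one has $\int_{\mR^{m|2n}}FG\exp(x^2)=\tfrac12\Gamma(k+\tfrac{M}{2})\,B(F,G)$, so — away from the even values of $M$ at which the $\Gamma$-factors of the Pizzetti functional degenerate — it is enough that the Gaussian bilinear form $(F,G)\mapsto\int_{\mR^{m|2n}}FG\exp(x^2)$ be non-degenerate on $\cH_k$. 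I would get this in two steps. \textbf{(a)} This form is non-degenerate on all of $\cP$: writing $\exp(x^2)=\exp(\uxb^2)\exp(\ux^2)$ and using that $\exp(\uxb^2)$ is a unit of $\Lambda_{2n}$ (with inverse $\exp(-\uxb^2)$), one may replace $G$ by $G\exp(-\uxb^2)$ and is reduced to showing that $\tilde G\mapsto\int_{\mR^m}\big(\int_B F\tilde G\big)\exp(\ux^2)\,dV(\ux)$ vanishes identically only when $F=0$; expanding $F$ and $\tilde G$ in monomials in the anticommuting variables, this follows from the non-degeneracy of the Berezin integral (Lemma \ref{basicprop}), applied coefficient-wise, together with the positive-definiteness of the Gaussian pairing $(h_1,h_2)\mapsto\int_{\mR^m}h_1h_2\exp(\ux^2)\,dV(\ux)$ on $\mR[x_1,\dots,x_m]$. \textbf{(b)} Transfer to $\cH_k$: given $0\neq F\in\cH_k$, pick $G'\in\cP$ with $\int_{\mR^{m|2n}}FG'\exp(x^2)\neq0$, Fischer-decompose (Lemma \ref{scalFischer}) $G'=\sum_{j,l}x^{2j}G'_{j,l}$ with $G'_{j,l}\in\cH_l$, and combine (\ref{RmnSS}) with the properties $T(x^2f)=-T(f)$ and $T(\cH_k\cH_l)=0$ for $k\neq l$ to see that only the summands with $l=k$ can contribute; hence $B(F,G'_{j,k})\neq0$ for some $j$, with $G'_{j,k}\in\cH_k$.

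The main obstacle is precisely this non-degeneracy of $B$ on each $\cH_k$: the defining properties of the Pizzetti functional in Theorem \ref{uniciteitgeval} do not by themselves yield it, and one genuinely needs the positivity of the bosonic Gaussian integral and the non-degeneracy of the Berezin integral as external inputs. One should also note that the theorem is to be read with $M$ outside the exceptional set where $\int_{SS}1=2\pi^{M/2}/\Gamma(M/2)$ (or some $\Gamma(k+M/2)$) vanishes, since, e.g., for $M\in-2\mN$ one has $\int_{SS}1=0$ and then $1=f_{0,0,0}$ is orthogonal to itself.
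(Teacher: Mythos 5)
The paper itself contains no proof of Theorem \ref{integorth}: it is quoted from \cite{DBE1}, where the orthogonality relations are obtained from the explicit form of the polynomials $f_{k,p,q}$ and the Pizzetti formula. Your argument is therefore a genuinely different, structural route, and in my judgement it is essentially correct for $M\notin-2\mN$: the ``different triples $\Rightarrow$ orthogonal'' half follows, as you say, from invariance of the Pizzetti functional (Theorem \ref{uniciteitgeval}), the multiplicity-free decomposition of Theorem \ref{decompintoirreps} and Schur's lemma, while the converse is reduced to non-degeneracy of the Gaussian--Berezin pairing on $\cP$ (Lemma \ref{basicprop} plus positivity of the bosonic Gaussian) transferred to each $\cH_k$ through equation (\ref{RmnSS}) and the Fischer decomposition. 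What this buys, compared with the computational route of \cite{DBE1}, is that the Schur half uses neither the Fischer decomposition nor any $\Gamma$-factors, so it is valid for every superdimension with $m\neq 0$ --- and that is precisely the half the present paper later needs in section \ref{nietstand} for $M\in-2\mN$ (the factor $\delta_{jk}$ in Lemma \ref{ffnot0}); the explicit computation, on the other hand, also yields the constants.

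A few points should be tightened in a full write-up. (i) Multiplicity-freeness and self-duality of the summands $\cH^b_a\otimes\cH^f_b$ are asserted, not proved; one should note that the $\cH^b_a$ (resp.\ $\cH^f_b$) are pairwise non-isomorphic and that finite-dimensional irreducible $Sp(2n)$-modules are self-dual, so that an invariant pairing between non-isomorphic blocks vanishes. (ii) In your transfer step, for negative odd $M$ the factor $\Gamma\bigl(\tfrac{k+2j+l+M}{2}\bigr)$ appearing for the cross terms $l\neq k$ can itself be singular; those terms should be discarded directly by parity (odd total degree makes both the Gaussian integral and $\int_{SS}$ vanish) rather than through the identity (\ref{RmnSS}). (iii) Your closing caveat is correct and important: for $M\in-2\mN$ the ``only if'' direction genuinely fails already for the constants, since $\int_{SS}1=2\pi^{M/2}/\Gamma(M/2)=0$, so the theorem must be read with $M\notin-2\mN$; this is consistent with the paper, which for $M=-2t$ only invokes the orthogonality direction and re-establishes the needed non-vanishing separately in Lemma \ref{ffnot0}.
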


In \cite{DBS6} the fundamental solution for the super Dirac operator was calculated,
\begin{eqnarray}
\label{fundopl}
\nu_1^{m|2n} &=& \pi^n \sum_{k=0}^{n-1} 2 \frac{4^k k!}{(n-k-1)!} \nu_{2k+2}^{m|0} \uxb^{2n-2k-1} + \pi^n\sum_{k=0}^n \frac{4^k k! }{(n-k)!} \nu_{2k+1}^{m|0} \uxb^{2n-2k},
\end{eqnarray}
where $ \nu_{j}^{m|0}$ is a fundamental solution of $\upx^j$. This fundamental solution satisfies
\begin{eqnarray*}
\px \nu_1^{m|2n}(x-y)&=&\delta(x-y)=\delta(\ux-\uy)\frac{\pi^n}{n!}(\uxb-\uyb)^{2n}.
\end{eqnarray*}

In \cite{DBS9} the super Fourier transform on $\cS(\mR^m)_{m|2n}$ was introduced as
\begin{eqnarray}
\label{Four}
\cF^{\pm}_{m|2n}(f(x))(y)&=&\int_{\mR^{m|2n},x}\exp(\mp i\langle x,y\rangle)f(x),
\end{eqnarray}
with
\begin{equation}
\label{inprod}
\langle x,y\rangle=\langle\ux,\uy\rangle+\langle \uxb,\uyb\rangle=-\sum_{i=1}^mx_iy_i+\frac{1}{2}\sum_{j=1}^n({x\grave{}}_{2j-1}{y\grave{}}_{2j}-{x\grave{}}_{2j}{y\grave{}}_{2j-1}),
\end{equation}
yielding an $SO(m)\times Sp(2n)$-invariant generalization of the purely bosonic Fourier transform.

We already know from \cite{DBS6} that monogenic functions in superspace are infinitely differentiable. This follows essentially from the fact that all the bosonic parts are polyharmonic. In \cite{DBS2}, the surjectivity of the Dirac operator on the set of super polynomials was proven. Now we generalize this to the set of infinitely differentiable functions defined over an open subset of $\mR^m$. In the proof we will extensively use the decomposition (\ref{fulferFischer}) and that technique will be important for the sequel.

\begin{lemma}
\label{vglpx}
For every $h\in C^{\infty}(\Omega )_{m|2n}\otimes \cC$, with $\Omega$ a open subset of $\mR^m$, there exists a $g\in C^{\infty}(\Omega )_{m|2n}\otimes \cC$, such that
\begin{eqnarray*}
\px g&=& h
\end{eqnarray*}
holds in $\Omega $.
\end{lemma}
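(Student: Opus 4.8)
The plan is to reduce the problem, via the fermionic Fischer decomposition \eqref{fulferFischer}, to solving a finite hierarchy of purely bosonic polyharmonic-type equations, each of which can be solved using known results from classical analysis on $\Omega$. First I would write $h\in C^{\infty}(\Omega)_{m|2n}\otimes\cC$ in the form
\[
h=\sum_{k=0}^{n}\sum_{j=0}^{2n-2k}\uxb^{\,j}h_{j,k},
\]
where each $h_{j,k}\in C^{\infty}(\Omega)\otimes\cM_k^f$ is a bosonic function tensored with a fixed fermionic spherical monogenic. I would look for $g$ in a similar shape, $g=\sum_{k,j}\uxb^{\,j}g_{j,k}$, and compute the action of $\px=\upxb-\upx$ on such an expression. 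The key algebraic fact to establish is how $\upxb$ acts on $\uxb^{\,j}\cM_k^f$: this raises or lowers the power of $\uxb$ and, crucially, when it lowers it, it multiplies by a \emph{nonzero} numerical factor precisely because $\uxb^2$ is the fermionic analogue of $-r^2$ and the relevant coefficients in a purely fermionic (finite-dimensional) setting do not vanish as long as one stays within the range of exponents allowed by \eqref{fulferFischer}. Since the fermionic part is nilpotent, this turns $\px g=h$ into a finite, triangular (in the $\uxb$-degree) system.

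Concretely, matching coefficients of $\uxb^{\,j}$ in each isotypic component $\cM_k^f$ yields equations of the form (schematically)
\[
-\upx g_{j,k}+c_{j,k}\,g_{j+1,k}=h_{j,k},
\]
which one solves by downward induction on $j$ starting from the top power $j=2n-2k$. At the very bottom one is left, in each $\cM_k^f$-component, with a single purely bosonic equation $\upx^{\,N}g=\tilde h$ for some $N$ (after eliminating the intermediate terms), i.e. a poly-Dirac equation over $\Omega\subset\mR^m$; writing $\upx^2=-\Delta_b$, this is a polyharmonic equation, and its solvability in $C^{\infty}(\Omega)$ is classical (one can invoke surjectivity of the Laplacian on $C^{\infty}$ of any open set, e.g. via the fact that $\Delta_b$ is hypoelliptic with a fundamental solution and $C^{\infty}(\Omega)$ is the target of the Malgrange–Ehrenpreis-type existence theorem, or simply cite the purely bosonic case). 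Smoothness of $g$ is automatic since each $g_{j,k}$ is built from smooth bosonic solutions and polynomial (in $\uxb$) prefactors.

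The main obstacle I anticipate is bookkeeping the fermionic combinatorics: one must verify that the coefficients $c_{j,k}$ arising from $\upxb(\uxb^{\,j}\cM_k^f)$ are nonzero throughout the relevant range, so that the triangular system can actually be back-solved, and one must handle the two cases in Lemma~\ref{cliffordfischerdecomp} (the ranges $k\le n$ versus the ``dual'' range near $2n-2k$) correctly — the operator $\upxb$ connects $\uxb^{\,j}\cM_k^f$ to $\uxb^{\,j\pm1}\cM_k^f$, and at the edges $j=0$ and $j=2n-2k$ the behaviour is special. Once that linear-algebra-over-a-Grassmann-algebra step is pinned down, the analytic content is entirely classical and reduces to the known bosonic statement; indeed the structure of this argument is exactly the one the paper flags as ``important for the sequel,'' so I would present the fermionic reduction in full detail and treat the bosonic polyharmonic solvability as a citation.
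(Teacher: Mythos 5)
Your proposal is essentially the paper's own proof: expand $h$ and the unknown $g$ along the fermionic Fischer decomposition (\ref{fulferFischer}) into components $\uxb^{\,j}\cM_k^f$ with smooth bosonic coefficients (with the monogenics normalized so the $\upxb$-lowering constants are absorbed, which is exactly the nonvanishing-coefficient point you flag), obtaining a finite triangular system solved by downward recursion in $j$ via the classical surjectivity of the bosonic Dirac operator on $C^{\infty}(\Omega)\otimes\mR_{0,m}$ from \cite{MR697564}. Whether you solve the first-order equations one by one or eliminate to a single poly-Dirac equation is an immaterial variation (and note only a trivial sign slip: with the paper's convention $\upx^2=\Delta_b$).
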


\begin{proof}
This lemma holds in purely bosonic Clifford analysis, see \cite{MR697564} page 161. Now, using the Fischer decomposition (\ref{fulferFischer}), we consider a basis of $\Lambda_{2n}\otimes \cW_n$ given by $\{\uxb^jM_k^{l}\}$  with $M_k^{l}$ a basis for the fermionic spherical monogenics of degree $k$ with $0\le k\le n$ and $j+2k\le 2n$. We normalize $M_k^{l,j}=C_{j,k}M_k^l$, for convenience, so that $\upxb \uxb^j M_k^{l,j}=\uxb^{j-1}M_k^{l,j-1}$. This means we can expand a general $g\in C^{\infty}(\Omega )_{m|2n}\otimes \cC$ as $g=\sum_{k=0}^{n}\sum_{j=0}^{2n-2k}\sum_{l}g_{j,k,l}\uxb^jM_k^{l,j}$, with $g_{j,k,l}\in C^{\infty}(\Omega )\otimes \mR_{0,m}$. First we calculate $\px g$, yielding
\begin{eqnarray*}
\px g&=&-\sum_{k=0}^{n}\sum_{j=0}^{2n-2k}\sum_l\upx g_{j,k,l}\uxb^jM_k^{l,j}+\sum_{k=0}^{n}\sum_{j=1}^{2n-2k}\sum_{l}g_{j,k,l}\uxb^{j-1}M_k^{l,j-1}\\
&=&-\sum_{k=0}^{n}\sum_{j=0}^{2n-2k}\sum_l\upx g_{j,k,l}\uxb^jM_k^{l,j}+\sum_{k=0}^{n}\sum_{j=0}^{2n-2k-1}\sum_{l}g_{j+1,k,l}\uxb^{j}M_k^{l,j}\\
&=&-\sum_{k=0}^{n}\sum_l\upx g_{2n-2k,k,l}\uxb^{2n-2k}M_k^{l,2n-2k}+\sum_{k=0}^{n}\sum_{j=0}^{2n-2k-1}\sum_{l}(g_{j+1,k,l}-\upx g_{j,k,l})\uxb^{j}M_k^{l,j}.
\end{eqnarray*}
Similarly we expand $h$ as $h=\sum_{k=0}^{n}\sum_{j=0}^{2n-2k}\sum_{l}h_{j,k,l}\uxb^jkM_k^{l,j}$, with $h_{j,k,l}\in C^{\infty}(\Omega )\otimes \mR_{0,m}$. We see that finding a $g$ for which $\px g=h$ is equivalent with finding $\{ g_{j,k,l}\}$ for which
\begin{eqnarray*}
\upx g_{2n-2k,k,l}&=& -h_{2n-2k,k,l}
\end{eqnarray*}
and
\begin{eqnarray*}
\upx g_{j,k,l}&=& -h_{j,k,l}+g_{j+1,k,l},
\end{eqnarray*}
for $j< 2n-2k$. This can be done by using the purely bosonic theorem in \cite{MR697564} and by induction on $k$.
\end{proof}

\section{Bosonic, fermionic and super Cauchy formulas}
\label{bosfercauch}

An interesting feature of Clifford analysis is that it allows for the construction of several Cauchy-type formulae in higher dimensions (see e.g. \cite{MR1026856, MR1249888, MR1012510}). We start with repeating some well-known facts about bosonic Cauchy formulas. This is necessary to obtain the properties that the fermionic Cauchy formulas will have to satisfy.

\subsection{Bosonic Cauchy formulas}

Starting from Stokes' theorem and using $d(fd\ux^{m-1}g)=(f\upx)d\ux^{m}g+fd\ux^{m}(\upx g)$ with $d\ux=\sum_{j=1}^me_jdx_j$, we find the bosonic Cauchy formula (see \cite{MR697564})
\begin{eqnarray*}
\label{genCauchybosA}
\int_{\partial \Sigma} fd\ux^{m-1}g&=&\int_{\Sigma}(f\upx)d\ux^mg+fd\ux^m(\upx g),
\end{eqnarray*}
where $\Sigma$ is a compact oriented $m$-dimensional submanifold of $\mR^{m}$ and $\partial \Sigma$ its boundary.

After multiplying with the pseudoscalar $e_1\ldots e_m$ we obtain the vector-valued surface element $\ds=\sum_{j=1..m}(-1)^je_jdx_1\ldots \widehat{dx_j}\ldots dx_m$ and the volume-element $dV(\ux)=dx_1\ldots dx_m$, yielding
\begin{eqnarray}
\label{genCauchybos}
\int_{\partial \Sigma} f\ds g&=&\int_{\Sigma}[(f\upx)g+f(\upx g)]dV(\ux).
\end{eqnarray}
This formula is what we call a dimensionally correct Cauchy formula, connecting the integral of functions over the boundary of a manifold with the integral of their derivatives over the entire manifold. It is moreover easy to see that this formula is a direct consequence of its one-dimensional analogue
\begin{eqnarray}
\label{1int}
\int_I f'&=&\int_{\partial I} f,
\end{eqnarray}
with $I$ an interval and $\partial I$ its boundary consisting of two points.

Suppose that the boundary of $\Sigma$ is determined by an equation $\nu(\ux)=0$ and moreover $\nu(\ux)>0$ if $\ux\in\mathring{\Sigma}$ and $\nu(\ux) <0$ if $\ux\in \mR^m\backslash \Sigma$. Then we can rewrite the Cauchy formula (\ref{genCauchybos}) as\begin{equation}
\label{boscauch}
\int_\Sigma [(f\upx)g+f(\upx g)]\, dV(\ux)=-\int_{\partial \Sigma}f(\upx\nu)g\, dS(\ux),
\end{equation}
with $dS$ the elementary surface measure. 

Even more general, if we consider a distribution $\alpha$ with compact support and if $f$ and $g$ are $C^\infty$ functions, then 
\begin{eqnarray*}
\sum_{j=1}^m\int_{\mR^m}\partial_{x_j}(fe_j\alpha g)\, dV(\ux)&=&0.
\end{eqnarray*}
This can be rewritten as
\begin{equation}
\label{boscauch2}
\int_{\mR^m} [(f\upx)\alpha g+f \alpha (\upx g)]\, dV(\ux)=-\int_{\mR^m}f\upx(\alpha) g\,dV(\ux),
\end{equation}
which is the most general form of a Cauchy formula in $\mR^{m}$.

The case $\alpha=H(\nu)$, with $H$ the Heaviside function and $\nu(\ux)$ as above corresponds to the Cauchy formula in equation (\ref{boscauch}). Equation (\ref{boscauch2}) is more general. For example, $\alpha$ could be a Dirac distribution so we would find derivatives of Dirac distributions in the formulas. Such a formula no longer fulfills the condition that we integrate a function $f$ over the boundary of the domain and integrate $\upx f$ over the entire domain. In the case of the Dirac distribution we would integrate twice over the boundary. For that reason we will call such formulas non dimensionally correct Cauchy formulas. They will be necessary to obtain Cauchy formulas related to the supersphere.

\subsection{Fermionic Cauchy formulas}

The Berezin integral (see \cite{MR732126}) of a function $g\in\mC\{1,{x\grave{}}_1\}=\Lambda_1$ of one fermionic variable ${x\grave{}}_1$ is defined by
\begin{equation*}
\pi\int_Bf=\partial_{{x\grave{}}_1}g=(\partial_{{x\grave{}}_1}g)(0).
\end{equation*}
This can formally be written in a similar form as the bosonic equation (\ref{1int})
\begin{eqnarray}
\label{defBer}
\int_0g'&=&\int_{\mR^{-1}}g.
\end{eqnarray}

The one dimensional interval $I$ is replaced by the zero dimensional point $\{ 0\}$ and the zero dimensional boundary $\partial I$ by the minus one dimensional $\mR^{-1}$. In this view $\partial \{0\}=\mR^{-1}$, as was already argued in \cite{MR1402921}. So the rewritten definition of Berezin integration (\ref{defBer}) is the starting point for obtaining a Cauchy formula, exactly as the one dimensional bosonic case (\ref{1int}). 

To obtain a general Cauchy formula in superspace, we take inspiration from the distributional form in equation (\ref{boscauch2}). Instead of distributions $\alpha$, we now take for $\alpha$ again an element of the Grassmann algebra, because the dual space of the finite dimensional vector space $\Lambda_{2n}$ is of course $\Lambda_{2n}$.

We then obtain the following fermionic Cauchy formula. Note that with $f\widehat{\alpha}\upxb$ we mean the fermionic Dirac operator acting from the left on $f\alpha$ but $\alpha$ is not derived. We cannot switch $\alpha$ and $\upxb$ from place because of the anticommuting variables.

\begin{lemma}[fermionic Cauchy formula]
\label{fercauch}
For $f$, $g\in\Lambda_{2n}\otimes\cW_{2n}$ and $\alpha \in\Lambda_{2n}$, the following holds
\begin{eqnarray*}
-\int_B(f\widehat{\alpha}\upxb) g+\int_B f\alpha(\upxb  g)&=&\int_B f(\alpha\upxb ) g.
\end{eqnarray*}
\end{lemma}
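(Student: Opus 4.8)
The plan is to prove this exactly as the bosonic formula (\ref{boscauch2}) was obtained, with ``the integral of a total derivative vanishes'' replaced by the analogous property of the Berezin integral. The engine I would use is: since $\int_B$ equals, up to a constant, $\partial_{{x \grave{}}_{2n}}\cdots\partial_{{x \grave{}}_{1}}$, and the operators $\partial_{{x \grave{}}_{k}}$ pairwise anticommute, square to zero, and commute with every Clifford and Weyl generator, one has $\int_B\partial_{{x \grave{}}_{k}}P=0$ for every $P\in\Lambda_{2n}\otimes\cC$ and every $k$. As $\int_B$ is $\cC$-bilinear and $\upxb$ is a left $\cW_{2n}$-linear combination of the operators $\partial_{{x \grave{}}_{k}}$, it follows that $\int_B(\upxb P)=0$ for every such $P$; this is the fermionic ``$\sum_j\int\partial_{x_j}(\cdot)=0$''.

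I would apply this with $P=f\alpha g$, reducing first (by linearity) to the case where $f$, $\alpha$, $g$ are homogeneous in $\Lambda_{2n}$. Each $\partial_{{x \grave{}}_{k}}$ is an odd derivation that commutes with $\cW_{2n}$, so the graded Leibniz rule splits $\upxb(f\alpha g)$ into three parts according to whether the derivative falls on $f$, on $\alpha$, or on $g$, each carrying the sign produced by the parities of the factors standing to its left. Regrouping these: the part where $f$ is differentiated is, by the very definition of the hat-notation, $(f\widehat{\alpha}\upxb)g$; the part where $g$ is differentiated reassembles into $f\alpha(\upxb g)$ --- using here that $\alpha\in\Lambda_{2n}$ commutes with $\cW_{2n}$, so the coefficients of $\upxb$ slide past $\alpha$ unchanged; and the part where $\alpha$ is differentiated reassembles into $f(\alpha\upxb)g$, the passage from a left- to a right-hand action of $\upxb$ on $\alpha$ being exactly compensated by the extra minus sign built into the right Dirac operator. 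Equating the result with $\int_B(\upxb(f\alpha g))=0$ and transposing one term then gives precisely the claimed identity.

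The single delicate point --- and essentially the whole of the proof --- is the sign and ordering bookkeeping just described: one has to track the Grassmann-parity factors produced by the graded Leibniz rule, handle the non-commutativity of the Weyl generators when rewriting the $f$- and $\alpha$-parts through the left, respectively right, Dirac operator, and verify that everything conspires with the convention for the right Dirac operator so that the three terms come out with signs $-$, $+$, $+$ as in the statement. Expanding $f$ and $g$ in the basis $\{\uxb^jM_k^{l}\}$ furnished by the Fischer decomposition (\ref{fulferFischer}) --- the same device already used in the proof of Lemma \ref{vglpx} --- keeps this bookkeeping manageable; with the homogeneous case settled, the general case follows by linearity.
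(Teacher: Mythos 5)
Your proposal is correct and follows essentially the same route as the paper: both proofs rest on the single fact that $\int_B\partial_{{x\grave{}}_j}(\cdot)=0$, apply the graded Leibniz rule to the product $f\alpha g$ (after reducing by linearity to scalar Grassmann components), and then reassemble the componentwise identities with the symplectic Clifford generators to recover the three $\upxb$-terms with signs $-$, $+$, $+$. The only superfluous element is the appeal to the Fischer basis $\{\uxb^jM_k^{l}\}$, which is not needed here (and whose elements are themselves $\cW_{2n}$-valued); the plain splitting of $f$ and $g$ into $\Lambda_{2n}$-components with Weyl coefficients, as the paper implicitly does, already suffices.
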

\begin{proof}
It suffices to prove the lemma for $f$ and $g\in\Lambda_{2n}$. From the fact that $\int_B\partial_{{x\grave{}}_j}=0$ we deduce that
\begin{eqnarray*}
-\int_B(f\widehat{\alpha}\partial_{{x\grave{}}_j}) g+\int_B f\alpha(\partial_{{x\grave{}}_j}  g)-\int_B f(\alpha\partial_{{x\grave{}}_j} ) g&=&0.
\end{eqnarray*}
Multiplying with the symplectic Clifford algebra generators and summing yields the desired result.
\end{proof}

\begin{remark}
Like the bosonic Cauchy formula in equation (\ref{genCauchybos}), these Cauchy formulas are `dimensionally correct'. In equation (\ref{genCauchybos}), integration over $\Sigma$ of $\upx f$ and integration over $\partial \Sigma$ of $f$ are connected. In lemma \ref{fercauch} the role of the characteristic function of $\Sigma$ is played by $\alpha $ and $\alpha\upxb$ leads to a formal integration with one dimension lower.
\end{remark}

With these Cauchy formulas we can prove a Morera's theorem for the Grassmann algebra. We will repeat the bosonic Morera's theorem in section \ref{dimcorr}, but for now the important consequence of this theorem is that we have `enough' fermionic Cauchy formulas. This was not the case with the formula obtained in \cite{DBSCauchy}.

\begin{lemma}[Morera's theorem in Grassmann algebras]
If for every $\alpha\in\Lambda_{2n}$ it holds that
\begin{eqnarray*}
\int_B(\alpha\upxb)f&=&0
\end{eqnarray*}
for an $f\in\Lambda_{2n}\otimes\cW_{n}$ then $f$ is left monogenic.
\end{lemma}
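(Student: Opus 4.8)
The plan is to use the fermionic Cauchy formula (Lemma \ref{fercauch}) together with the Fischer-type expansion of $f$ and the non-degeneracy of the Berezin integral (Lemma \ref{basicprop}). First I would take an arbitrary $\alpha\in\Lambda_{2n}$ and feed the hypothesis $\int_B(\alpha\upxb)f=0$ into Lemma \ref{fercauch}. Writing the lemma with the roles of the first and last factor chosen so that $f$ sits where ``$g$'' sits in the statement — i.e. applying it to a test element on the left and $f$ on the right — the right-hand side $\int_B h(\alpha\upxb)f$ vanishes for \emph{every} $h\in\Lambda_{2n}\otimes\cW_{2n}$, and so the identity collapses to
\[
\int_B (h\widehat{\alpha}\upxb)f \;=\; \int_B h\,\alpha\,(\upxb f)
\]
for all such $h$ and all $\alpha\in\Lambda_{2n}$. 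Taking $\alpha=1$ this already reads $\int_B (h\upxb)f=\int_B h(\upxb f)$ for all $h$, and one checks (by moving the $\upxb$ across using $\int_B\partial_{{x\grave{}}_j}(\cdot)=0$, exactly as in the proof of Lemma \ref{fercauch}) that the left side equals $-\int_B h(\upxb f)$ up to the sign conventions in the definition of the right Dirac action; the point is that the hypothesis forces the ``boundary'' term to cancel, leaving an identity that $\int_B h\,q=0$ for all $h$ with $q$ built from $\upxb f$.

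Concretely, I would argue as follows. Expand $f=\sum_{k=0}^{n}\sum_{j=0}^{2n-2k}\sum_l f_{j,k,l}\,\uxb^{\,j}M_k^{l,j}$ using the purely fermionic full Fischer decomposition (\ref{fulferFischer}), with the normalization $\upxb \uxb^{\,j}M_k^{l,j}=\uxb^{\,j-1}M_k^{l,j-1}$ already introduced in the proof of Lemma \ref{vglpx}; here the coefficients $f_{j,k,l}$ are scalars (we are in a purely fermionic setting, $m=0$). Then $\upxb f=\sum_{k}\sum_{j\ge 1}\sum_l f_{j,k,l}\,\uxb^{\,j-1}M_k^{l,j-1}$, so $f$ being left monogenic, $\upxb f=0$, is equivalent to $f_{j,k,l}=0$ for all $j\ge 1$. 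The hypothesis, rewritten as above, says $\int_B q\,(\upxb f)\,q'=0$ for enough $q,q'$; by Lemma \ref{basicprop} (and its obvious left/right analogue, since the Berezin pairing is non-degenerate) this yields $\upxb f=0$.

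The step I expect to be the main obstacle is the bookkeeping of signs and of the left-versus-right action of the fermionic Dirac operator when specializing Lemma \ref{fercauch}: one must be careful that choosing $\alpha$ freely in $\Lambda_{2n}$ and a free ``test factor'' in $\Lambda_{2n}\otimes\cW_{2n}$ genuinely produces enough linear functionals to detect $\upxb f=0$, rather than only a proper subspace of relations. I would handle this by reducing, as in the proof of Lemma \ref{fercauch}, to the single-derivative identities $\int_B(f\widehat{\alpha}\partial_{{x\grave{}}_j})g-\int_B f\alpha(\partial_{{x\grave{}}_j}g)+\int_B f(\alpha\partial_{{x\grave{}}_j})g=0$, plugging in $g=f$ (the monogenic candidate) and letting $\alpha$ range over a basis of $\Lambda_{2n}$ and $f$ over the Clifford generators; combined with the vanishing hypothesis this isolates each $f_{j,k,l}$ with $j\ge1$. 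Everything else — the Fischer expansion, the normalization of the $M_k^{l,j}$, and the non-degeneracy Lemma \ref{basicprop} — is already available in the excerpt and is routine to apply.
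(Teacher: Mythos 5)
Your overall plan (combine Lemma \ref{fercauch} with Lemma \ref{basicprop}) is the right one, but the execution has a genuine gap: the very first step, the claim that the hypothesis makes $\int_B h(\alpha\upxb)f$ vanish for \emph{every} $h\in\Lambda_{2n}\otimes\cW_{2n}$, is false. The hypothesis only covers the case $h=1$; for nonconstant $h$ the product $h(\alpha\upxb)$ is not of the form $(\beta\upxb)$ for some $\beta\in\Lambda_{2n}$ (it differs from $(h\alpha)\upxb$ by a Leibniz term involving derivatives of $h$), so the hypothesis does not control it. A concrete counterexample with $n=1$: take $f=1$ (which is monogenic, hence satisfies the hypothesis), $\alpha={x\grave{}}_1{x\grave{}}_2$ and $h={x\grave{}}_1$; then $\alpha\upxb$ is a degree-one element of $\Lambda_2\otimes\mathrm{span}\{{e\grave{}}_1,{e\grave{}}_2\}$, the product $h(\alpha\upxb)$ has a nonzero top Grassmann component, and $\int_B h(\alpha\upxb)f\neq0$. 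Consequently your ``collapsed identity'' is not available, and the subsequent specialization $\alpha=1$ yields only the integration-by-parts identity $\int_B(h\upxb)f=\int_B h(\upxb f)$, which holds for \emph{arbitrary} $f$ and uses the hypothesis nowhere (for $\alpha=1$ one has $(\alpha\upxb)=0$, so the hypothesis is vacuous); the extra minus sign you invoke to force $\int_B h(\upxb f)=0$ is not there. Thus the key assertion ``$\int_B q(\upxb f)q'=0$ for enough $q,q'$'' is never actually derived, and the Fischer-decomposition bookkeeping cannot repair that.

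The repair is much simpler than what you set up (the expansion in the basis $\{\uxb^jM_k^{l,j}\}$ belongs to the superspace Morera theorem \ref{morera}, not to this purely fermionic lemma): in Lemma \ref{fercauch} take the \emph{left} factor equal to the constant $1$ and the right factor equal to $f$. Then the term $\int_B(1\,\widehat{\alpha}\upxb)f$ vanishes identically, since only the constant left factor is differentiated, and the formula reduces to $\int_B\alpha(\upxb f)=\int_B(\alpha\upxb)f=0$ for every $\alpha\in\Lambda_{2n}$. Expanding $\upxb f$ over a basis of $\cW_{2n}$ (whose generators commute with the Grassmann variables) and applying Lemma \ref{basicprop} componentwise gives $\upxb f=0$. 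This is exactly the paper's two-line argument; your proposal has the right ingredients but never makes the one specialization that kills the uncontrollable term $\int_B(h\widehat{\alpha}\upxb)f$.
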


\begin{proof}
Using lemma \ref{fercauch} this leads to $\int_B\alpha(\upxb f)=0$ for every $\alpha$. This means that $\upxb f=0$ by lemma \ref{basicprop}.
\end{proof}

\subsection{Cauchy formulas in superspace}

Combining the bosonic and fermionic Cauchy formulas we obtain the following Cauchy formula on superspace.

\begin{theorem}
\label{Cauchy}
For $f$ and $g$ $\in C^\infty(\Omega)_{m|2n}\otimes\cC$ and $\alpha\in \mathcal{E}'\otimes\Lambda_{2n} $ a distribution with compact support $\Sigma\subset \Omega$ one has
\begin{eqnarray*}
\int_{\mR^{m|2n}}\left[(f\widehat{\alpha} \px) g+f\alpha(\px g)\right]dV(\ux)=-\int_{\mR^{m|2n}}f(\alpha\px ) g\, dV(\ux).
\end{eqnarray*}
\end{theorem}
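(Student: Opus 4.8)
The plan is to obtain Theorem \ref{Cauchy} by superposing the distributional bosonic Cauchy formula (\ref{boscauch2}) with the fermionic Cauchy formula of Lemma \ref{fercauch}, exploiting the two facts that the super Dirac operator splits as $\px=\upxb-\upx$ into a purely bosonic and a purely fermionic summand, and that the superintegral factorises as $\int_{\mR^{m|2n}}=\int_B\int_{\mR^m}dV(\ux)$. First I would reduce to a separated situation: writing $\alpha=\sum_A\alpha_A(\ux)\,{x\grave{}}^A$ with $\alpha_A\in\mathcal{E}'$ and ${x\grave{}}^A$ running over a monomial basis of $\Lambda_{2n}$, and expanding $f$ and $g$ over bases of $\Lambda_{2n}$ and of $\cC$, all the operations involved are $\mR$-linear, so it suffices to verify the identity on homogeneous pieces, for which the dependence on the bosonic variables (a compactly supported distribution) and on the anticommuting variables is genuinely decoupled.

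For the bosonic half I would, with the fermionic monomials fixed, note that $fe_j\alpha g$ is a compactly supported $\cC$-valued distribution on $\mR^m$, so $\int_{\mR^m}\partial_{x_j}(fe_j\alpha g)\,dV(\ux)=\langle\partial_{x_j}(fe_j\alpha g),1\rangle=-\langle fe_j\alpha g,\partial_{x_j}1\rangle=0$; summing over $j=1,\dots,m$, applying Leibniz, and using that each $e_j$ commutes with every ${x\grave{}}_k$ (hence with $\alpha$), one recovers (\ref{boscauch2}) for these superalgebra-valued data, and applying $\int_B$ then gives the $\upx$-contribution of the theorem. For the fermionic half I would observe that the proof of Lemma \ref{fercauch} uses nothing but $\int_B\partial_{{x\grave{}}_j}=0$, a property insensitive to tensoring with $C^\infty(\Omega)$ and with $\mathcal{E}'$ because $\partial_{{x\grave{}}_j}$ commutes with the bosonic generators and operators; hence Lemma \ref{fercauch} applies verbatim, its coefficients now being compactly supported distributions in $\ux$, and integrating it over $\mR^m$ against $dV(\ux)$ — permissible since everything has compact support — yields the $\upxb$-contribution.

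Finally I would combine the two halves: since $\px=\upxb-\upx$ while the right super Dirac operator carries the extra sign $\cdot\px=-\cdot\upxb-\cdot\upx$, subtracting the bosonic identity from the fermionic one and regrouping the terms according to whether $\px$ differentiates $f$, $\alpha$, or $g$ gives the stated formula. The one genuinely delicate point, which I expect to be the main obstacle, is the sign bookkeeping when the bosonic generators $e_j$, the symplectic generators ${e\grave{}}_k$ (which anticommute with the $e_j$), the Grassmann variables, and the right-acting operator $\px$ are interleaved — in particular checking that the ``hatted'' terms $f\widehat{\alpha}\px$ produced separately by the bosonic and fermionic halves really assemble into a single $f\widehat{\alpha}\px$ with signs consistent with $\px=\upxb-\upx$. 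Once the computation is organised by separating bosonic from fermionic variables, and one keeps in mind that $e_j$ commutes with all ${x\grave{}}_k$ (so with $\alpha$) whereas $\partial_{{x\grave{}}_j}$ genuinely differentiates through $\alpha$ (whence the hat is needed only in the fermionic part), this bookkeeping is routine.
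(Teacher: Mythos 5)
Your proposal is correct and follows essentially the same route as the paper: the paper's proof is precisely the observation that the statement follows from the distributional bosonic formula (\ref{boscauch2}) together with Lemma \ref{fercauch}, after reducing by linearity to decomposable $\alpha=\beta\gamma$ with $\beta\in\mathcal{E}'$ and $\gamma\in\Lambda_{2n}$, which is exactly your decomposition $\alpha=\sum_A\alpha_A(\ux)\,{x\grave{}}^A$. Your extra care with the splitting $\px=\upxb-\upx$, the sign of the right Dirac operator, and the fact that the $e_j$ commute with the Grassmann variables simply makes explicit the bookkeeping the paper leaves to the reader.
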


\begin{proof}
This follows from equation (\ref{boscauch2}) and lemma \ref{fercauch} for an $\alpha=\beta\gamma$ with $\beta\in\mathcal{E}'$ and $\gamma\in\Lambda_{2n}$.
\end{proof}

Using this formula we can create two kinds of Cauchy formulas. There are Cauchy formulas where $\alpha$ contains only Heaviside distributions corresponding to characteristic functions of subspaces of $\mR^m$. This generalizes the Cauchy formula obtained in \cite{DBSCauchy} and is related to the version of Stokes' formula in \cite{MR1402921}. Indeed, they connect integration of functions over a domain with a certain superdimension with integration of the Dirac operator acting on these functions over a domain with this superdimension plus one. Therefore we call them dimensionally correct Cauchy formulas. In that case, we have the following corollary.

\begin{corollary}
\label{veralgCauchy}
For $\beta\in\Lambda_{2n}$ and $f,g\in C^1(\Sigma)_{m|2n} \otimes \cC$, with $\Sigma$ a compact oriented differentiable $m$-dimensional manifold with smooth boundary $\partial \Sigma$, the following holds
\begin{eqnarray*}
\int_{\Sigma}\int_B[(f\widehat{\beta}\px) g+f\beta(\px g)]dV(\ux)&=&-\int_{\partial \Sigma}\int_Bf\beta d\sigma_{\ux}g+\int_{\Sigma}\int_Bf(\beta\upxb)g dV(\ux).
\end{eqnarray*}
\end{corollary}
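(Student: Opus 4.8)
The plan is to specialise Theorem~\ref{Cauchy} to the distribution $\alpha=\beta\,H(\nu)$, where $H$ is the Heaviside function and $\nu=\nu(\ux)$ is a smooth defining function for $\Sigma$, normalised exactly as in the paragraph preceding equation~(\ref{boscauch}): $\nu>0$ on $\mathring{\Sigma}$, $\nu<0$ on $\mR^m\setminus\Sigma$, and $\nu=0$ with $|\upx\nu|=1$ on $\partial\Sigma$. Since $H(\nu)$ vanishes outside $\Sigma$, it is a compactly supported distribution with support $\Sigma$, so $\alpha=\beta H(\nu)\in\mathcal{E}'\otimes\Lambda_{2n}$ and Theorem~\ref{Cauchy} applies once the regularity point mentioned below is handled. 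The whole argument then amounts to reading off what each member of Theorem~\ref{Cauchy} becomes for this $\alpha$.

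For the left member, I would note that $H(\nu)$ is a scalar commuting with $f$, $\beta$ and $g$, and that the derivatives in $\px$ never act on the shielded factor $\beta$ (nor on the scalar $H(\nu)$); hence $(f\widehat{\alpha}\px)g+f\alpha(\px g)=H(\nu)\big[(f\widehat{\beta}\px)g+f\beta(\px g)\big]$, and integrating against $dV(\ux)\int_B$ together with $\int_{\mR^m}H(\nu)\varphi\,dV(\ux)=\int_\Sigma\varphi\,dV(\ux)$ reproduces exactly the left-hand side of the corollary. For the right member $-\int_{\mR^{m|2n}}f(\alpha\px)g\,dV(\ux)$ the operator does differentiate $\alpha=\beta H(\nu)$, and Leibniz's rule splits this into a fermionic and a bosonic contribution: the fermionic derivatives hit $\beta$ alone, producing the factor $(\beta\upxb)H(\nu)$ and hence, again by $\int_{\mR^m}H(\nu)\varphi\,dV(\ux)=\int_\Sigma\varphi\,dV(\ux)$, the summand $\int_\Sigma\int_B f(\beta\upxb)g\,dV(\ux)$; the bosonic derivatives hit $H(\nu)$ alone, and the chain rule gives $\upx H(\nu)=\delta(\nu)\,\upx\nu$, so that $\int_{\mR^m}\delta(\nu)(\upx\nu)\varphi\,dV(\ux)=\int_{\partial\Sigma}(\upx\nu)\varphi\,dS(\ux)$ by the normalisation of $\nu$, which under the identification $\int_{\partial\Sigma}f(\upx\nu)g\,dS(\ux)=-\int_{\partial\Sigma}f\,d\sigma_{\ux}\,g$ (read off by comparing equations~(\ref{genCauchybos}) and~(\ref{boscauch})) yields the boundary term $-\int_{\partial\Sigma}\int_B f\beta\,d\sigma_{\ux}\,g$. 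Assembling the two contributions gives the asserted formula.

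The part needing genuine care is the regularity: Theorem~\ref{Cauchy} is stated for $C^\infty$ data, whereas the corollary only assumes $f,g\in C^1(\Sigma)_{m|2n}\otimes\cC$. Since both sides depend continuously on $f$, $g$ and their first-order derivatives, I would extend the formula from the $C^\infty$ case by the $C^1$-density of smooth functions on the compact manifold $\Sigma$; alternatively, and more self-contained, one can bypass Theorem~\ref{Cauchy} altogether by expanding $f$, $g$ and $\beta$ in the anticommuting variables and combining, coefficient by coefficient, the classical Stokes-based bosonic formula~(\ref{boscauch}) (which needs only $C^1$ data) with the fermionic Cauchy formula of Lemma~\ref{fercauch}, the Berezin integral being inert under $\int_\Sigma dV(\ux)$. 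The only other thing to watch is the bookkeeping of signs relating the left Dirac operator $\px$, the right Dirac operator $\cdot\px$ and the orientation encoded in $d\sigma_{\ux}$; these are precisely the conventions already fixed in~(\ref{genCauchybos})--(\ref{boscauch2}), so they introduce no new difficulty but must be applied consistently.
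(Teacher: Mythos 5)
Your proposal is correct and follows essentially the same route as the paper: the paper's proof is precisely the specialisation of Theorem~\ref{Cauchy} to $\alpha=H(\nu)\beta$ with the stated sign convention for $\nu$, and it also notes the alternative you give for the $C^1$ case, namely combining the bosonic formula~(\ref{genCauchybos}) directly with Lemma~\ref{fercauch}. Your explicit handling of the regularity issue and of the identification of the boundary term is just a more detailed write-up of what the paper leaves implicit.
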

\begin{proof}
This is a special case of theorem \ref{Cauchy} with $\alpha=H(\nu)\beta$ with $\nu(\ux)>0$ if $\ux\in\mathring{\Sigma}$, $\nu <0$ if $\ux\in \mR^m\backslash \Sigma$. It also follows immediately from equation (\ref{genCauchybos}) combined  with lemma \ref{fercauch}.
\end{proof}
In this corollary, as in \cite{DBSCauchy, MR1402921}, the boundary of the domain consists of two parts. The two parts of the boundary are determined by $(\partial \Sigma,\beta)$, which is the classical boundary and $(\Sigma,\beta\upxb)$ where the $\beta\upxb$ leads to a dimension which is $1$ lower than the domain ($\Sigma,\beta$). The Cauchy formula in \cite{DBSCauchy} is a special case of corollary \ref{veralgCauchy}, with $\beta = \exp(\uxb^2)-1$ and $\upxb \beta = - \beta \upxb = 2 \uxb \exp(\uxb^2)$.

To properly generalize certain bosonic Cauchy formulas to superspace, $\alpha$ will have to contain also more general distributions. These Cauchy formulas will not have the dimensional property of purely bosonic or fermionic Cauchy formulas (equation (\ref{genCauchybos}) or lemma \ref{fercauch}). This dimensional property is a consequence of the combination of the Heaviside and Dirac distribution. The non-dimensionally correct Cauchy formulas will generalize the integration over the unit sphere $\mS^{m-1}$ to the so-called supersphere. This integration was introduced in \cite{DBS5} and \cite{DBE1}. In the bosonic case the connection between the unit sphere and unit ball yields of course a real (dimensionally correct) Cauchy formula.

The fact that theorem \ref{Cauchy} leads to a unification of the dimensionally correct Cauchy formulas and the formulas for the supersphere and superball shows the power of our approach.

\section{Dimensionally correct Cauchy formulas}
\label{dimcorr}

In this section we prove that corollary \ref{veralgCauchy} leads to Morera's theorem in superspace as well as a Cauchy-Pompeiu formula.

\subsection{Morera's theorem}

In complex analysis, Morera's theorem states that if $f$ is a continuous, complex-valued function defined on an open set $\Omega$ in the complex plane, satisfying $ \oint_T f(z)\,dz = 0$, for every closed triangle $T$ in $\Omega$, then $f$ is holomorphic in $\Omega$. A generalization of this theorem also exists for monogenic functions in Clifford analysis (see \cite{MR697564}, page 60):

\begin{theorem}
\label{classmorera}
A function $f$ is left monogenic in the open set $\Omega\subset \mR^m$ if and only if $f$ is continuous in $\Omega$ and 
\begin{eqnarray*}
\int_{\partial I}d\sigma_{\ux} f&=&0
\end{eqnarray*}
for all intervals $I\subset\Omega$. 
\end{theorem}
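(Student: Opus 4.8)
The plan is to treat the two implications separately, with essentially all the work in the converse. For the easy direction, suppose $f$ is left monogenic in $\Omega$, i.e.\ $\upx f=0$ (equivalently $\px f=0$, there being no anticommuting variables). Then $f$ is automatically $C^\infty$, hence continuous, and I would apply the bosonic Cauchy formula (\ref{genCauchybos}) with $\Sigma=I$, the left slot taken to be the constant $1$ and the right slot taken to be $f$; since $1\cdot\upx=0$ this gives
\[
\int_{\partial I}\ds\,f=\int_{I}(\upx f)\,dV(\ux)=0
\]
for every interval $I\subset\Omega$, which is the stated condition.

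For the converse, assume $f\in C^0(\Omega)\otimes\mR_{0,m}$ with $\int_{\partial I}\ds\,f=0$ for all intervals $I\subset\Omega$. My plan is a mollification argument. Fix $\epsilon>0$, set $\Omega_\epsilon=\{\ux\in\Omega:\mathrm{dist}(\ux,\mR^m\backslash\Omega)>\epsilon\}$, choose a mollifier $\varphi_\epsilon\in C^\infty_c(B(0,\epsilon))$ with $\int\varphi_\epsilon=1$, and put $f_\epsilon=f*\varphi_\epsilon\in C^\infty(\Omega_\epsilon)\otimes\mR_{0,m}$. First I would check that $f_\epsilon$ inherits the integral condition on $\Omega_\epsilon$: for an interval $I\subset\Omega_\epsilon$ and $y$ in the support of $\varphi_\epsilon$ the translate $I-y$ is a compact subset of $\Omega$, so interchanging the (compactly supported) integration over $\partial I$ with the convolution and using translation invariance of $\ds$,
\[
\int_{\partial I}\ds\,f_\epsilon(\ux)=\int_{B(0,\epsilon)}\varphi_\epsilon(y)\Big(\int_{\partial(I-y)}\ds\,f(\ux)\Big)\,dV(y)=0 .
\]
Since $f_\epsilon$ is smooth, (\ref{genCauchybos}) applied to $f_\epsilon$ itself gives $\int_I(\upx f_\epsilon)\,dV(\ux)=0$ for every interval $I\subset\Omega_\epsilon$, and continuity of $\upx f_\epsilon$ forces $\upx f_\epsilon\equiv0$; thus $f_\epsilon$ is left monogenic on $\Omega_\epsilon$.

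To finish I would let $\epsilon\to0$. As $f$ is continuous, $f_\epsilon\to f$ uniformly on compact subsets of $\Omega$. Since $\upx^2=\Delta_b$ is, up to sign, the ordinary Laplacian, each real component of the monogenic function $f_\epsilon$ is harmonic, and harmonicity is preserved under locally uniform limits, so each component of $f$ is harmonic; in particular $f\in C^\infty(\Omega)\otimes\mR_{0,m}$. Applying (\ref{genCauchybos}) directly to $f$ then yields $\int_I(\upx f)\,dV(\ux)=\int_{\partial I}\ds\,f=0$ for all intervals $I\subset\Omega$, whence $\upx f\equiv0$ by continuity, i.e.\ $f$ is left monogenic.

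The one genuinely non-formal point is the regularity bootstrap in the converse: upgrading the merely continuous $f$ satisfying the integral condition to a smooth, hence monogenic, function. The mollification isolates this as the classical fact that a locally uniform limit of left monogenic functions is left monogenic, which I would handle via componentwise harmonicity as above; alternatively one could represent each $f_\epsilon$ through its Cauchy integral over small spheres and pass to the limit under the integral sign. Everything else is a direct application of the Cauchy formula (\ref{genCauchybos}) together with continuity of the integrands (equivalently, the Lebesgue differentiation theorem).
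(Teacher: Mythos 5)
Your proof is correct. There is nothing in the paper to compare it with: Theorem \ref{classmorera} is quoted verbatim from the classical reference (Brackx--Delanghe--Sommen, \emph{Clifford analysis}, p.~60) and the paper gives no proof of it, so your argument stands on its own; it is in fact the standard textbook route --- mollify $f$, transfer the vanishing of $\int_{\partial I}\ds f$ to $f\ast\varphi_\epsilon$ by Fubini and translation invariance (legitimate since $\varphi_\epsilon$ is scalar-valued and commutes with the Clifford units), conclude $\upx f_\epsilon\equiv 0$ on $\Omega_\epsilon$ from formula (\ref{genCauchybos}) plus shrinking boxes, and then recover smoothness of $f$ from the locally uniform limit of the (componentwise harmonic) $f_\epsilon$ before applying (\ref{genCauchybos}) once more to $f$ itself. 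The only cosmetic remarks: an interval is a manifold with corners rather than with smooth boundary, but Stokes' theorem (hence (\ref{genCauchybos})) still applies; and in the easy direction continuity is automatic since left monogenic functions are by definition $C^1$ (indeed $C^\infty$).
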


Before proving Morera's theorem in superspace we need the following lemma, which is an extension of theorem \ref{classmorera}.

\begin{lemma}
\label{bosmorera2}
If $f\in C^0(\Omega)\otimes\mR_{0,m}$ and $h\in C^{\infty}(\Omega)\otimes\mR_{0,m}$ (with $\Omega$ an open subset of $\mR^m$) and for every interval $I\subset\Omega$
\begin{eqnarray}
\label{morerafh}
\int_{\partial I}d\sigma_{\ux}f&=&\int_{I}hdV(\ux),
\end{eqnarray}
then $\upx f=h$ in $\Omega$.
\end{lemma}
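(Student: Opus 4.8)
The plan is to reduce the statement to the classical Morera theorem (Theorem \ref{classmorera}) by subtracting off a particular solution of the inhomogeneous equation $\upx g = h$. The only real ingredients are the surjectivity of the bosonic Dirac operator on smooth functions and the classical Morera theorem; everything else is bookkeeping.

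First I would invoke the surjectivity of $\upx$ on $C^{\infty}$ functions — the purely bosonic case of Lemma \ref{vglpx}, i.e. the result quoted from \cite{MR697564}, page 161 — to obtain a $g \in C^{\infty}(\Omega)\otimes\mR_{0,m}$ with $\upx g = h$ in $\Omega$. (The sign is irrelevant: if the cited statement produces a solution of $\upx g = -h$, replace $g$ by $-g$.)

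Next I would apply the bosonic Cauchy formula (\ref{genCauchybos}) to the pair $(1,g)$ over an arbitrary interval $I\subset\Omega$. Since $1\cdot\upx = 0$, this gives
\[
\int_{\partial I} d\sigma_{\ux}\, g \;=\; \int_{I}(\upx g)\,dV(\ux) \;=\; \int_{I} h\,dV(\ux),
\]
which is legitimate because $g$ is $C^{\infty}$, hence in particular $C^1$, and an interval is a compact oriented $m$-dimensional manifold with smooth boundary. Subtracting this identity from the hypothesis (\ref{morerafh}) yields $\int_{\partial I} d\sigma_{\ux}(f-g) = 0$ for every interval $I\subset\Omega$.

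Finally, $f-g$ is continuous on $\Omega$, since $g\in C^{\infty}(\Omega)\otimes\mR_{0,m}\subset C^{0}(\Omega)\otimes\mR_{0,m}$, so Theorem \ref{classmorera} applies to $f-g$ and shows that $f-g$ is left monogenic in $\Omega$, i.e. $\upx(f-g)=0$. Hence $\upx f = \upx g = h$ in $\Omega$, which is the claim. The step requiring the most care is checking that a merely smooth particular solution $g$ is enough: it is, because the bosonic Cauchy formula only needs $C^1$ regularity and the classical Morera theorem only needs continuity of the difference $f-g$ — so there is no circularity and no extra regularity hypothesis on $f$ is hidden in the argument.
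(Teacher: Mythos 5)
Your proof is correct and follows essentially the same route as the paper: pick a smooth $g$ with $\upx g = h$ by surjectivity of the bosonic Dirac operator, use the bosonic Cauchy formula (\ref{genCauchybos}) to convert the hypothesis into $\int_{\partial I} d\sigma_{\ux}(f-g)=0$ for all intervals, and conclude by Theorem \ref{classmorera}. The extra care you take (applying the Cauchy formula to the pair $(1,g)$, noting the regularity needed at each step) is exactly the bookkeeping the paper leaves implicit.
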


\begin{proof}
Because $h\in C^{\infty}(\Omega)\otimes\mR_{0,m}$ there exists a $g\in C^{\infty}\otimes\mR_{0,m}$ such that $h=\upx g$ (using the surjectivity of the bosonic Dirac operator). Using equations (\ref{genCauchybos}) and (\ref{morerafh}) we find
\begin{eqnarray*}
\int_{\partial I}d\sigma_{\ux}[f-g]&=&0\;\;\;\; \forall I.
\end{eqnarray*}

Using theorem \ref{classmorera} yields $\upx f=\upx g=h$.
\end{proof}

The reverse of this lemma is also true because of equation (\ref{genCauchybosA}). Now we are ready to prove Morera's theorem in superspace, giving a solution to problem \textbf{P3}.

\begin{theorem}[Morera's theorem in superspace]
\label{morera}
A function $f\in C^0(\Omega)_{m|2n}\otimes \cC$ (with $\Omega$ an open subset of $\mR^m$) is left monogenic in $\Omega$ if and only if
\[
\int_{\partial I}\int_B\alpha d\sigma_{\ux}f - \int_{I}\int_B(\alpha\upxb)fdV(\ux)= 0,
\]
for every interval $I\subset\Omega$ and for every $\alpha$ in $\Lambda_{2n}$.
\end{theorem}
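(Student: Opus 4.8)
The plan is to reduce the superspace statement to the bosonic extension of Morera's theorem (Lemma \ref{bosmorera2}) by expanding everything in the fermionic basis coming from the Fischer decomposition (\ref{fulferFischer}), exactly as in the proof of Lemma \ref{vglpx}. First I would note that one direction is immediate: if $f$ is left monogenic, then $\px f = 0$, so Corollary \ref{veralgCauchy} applied with $g=f$ and $\beta = \alpha$ (and with $I$ in place of $\Sigma$, which is legitimate since an interval is a compact oriented $m$-dimensional manifold with smooth boundary) gives precisely the stated identity, because both terms on the left of Corollary \ref{veralgCauchy} vanish ($\px f = 0$, and $f\widehat{\alpha}\px$ only differentiates $f$, hence is $0$ as well). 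So the content is the converse.

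For the converse, write $f = \sum_{k=0}^{n}\sum_{j=0}^{2n-2k}\sum_{l} f_{j,k,l}\,\uxb^{j}M_k^{l,j}$ with $f_{j,k,l}\in C^0(\Omega)\otimes\mR_{0,m}$, using the normalized basis $\{\uxb^j M_k^{l,j}\}$ with $\upxb\uxb^j M_k^{l,j} = \uxb^{j-1}M_k^{l,j-1}$ as in Lemma \ref{vglpx}. The hypothesis must be unpacked for a well-chosen family of test elements $\alpha$. Taking $\alpha$ ranging over a basis of $\Lambda_{2n}$ and using Lemma \ref{basicprop} (applied inside the Berezin integral after integrating over $\partial I$ and over $I$ respectively), the single scalar hypothesis for all $\alpha$ is equivalent to a family of bosonic hypotheses of the shape
\[
\int_{\partial I} d\sigma_{\ux}\, f_{j,k,l} = \int_I h_{j,k,l}\, dV(\ux) \qquad \forall I\subset\Omega,
\]
where the right-hand sides $h_{j,k,l}$ are explicit $\mR_{0,m}$-valued combinations of the $\upx$-derivatives of the "higher" coefficients $f_{j',k,l'}$ (those coming from the $\beta\upxb$ term in Corollary \ref{veralgCauchy}, which lowers the fermionic degree by one). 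Concretely, $\widehat{\alpha}\upxb$ and $\alpha\upxb$ acting on $\uxb^j M_k^{l,j}$ produce terms proportional to $\uxb^{j-1}M_k^{l,j-1}$, so the coupling is triangular in $j$.

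Now I would run a downward induction on $j$ (for each fixed $k,l$), starting from the top fermionic layer $j = 2n-2k$ where there is no contribution from higher coefficients, so the hypothesis reads $\int_{\partial I} d\sigma_{\ux} f_{2n-2k,k,l} = 0$ and Theorem \ref{classmorera} gives $\upx f_{2n-2k,k,l} = 0$, in particular $f_{2n-2k,k,l}\in C^\infty$. Descending, at layer $j$ the function $h_{j,k,l}$ is built from $\upx$-derivatives of already-known-to-be-$C^\infty$ coefficients, hence $h_{j,k,l}\in C^\infty(\Omega)\otimes\mR_{0,m}$, and Lemma \ref{bosmorera2} yields $\upx f_{j,k,l} = h_{j,k,l}$, so $f_{j,k,l}\in C^\infty$ and the induction continues. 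Reassembling, the identities $\upx f_{j,k,l} = h_{j,k,l}$ for all $j,k,l$ are exactly the statement that the fermionic-expansion coefficients of $\px f$ all vanish, i.e. $\px f = 0$ in $\Omega$.

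The main obstacle I anticipate is purely bookkeeping: correctly identifying the coefficients $h_{j,k,l}$ from Corollary \ref{veralgCauchy} and verifying that the coupling really is lower-triangular in the fermionic degree $j$ (so that the induction is well-founded and each $h_{j,k,l}$ involves only coefficients already shown to be smooth). One must also be slightly careful that Lemma \ref{bosmorera2} requires the source term to be $C^\infty$, which is why the induction must be set up so that smoothness of the lower-degree coefficients is established \emph{before} they are fed into the next Morera step; this is automatic since monogenic (indeed, each coefficient satisfying $\upx f_{j,k,l}=h_{j,k,l}$ with smooth right-hand side is itself smooth by bosonic elliptic regularity / the bosonic theory). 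No genuinely new analytic input beyond Theorem \ref{classmorera}, Lemma \ref{bosmorera2}, Corollary \ref{veralgCauchy} and Lemma \ref{basicprop} is needed.
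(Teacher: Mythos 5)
Your overall strategy coincides with the paper's proof: expand $f$ in the basis $\{\uxb^{j}M_k^{l,j}\}$ from the proof of Lemma \ref{vglpx}, strip off the arbitrary $\alpha$ via Lemma \ref{basicprop}, and run a downward induction on the fermionic degree, using Theorem \ref{classmorera} at the top layer $j=2n-2k$ and Lemma \ref{bosmorera2} in the descent, finally reassembling to $\px f=0$. However, your identification of the source terms $h_{j,k,l}$ is wrong as stated, and this is exactly the bookkeeping on which the argument hinges. The coupling term $\int_I\int_B(\alpha\upxb)f\,dV(\ux)$ contains no bosonic derivatives at all: by Lemma \ref{fercauch} (Berezin integration by parts, which you never invoke but need here, since Lemma \ref{basicprop} only applies to expressions of the form $\int_B\alpha\,q$) it equals $\int_I\int_B\alpha(\upxb f)\,dV(\ux)$, and $\upxb$ acts purely algebraically on the expansion, $\upxb\,\uxb^{j}M_k^{l,j}=\uxb^{j-1}M_k^{l,j-1}$. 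Hence the hypothesis splits into $\int_{\partial I}d\sigma_{\ux}f_{j,k,l}=\int_I f_{j+1,k,l}\,dV(\ux)$ for $j<2n-2k$ and $\int_{\partial I}d\sigma_{\ux}f_{2n-2k,k,l}=0$; that is, $h_{j,k,l}=f_{j+1,k,l}$, \emph{not} a combination of $\upx$-derivatives of higher coefficients. This is not a cosmetic point: only with $h_{j,k,l}=f_{j+1,k,l}$ does Lemma \ref{bosmorera2} give $\upx f_{j,k,l}=f_{j+1,k,l}$, which is precisely the system of identities that makes the reassembled expression for $\px f$ vanish; with the $h$'s you describe, your concluding sentence would be false. (Smoothness propagates as you say, since each $f_{j+1,k,l}$ is polymonogenic, hence $C^\infty$, before being used as a source term in Lemma \ref{bosmorera2}.)

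A smaller slip in the easy direction: to obtain the stated identity from Corollary \ref{veralgCauchy} you must take the \emph{left} function there equal to $1$ (with $g=f$, $\beta=\alpha$); then the term $(1\widehat{\alpha}\px)f$ vanishes trivially. Your parenthetical claim that ``$f\widehat{\alpha}\px$ only differentiates $f$, hence is $0$'' is not justified by left monogenicity (it would amount to right monogenicity), and putting $f$ in the left slot would not reproduce the identity of the theorem in any case.
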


\begin{proof}
If $f$ is left monogenic then the integral equality follows immediately from corollary \ref{veralgCauchy}. Now suppose $f\in C^0(\Omega)_{m|2n}\otimes \cC$. The proposed integral equation can be transformed, using lemma \ref{fercauch}, to
\begin{eqnarray*}
\int_{\partial I}\int_B\alpha d\sigma_{\ux}f&=&\int_{I}\int_B\alpha(\upxb f)dV(\ux).
\end{eqnarray*}

This equation holds for every $\alpha$. So by lemma \ref{basicprop}, this means that we have
\begin{eqnarray}
\label{moreratss}
\int_{\partial I} d\sigma_{\ux}f&=&\int_{I}(\upxb f)dV(\ux)\;\;\;\;\; \forall I.
\end{eqnarray}

Now we take for $\Lambda_{2n}\otimes \cW_{2n}$ the same basis $\{\uxb^kM_j^{l,k}\}$ as used in the proof of lemma \ref{vglpx}. So we expand $f$ as $f=\sum_{k=0}^{n}\sum_{j=0}^{2n-2k}\sum_{l}f_{j,k,l}\uxb^jM_k^{l,j}$. The condition on the integrals (\ref{moreratss}) splits into (we take $k$ and $l$ as fixed but arbitrary now)
\begin{eqnarray}
\label{inductiemorera}
\int_{\partial I} d\sigma_{\ux}f_{j-1,k,l}&=&\int_{I}f_{j,k,l}dV(\ux)\;\;\;\;\forall j=1, \ldots, 2n-2k\;\;\;\;\forall I
\end{eqnarray}
and
\begin{eqnarray*}
\int_{\partial I} d\sigma_{\ux}f_{2n-2k,k,l}&=&0\;\;\;\;\forall I.
\end{eqnarray*}

Using theorem \ref{classmorera} we immediately find that $f_{2n-2k,k,l}$ is monogenic in $\Omega$, which also implies that $f_{2n-2k,k,l}\in C^\infty(\Omega)$. Now we proceed by induction (from $j= 2n-2k-1$ to $j=0$), we suppose that $\upx f_{j,k,l}= f_{j+1,k,l}$ and $f_{j,k,l}$ is polyharmonic (and therefore $C^\infty$). This means, by lemma \ref{bosmorera2} and equation (\ref{inductiemorera}), that $\upx f_{j-1,k,l}= f_{j,k,l}$. This also implies that $f_{j,k-1,l}$ is polyharmonic. So, finally, we have found that $f$ is differentiable and that
\begin{eqnarray*}
\px f&=&-\sum_{k=0}^{n}\sum_{j=0}^{2n-2k-1}\uxb^j\sum_lM_k^{l,j}\upx f_{j,k,l}+\sum_{k=0}^{n}\sum_{j=1}^{2n-2k}\sum_{l}\uxb^{j-1}M_k^{l,j-1} f_{j,k,l}
\end{eqnarray*}
will be zero.
\end{proof}

\begin{remark}
In the light of the link between fermionic and bosonic Cauchy formulas, the different $\alpha$ taken in theorem \ref{morera} correspond to the different intervals $I$ taken in theorem \ref{classmorera}. 
\end{remark}

In superspace we can also prove the extended version (lemma \ref{bosmorera2}) of Morera's theorem.

\begin{lemma}
\label{morera2}
If $f\in C^0(\Omega)_{m|2n}\otimes\cC$ and $h\in C^{\infty}_{m|2n}\otimes\cC$ (with $\Omega$ an open subset of $\mR^m$) and for every interval $I\subset\Omega$ and for every $\alpha\in\Lambda_{2n}$,
\begin{eqnarray*}
-\int_{\partial I}\int_B\alpha d\sigma_{\ux}f+\int_{I}\int_B(\alpha\upxb)fdV(\ux)&=&\int_{I}\int_B \alpha hdV(\ux)
\end{eqnarray*}
then $\px f=h$ in $\Omega$.
\end{lemma}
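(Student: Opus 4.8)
The plan is to mimic the structure of the proof of Theorem \ref{morera}, replacing the appeal to the "homogeneous" Morera theorem \ref{classmorera} by its inhomogeneous extension, lemma \ref{bosmorera2}. First I would use lemma \ref{fercauch} to rewrite the fermionic factors: the left-hand side $-\int_B\alpha d\sigma_{\ux}f+\int_B(\alpha\upxb)f$ becomes $-\int_B\alpha d\sigma_{\ux}f+\int_Bf(\alpha\upxb)\cdots$ — more precisely, applying lemma \ref{fercauch} to the integrand $f$ over the fermionic variables converts the hypothesis into
\begin{eqnarray*}
\int_{\partial I}\int_B\alpha d\sigma_{\ux}f&=&\int_{I}\int_B\alpha(\upxb f)dV(\ux)+\int_I\int_B\alpha h\,dV(\ux),
\end{eqnarray*}
valid for every $\alpha\in\Lambda_{2n}$. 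Since this holds for all $\alpha$, lemma \ref{basicprop} lets me strip off $\alpha$ and the Berezin integral, leaving the purely $\Lambda_{2n}$-valued identity
\begin{eqnarray*}
\int_{\partial I}d\sigma_{\ux}f&=&\int_I(\upxb f)\,dV(\ux)+\int_I h\,dV(\ux)\qquad\forall I.
\end{eqnarray*}

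Next I would expand everything in the Fischer basis $\{\uxb^jM_k^{l,j}\}$ of $\Lambda_{2n}\otimes\cW_{2n}$ used in lemma \ref{vglpx} and in theorem \ref{morera}, writing $f=\sum_{j,k,l}f_{j,k,l}\uxb^jM_k^{l,j}$ and $h=\sum_{j,k,l}h_{j,k,l}\uxb^jM_k^{l,j}$ with $f_{j,k,l}\in C^0(\Omega)\otimes\mR_{0,m}$, $h_{j,k,l}\in C^\infty(\Omega)\otimes\mR_{0,m}$. Using the normalization $\upxb\uxb^jM_k^{l,j}=\uxb^{j-1}M_k^{l,j-1}$, the displayed identity decouples, for each fixed $(k,l)$, into the top equation
\begin{eqnarray*}
\int_{\partial I}d\sigma_{\ux}f_{2n-2k,k,l}&=&\int_I h_{2n-2k,k,l}\,dV(\ux)\qquad\forall I
\end{eqnarray*}
and, for $j<2n-2k$, the recursive equations
\begin{eqnarray*}
\int_{\partial I}d\sigma_{\ux}f_{j,k,l}&=&\int_I\bigl(f_{j+1,k,l}+h_{j,k,l}\bigr)\,dV(\ux)\qquad\forall I.
\end{eqnarray*}

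Now I would run a downward induction on $j$. The base case $j=2n-2k$ is handled directly by lemma \ref{bosmorera2}: since $h_{2n-2k,k,l}\in C^\infty$, we get $\upx f_{2n-2k,k,l}=h_{2n-2k,k,l}$, and in particular $f_{2n-2k,k,l}$ is polyharmonic, hence $C^\infty$. For the inductive step, assuming $f_{j+1,k,l}$ is already known to be $C^\infty$ (and $\upx f_{j+1,k,l}=f_{j+2,k,l}+h_{j+1,k,l}$), the right-hand side $f_{j+1,k,l}+h_{j,k,l}$ is $C^\infty$, so lemma \ref{bosmorera2} applies again and gives $\upx f_{j,k,l}=f_{j+1,k,l}+h_{j,k,l}$, with $f_{j,k,l}$ polyharmonic and therefore $C^\infty$. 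Once all the $f_{j,k,l}$ are shown to be smooth and to satisfy these relations, I reassemble: computing $\px f$ in the basis exactly as in the proof of lemma \ref{vglpx} (the $-\upx g_{j,k,l}\uxb^jM_k^{l,j}$ term combined with the shifted $g_{j+1,k,l}\uxb^jM_k^{l,j-1}$ term), the relations $\upx f_{j,k,l}=f_{j+1,k,l}+h_{j,k,l}$ are precisely what is needed to make $\px f=\sum_{j,k,l}h_{j,k,l}\uxb^jM_k^{l,j}=h$.

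I expect the main obstacle — really the only delicate point — to be bookkeeping rather than conceptual: one must check that the Berezin integral together with the basis expansion correctly produces the decoupled system above (in particular getting the index shift and the sign conventions of $d\sigma_{\ux}$, $\upxb\uxb^jM_k^{l,j}$, and the extra minus sign in the right Dirac operator to line up), and that the regularity bootstrap is applied in the right order so that lemma \ref{bosmorera2} is always invoked with a genuinely $C^\infty$ right-hand side. Everything else is a routine adaptation of the proof of theorem \ref{morera}.
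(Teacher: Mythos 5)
Your route is genuinely different from the paper's. The paper dispatches this lemma in one line by repeating, one level up, the trick of lemma \ref{bosmorera2}: by lemma \ref{vglpx} choose $g\in C^{\infty}(\Omega)_{m|2n}\otimes\cC$ with $\px g=h$; by corollary \ref{veralgCauchy} this $g$ satisfies the same integral identity, so $f-g$ satisfies the homogeneous one, and theorem \ref{morera} gives $\px(f-g)=0$, i.e.\ $\px f=h$. Your plan instead reruns the whole induction from the proof of theorem \ref{morera} with the inhomogeneous term carried along. That is a legitimate, more self-contained alternative: it only invokes the bosonic lemma \ref{bosmorera2} at each step and avoids the super surjectivity lemma \ref{vglpx}, at the price of redoing the Fischer-basis bookkeeping that the paper's reduction reuses for free.

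However, as written your bookkeeping contains a sign slip that propagates to a false conclusion. From the hypothesis $-\int_{\partial I}\int_B\alpha\, d\sigma_{\ux}f+\int_{I}\int_B(\alpha\upxb)f\,dV(\ux)=\int_{I}\int_B\alpha h\,dV(\ux)$ and lemma \ref{fercauch}, the transformed identity is $\int_{\partial I}\int_B\alpha\, d\sigma_{\ux}f=\int_{I}\int_B\alpha\,(\upxb f-h)\,dV(\ux)$, with a minus sign on $h$, not the plus sign in your first display. Consequently the decoupled system should read $\int_{\partial I}d\sigma_{\ux}f_{2n-2k,k,l}=-\int_I h_{2n-2k,k,l}\,dV(\ux)$ and $\int_{\partial I}d\sigma_{\ux}f_{j,k,l}=\int_I(f_{j+1,k,l}-h_{j,k,l})\,dV(\ux)$, so lemma \ref{bosmorera2} yields $\upx f_{j,k,l}=f_{j+1,k,l}-h_{j,k,l}$ (and $\upx f_{2n-2k,k,l}=-h_{2n-2k,k,l}$). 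Since $\px=\upxb-\upx$, the coefficient of $\uxb^{j}M_k^{l,j}$ in $\px f$ is $f_{j+1,k,l}-\upx f_{j,k,l}$; with the relations you state, $\upx f_{j,k,l}=f_{j+1,k,l}+h_{j,k,l}$, this coefficient equals $-h_{j,k,l}$, so your reassembly gives $\px f=-h$, contradicting your final claim that it equals $h$. With the corrected signs the rest of your argument, including the downward regularity bootstrap, goes through and does prove $\px f=h$.
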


\begin{proof}
The proof is similar to the proof of lemma \ref{bosmorera2} and follows from lemma \ref{vglpx} and theorem \ref{morera}.
\end{proof}
\subsection{Cauchy-Pompeiu formula}

Using the Cauchy kernel (\ref{fundopl}) we can prove the following Cauchy-Pompeiu theorem in superspace. This was previously impossible with the result obtained in \cite{DBSCauchy}. We hence have solved problem \textbf{P4}.

\begin{theorem}[Cauchy-Pompeiu]
\label{CP}
Let $\Sigma\subset\Omega\subset\mR^m$ be a compact oriented differentiable m-dimensional manifold with smooth boundary. Let $g\in C^1(\Omega)_{m|2n}$ and let $\nu_1^{m|2n}$ be the fundamental solution of the super Dirac operator. Then one has
\[
\int_{\partial\Sigma}\int_B\nu_1^{m|2n}(x-y)\ds g(x)+\int_{\Sigma}\int_B\nu_1^{m|2n}(x-y)(\px g(x))dV(\ux)
\]

\[
=
\begin{cases}
0 & \mbox{if}\qquad \uy \in \Omega \backslash \Sigma \\
-g(y) & \mbox{if} \qquad \uy \in\mathring{ \Sigma}
\end{cases}
\]
\end{theorem}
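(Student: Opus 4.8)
The plan is to imitate the classical Clifford-analytic proof of the Cauchy--Pompeiu formula, but to run it through the fermionic Fischer expansion exactly as in the proofs of Lemma \ref{vglpx} and Theorem \ref{morera}, so that the distributional identity $\px \nu_1^{m|2n}(x-y)=\delta(x-y)$ carries the whole argument. First I would fix $y$ with $\uy\in\mathring\Sigma$ (the case $\uy\in\Omega\backslash\Sigma$ is handled identically, with the $\delta$ simply not hitting the domain). The key input is Corollary \ref{veralgCauchy} applied on the domain $\Sigma$ with $f(x)=\nu_1^{m|2n}(x-y)$ and $g=g(x)$; since $f$ is smooth away from $\ux=\uy$ one must first excise a small bosonic ball $B(\uy,\varepsilon)$, apply Corollary \ref{veralgCauchy} on $\Sigma\backslash B(\uy,\varepsilon)$ (where the $\beta$-part is irrelevant because there are no Heaviside factors beyond the characteristic function of $\Sigma$; equivalently one takes $\beta=1$ and uses the formula in its bare form $\int_\Sigma\int_B[(f\px)g+f(\px g)]dV=-\int_{\partial\Sigma}\int_B f\,d\sigma_{\ux}\,g$), and then let $\varepsilon\to0$.

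The main computation is the limit over the small sphere $\partial B(\uy,\varepsilon)$. Here I would substitute the explicit form (\ref{fundopl}) of $\nu_1^{m|2n}$: it is a finite sum $\sum_k c_k\,\nu_{j(k)}^{m|0}(\ux-\uy)\,(\uxb-\uyb)^{\bullet}$ of bosonic fundamental solutions of powers of $\upx$ times fixed Grassmann-valued coefficients. Only the term whose bosonic factor is the genuine fundamental solution $\nu_1^{m|0}$ of $\upx$ (i.e. the $\sim|\ux-\uy|^{1-m}$ one) contributes a nonzero boundary limit on the shrinking sphere; all higher $\nu_{2k+1}^{m|0}$ and $\nu_{2k+2}^{m|0}$ with $k\ge1$ are less singular and their surface integrals over $\partial B(\uy,\varepsilon)$ vanish as $\varepsilon\to0$. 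For that one surviving term, the classical bosonic computation gives $\lim_{\varepsilon\to0}\int_{\partial B(\uy,\varepsilon)}\nu_1^{m|0}(\ux-\uy)\,d\sigma_{\ux}\,G(\ux)=-G(\uy)$ for continuous $G$; combined with the fermionic prefactor $\pi^n(\uxb-\uyb)^{2n}/n!$ coming out of the Berezin integral, and using that $\int_B \pi^n(\uxb-\uyb)^{2n}/n!\cdot(\,\cdot\,)=(\,\cdot\,)|_{\uxb=\uyb}$, this reproduces exactly $-g(y)$ (the full superspace value at $x=y$), which matches the stated right-hand side. Equivalently, and more cleanly, I would phrase the whole thing distributionally: apply Theorem \ref{Cauchy} with $\alpha=H(\nu)$, move the Dirac operator onto $\nu_1^{m|2n}(x-y)$ using $f(\alpha\px)=f(\px\alpha)+(f\px)\alpha-\ldots$ bookkeeping, invoke $\px\nu_1^{m|2n}(x-y)=\delta(x-y)$, and integrate the resulting $\delta$ against $H(\nu)g$.

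The step I expect to be the real obstacle is the justification of the limiting/excision procedure in the presence of the anticommuting variables: one must check that differentiating $\nu_1^{m|2n}(x-y)$ (a polynomial in the $\uxb$ with distributional bosonic coefficients) and pairing against the compactly supported smooth data is legitimate, that the $\beta\upxb$-type boundary term of Corollary \ref{veralgCauchy} genuinely drops out here (it does, since with $\beta=1$ we have $\upxb\beta=0$), and that no spurious contributions arise from the fermionic expansion when the bosonic singularity is regularized. Concretely I would expand $g$ and $\nu_1^{m|2n}$ in the basis $\{\uxb^jM_k^{l,j}\}$ as in Lemma \ref{vglpx}, reduce the identity componentwise to a finite collection of purely bosonic Cauchy--Pompeiu statements for $\upx$ and its iterates (for which the classical result of \cite{MR697564} applies verbatim), and then reassemble; the Berezin integral $\int_B$ selects precisely the component that yields $g(y)$, and the bosonic fundamental-solution hierarchy in (\ref{fundopl}) is built exactly so that the reassembly closes. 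Routine estimates on $|\nu_{2k+2}^{m|0}|,|\nu_{2k+1}^{m|0}|$ near the singularity finish the vanishing of the extra small-sphere integrals.
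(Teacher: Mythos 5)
Your proposal is correct and follows essentially the same route as the paper: the paper's proof is simply the remark that the argument of Theorem 4 in \cite{DBSCauchy} goes through verbatim with $\beta=1$ in Corollary \ref{veralgCauchy}, and that argument is precisely what you reconstruct (the dimensionally correct super Cauchy formula with $\beta=1$, so the $\beta\upxb$ term drops, combined with the standard excision of a small ball around $\uy$ and the observation that only the $\nu_1^{m|0}(\ux-\uy)(\uxb-\uyb)^{2n}$ piece of the kernel survives the limit, the Berezin integral against $\frac{\pi^n}{n!}(\uxb-\uyb)^{2n}$ evaluating at $\uxb=\uyb$). Your componentwise reduction via the fermionic Fischer basis is a legitimate way to justify the steps the cited proof carries out directly.
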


\begin{proof}
The proof of this theorem is exactly the same as the proof of theorem 4 in \cite{DBSCauchy}. There the result is proven for $\beta=\exp(\uxb^2/2)-1$ in corollary \ref{veralgCauchy}, while here we use $\beta=1$.
\end{proof}

\section{Integration over the supersphere}
\label{superspheresection}

In this section we work with general superdimensions $M$ (but $m\not=0$). The case $m=1$ poses some minor problems because then the bosonic unit ball $\mB^m$ and sphere $\mS^{m-1}$ reduce to the interval $[-1,1]$ and its boundary. The theorems we will find will also hold for this case, but with a different normalization.

\subsection{The standard integration over the supersphere}
\label{superspheresection1}

In this section we define integration over the supersphere using a super-analogue of the bosonic formula
\begin{eqnarray*}
\int_{\mS^{m-1}}f|_{r=1}&=&2\int_{\mR^m}\delta(\ux^2+1)f.
\end{eqnarray*}
The integration over the supersphere that we will obtain will be an extension of the already known Pizzetti integration over the supersphere for polynomials (see formula (\ref{Pizzetti})). First, we define the distributions corresponding to the supersphere and superball. 
\begin{definition}
\label{superDirac}
The Heaviside and Dirac distribution corresponding to the supersphere are defined as the Taylor series in $\theta$ of $f(\ux^2+\theta)$ modulo $\theta^{n+1}$ with the identification $\theta=\uxb^2$. The Heaviside distribution which can be seen as the characteristic function of the superball is given by
\[
H(x^2+1)=H(\ux^2+1)+\sum_{j=1}^n\frac{\uxb^{2j}}{j!}\delta^{(j-1)}(\ux^2+1).
\]
The Dirac distribution corresponding to the supersphere is given by
\[
\delta(x^2+1)=\sum_{j=0}^n\frac{\uxb^{2j}}{j!}\delta^{(j)}(\ux^2+1).
\]
\end{definition}

Using these definitions we define integration over the supersphere and superball. We use the same notation as in (\ref{Pizzetti}) and will prove in theorem \ref{intPiz} that this is correct.

\begin{definition}
\label{defdef}
The integral over the supersphere of a function $f\in C^n(\Omega)_{m|2n}$ with $\Omega$ an open domain, $\mS^{m-1}\subset \Omega\subset \mR^m$, is given by
\begin{eqnarray*}
\int_{SS} f&=&2\int_{\mR^{m|2n}}\delta(x^2+1) f.
\end{eqnarray*}
The integral over the superball of a function $f\in C^{n-1}(\Omega)_{m|2n}$ with $\Omega$ an open domain, $\mB^m\subset \Omega\subset \mR^m$, is given by
\begin{eqnarray*}
\int_{SB} f&=&\int_{\mR^{m|2n}}H(x^2+1) f.
\end{eqnarray*}
\end{definition}

We generalize the integrals in definition \ref{defdef} to superspheres of radius $R$, by defining
\begin{equation*}
\int_{SS(R)} f=2R \int_{R^{m|2n}}\delta(x^2+R^2) f \qquad ,\qquad \int_{SB(R)}f=\int_{R^{m|2n}}H(x^2+R^2) f.
\end{equation*}

We calculate the integral over the supersphere with radius $R$ explicitely. We will obtain a closed form without distributions, so this form can be used as a definition for integration over the supersphere without any ambiguity.

\begin{lemma}
\label{uitgewerktint}
For $f\in C^n(\Omega)_{m|2n}$ with $\Omega$ an open domain such that $\mS^{m-1}(R)\subset \Omega\subset \mR^m$, one has
\begin{eqnarray*}
\int_{SS(R)}f&=&\sum_{j=0}^n\left[\int_{\mS^{m-1}}(\frac{\partial}{\partial r}\frac{1}{2r})^jr^{m-1}\int_B\frac{\uxb^{2j}}{j!}f\right]_{r=R}\\
&=&R \sum_{j=0}^n\int_{\mS^{m-1}}\int_B\frac{\uxb^{2j}}{j!}\left[(\frac{\partial}{\partial r^2})^jr^{m-2}f\right]_{r=R}.
\end{eqnarray*}
\end{lemma}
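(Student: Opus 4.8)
The plan is to start from the definition $\int_{SS(R)} f = 2R\int_{\mR^{m|2n}}\delta(x^2+R^2) f$ and unpack everything using the Taylor-expansion definition of the distribution $\delta(x^2+R^2)$. Writing $f = \sum_{j} \frac{\uxb^{2j}}{j!}\, f_{[j]}$ where the coefficients $f_{[j]}$ are purely bosonic (depending on $\ux$, with appropriate Clifford/Grassmann structure suppressed), and using Definition \ref{superDirac} in the form $\delta(x^2+R^2)=\sum_{i=0}^n \frac{\uxb^{2i}}{i!}\delta^{(i)}(\ux^2+R^2)$, I would first carry out the Berezin integral $\int_B$. Since $\int_B$ picks out the coefficient of $\uxb^{2n}$ (up to normalization), only the terms with $2i+2j=2n$ survive after multiplying by the additional $\uxb^{2k}$ coming from any explicit $\uxb$-dependence; more cleanly, I would organize the bosonic integral first and the Berezin integral last, using that $\int_{\mR^{m|2n}} = \int_B \int_{\mR^m} dV(\ux)$.

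The key computational step is the purely bosonic identity for integrating against $\delta^{(j)}(\ux^2+R^2)$. The standard fact (from distribution theory on $\mR^m$, converting to polar coordinates $\ux = r\,\omega$, $\omega\in\mS^{m-1}$, $dV(\ux) = r^{m-1}\,dr\,d\sigma_\omega$, and using $\ux^2 = -r^2$ so that $\ux^2 + R^2 = R^2 - r^2$) is
\begin{eqnarray*}
\int_{\mR^m} \delta^{(j)}(R^2 - r^2)\, \phi(\ux)\, dV(\ux) &=& \int_{\mS^{m-1}} \left[\left(\frac{\partial}{\partial r}\frac{1}{2r}\right)^j r^{m-1}\phi(r\omega)\right]_{r=R}\, d\sigma_\omega,
\end{eqnarray*}
which follows by repeatedly integrating by parts in the variable $t = r^2$, noting $\frac{\partial}{\partial t} = \frac{1}{2r}\frac{\partial}{\partial r}$ and $dt = 2r\,dr$, together with the sign bookkeeping for $\delta(R^2-r^2)$ versus $\delta(r^2-R^2)$ and the factor absorbed by the prefactor $2R$. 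Applying this with $\phi$ replaced by the relevant bosonic component of $\uxb^{2j}f/j!$ and then reassembling the sum over $j$ yields the first displayed formula. The second displayed formula then follows from the first by rewriting $\frac{\partial}{\partial r}\frac{1}{2r} = \frac{\partial}{\partial r^2}$ as operators (since $\frac{\partial}{\partial(r^2)} = \frac{1}{2r}\frac{\partial}{\partial r}$), pulling one factor of $r$ out to produce the overall $R$, and moving the remaining $r^{m-2}$ inside the $j$-fold operator; the Berezin integral $\int_B$ and the $\uxb^{2j}/j!$ commute past the bosonic differential operators in $r$ since they act on disjoint sets of variables.

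The main obstacle I expect is the careful bookkeeping of the chain of identities $\frac{\partial}{\partial(r^2)} = \frac{1}{2r}\frac{\partial}{\partial r}$ when iterated $j$ times — one must be sure that $\left(\frac{\partial}{\partial r}\frac{1}{2r}\right)^j = \left(\frac{\partial}{\partial (r^2)}\right)^j$ as differential operators acting on functions of $r$, and that the factor $r^{m-1}$ versus $r^{m-2}$ is correctly placed relative to where the operator acts. A secondary subtlety is ensuring the regularity hypothesis $f\in C^n(\Omega)_{m|2n}$ is exactly what is needed: each $\delta^{(j)}$ with $j\le n$ requires $j$ bosonic derivatives, and after the Taylor expansion in $\uxb^2$ this matches the $C^n$ assumption, so everything is well-defined and the formal manipulations with distributions are justified. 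Once these two points are handled, reassembling the finite sum over $j=0,\dots,n$ gives both claimed expressions, and since neither involves distributions, the right-hand sides may legitimately serve as an unambiguous definition of $\int_{SS(R)}$.
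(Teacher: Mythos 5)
Your proposal follows essentially the same route as the paper: substitute the Taylor-expansion form of $\delta(x^2+R^2)$ into the definition, pass to polar coordinates where $\delta^{(j)}(\ux^2+R^2)=\bigl(-\tfrac{1}{2r}\tfrac{\partial}{\partial r}\bigr)^j\delta(R^2-r^2)$, integrate by parts $j$ times so the operator $\bigl(\tfrac{\partial}{\partial r}\tfrac{1}{2r}\bigr)^j$ lands on $r^{m-1}$ times the Berezin-integrated function, and cancel the $\tfrac{1}{2R}$ from $\delta(R^2-r^2)=\delta(r-R)/2R$ against the prefactor $2R$. The only caution is the passage to the second form: it is not that $\tfrac{\partial}{\partial r}\tfrac{1}{2r}=\tfrac{\partial}{\partial r^2}$ as operators, but rather $\bigl(\tfrac{\partial}{\partial r}\tfrac{1}{2r}\bigr)^j=r\bigl(\tfrac{\partial}{\partial r^2}\bigr)^j\tfrac{1}{r}$, which is exactly the ``pull out one factor of $r$'' step you describe, so your argument is correct as intended.
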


\begin{proof}
Substituting definition \ref{superDirac} in definition \ref{defdef} yields
\begin{eqnarray*}
\int_{SS(R)} f&=&2R\sum_{j=0}^n\int_{\mR^{m}}\delta^{(j)}(\ux^2+R^2)\int_B\frac{\uxb^{2j}}{j!}f\\
&=&2R\sum_{j=0}^n\int_{\mS^{m-1}}\int_{\mR}dr[(-\frac{1}{2r}\frac{\partial}{\partial r})^j\delta(R^2-r^2)]r^{m-1}\int_B\frac{\uxb^{2j}}{j!}f\\
&=&2R\sum_{j=0}^n\int_{\mS^{m-1}}\int_{\mR}dr\,\frac{\delta(r-R)}{2R}(\frac{\partial}{\partial r}\frac{1}{2r})^jr^{m-1}\int_B\frac{\uxb^{2j}}{j!}f.
\end{eqnarray*}
\end{proof}

Now we prove that our integration over the supersphere reduces to the Pizzetti formula (\ref{Pizzetti}) when integrating polynomials. This solves problem \textbf{P1}.
\begin{theorem}
\label{intPiz}
The integration in definition \ref{defdef}
\begin{equation*}
\int_{SS}f=\sum_{j=0}^n\int_{\mS^{m-1}}\int_B\frac{\uxb^{2j}}{j!}\left[(\frac{\partial}{\partial r^2})^jr^{m-2}f\right]_{r=1}
\end{equation*}
is an extension of the Pizzetti integral over the supersphere for polynomials
\begin{equation*}
\int_{SS}R=\sum_{k=0}^{\infty} (-1)^k \frac{2 \pi^{M/2}}{2^{2k} k!\Gamma(k+M/2)} (\Delta^{k} R )(0).
\end{equation*}
\end{theorem}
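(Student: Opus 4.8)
The plan is to verify the claimed equality of the two linear functionals on $\cP$ on a spanning set adapted to the splitting $\cP=\mathbb{R}[\ux]\otimes\Lambda_{2n}$. On the bosonic factor use the classical Fischer decomposition $\mathbb{R}[\ux]=\bigoplus_{a\geq0}\bigoplus_{p\geq0}\ux^{2a}\cH_p^b$, and on the fermionic factor the decomposition (\ref{fermFischer}), $\Lambda_{2n}=\bigoplus_{q=0}^{n}\bigoplus_{b=0}^{n-q}\uxb^{2b}\cH_q^f$; tensoring gives the spanning set $\{\ux^{2a}S_p(\ux)\,\uxb^{2b}T_q(\uxb)\}$, with $S_p$ running over a basis of $\cH_p^b$ and $T_q$ over a basis of $\cH_q^f$. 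Crucially neither of these decompositions uses Lemma \ref{scalFischer}, so the argument is valid for every $M$, in particular for $M\in-2\mN$ (where the $1/\Gamma$-factors in (\ref{Pizzetti}) are read with the convention that $1/\Gamma$ vanishes at non-positive integers).

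The first step is to show that both functionals annihilate every generator with $(p,q)\neq(0,0)$. For the distributional integral, insert Definition \ref{superDirac} and split $\int_{\mR^{m|2n}}=\int_{\mR^m}dV(\ux)\int_B$; by (\ref{berekeningBer}) the Berezin integral $\int_B\uxb^{2j}\ux^{2a}S_p\uxb^{2b}T_q$ is, up to a scalar, $\ux^{2a}S_p\,\Delta_f^{q/2}(T_q)$, which vanishes for $q\geq1$ since $T_q$ is fermionic-harmonic (and it vanishes anyway unless $2j+2b+q=2n$); for $q=0$ the surviving bosonic integral carries the factor $\int_{\mS^{m-1}}S_p\,d\sigma$, which vanishes for $p\geq1$. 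For the Pizzetti side, $\Delta=\Delta_b+\Delta_f$ with $\Delta_b,\Delta_f$ acting on disjoint sets of variables, so $\Delta^k(\ux^{2a}S_p\uxb^{2b}T_q)=\sum_i\binom{k}{i}\Delta_b^{i}(\ux^{2a}S_p)\,\Delta_f^{k-i}(\uxb^{2b}T_q)$; by the bosonic, resp.\ fermionic, specialisation of (\ref{relationslaplace}) each factor equals a constant times $\ux^{2a-2i}S_p$, resp.\ $\uxb^{2b-2(k-i)}T_q$, and evaluation at the origin forces $p=q=0$, $i=a$, $k=a+b$.

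It therefore remains to compare the two functionals on the monomials $\ux^{2a}\uxb^{2b}$, $a\geq0$, $0\leq b\leq n$. On the Pizzetti side only the term $k=a+b$ survives, and the computation above gives $(\Delta^{a+b}(\ux^{2a}\uxb^{2b}))(0)=\binom{a+b}{a}\Delta_b^{a}(\ux^{2a})\,\Delta_f^{b}(\uxb^{2b})$; iterating (\ref{relationslaplace}) evaluates the two factors as the explicit products $\Delta_b^{a}(\ux^{2a})=4^{a}a!\,\Gamma(m/2+a)/\Gamma(m/2)$ and $\Delta_f^{b}(\uxb^{2b})=(-4)^{b}b!\,n!/(n-b)!$. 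On the distributional side use Lemma \ref{uitgewerktint} — equivalently, expand $\delta(x^2+1)$ as in Definition \ref{superDirac}, note $\int_B\uxb^{2j}\uxb^{2b}$ is non-zero only for $j=n-b$ (where it equals $n!/\pi^{n}$ by (\ref{berekeningBer})), and reduce to the single bosonic distribution $\delta^{(n-b)}(\ux^2+1)$. Passing to polar coordinates with $r=\sqrt{-\ux^2}$ (so $\ux^2+1=1-r^2$) and using the elementary identity $\int_0^\infty\delta^{(p)}(1-r^2)\,r^{2a+m-1}\,dr=\frac12\,\Gamma(a+m/2)/\Gamma(a+m/2-p)$, obtained by the substitution $u=r^2$, together with $\int_{\mS^{m-1}}d\sigma=2\pi^{m/2}/\Gamma(m/2)$, yields a closed expression for $\int_{SS}\ux^{2a}\uxb^{2b}$.

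Comparing the two closed expressions, all the powers of $4$, factorials and binomials cancel, and the remaining Gamma-factors coincide once one rewrites $\pi^{m/2-n}=\pi^{M/2}$ and $a+b+m/2-n=a+b+M/2$; in the degenerate case $M\in-2\mN$ the factor $1/\Gamma(a+b+M/2)$ occurs on both sides and vanishes on exactly the same range of $(a,b)$, so the identity persists. The only genuine work — hence the main obstacle — is this bookkeeping: correctly tracking the constants produced by the repeated application of (\ref{relationslaplace}) and the Gamma-quotients coming from the radial $\delta^{(p)}$-integral, and checking that the two resulting constants are literally equal for all $(a,b)$.
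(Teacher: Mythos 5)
Your proof is correct, but it follows a somewhat different route than the paper's. The paper also reduces the statement to a product of a bosonic and a fermionic factor, but it stops at homogeneous pieces $P_{2k}Q_{2l}$ (using parity to discard odd degrees) and then quotes the classical purely bosonic Pizzetti formula for $\int_{\mS^{m-1}}P_{2k}$ together with (\ref{berekeningBer}) and the single radial factor $(\partial/\partial r^2)^{n-l}r^{m-2+2k}|_{r=1}=\Gamma(\frac{m}{2}+k)/\Gamma(\frac{m}{2}+k-n+l)$; recombining these constants immediately yields the super Pizzetti value, with the convention that $1/\Gamma$ vanishes at non-positive integers taking care of degenerate superdimensions. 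You push the reduction further: you use the bosonic and the fermionic Fischer decompositions (which, as you correctly note, are available for every $M$, so no appeal to Lemma \ref{scalFischer} is needed) to land on the radial monomials $\ux^{2a}\uxb^{2b}$, and then verify the identity by explicitly evaluating $\Delta_b^a(\ux^{2a})=4^a a!\,\Gamma(\frac{m}{2}+a)/\Gamma(\frac{m}{2})$, $\Delta_f^b(\uxb^{2b})=(-4)^b b!\,n!/(n-b)!$ and the radial $\delta^{(p)}$-integral; I checked that the resulting constants do coincide after the rewritings $\pi^{m/2-n}=\pi^{M/2}$ and $a+\frac{m}{2}-(n-b)=a+b+\frac{M}{2}$. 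This makes your argument more self-contained -- you never invoke Pizzetti's classical theorem, only the vanishing of $\int_{\mS^{m-1}}S_p$ for $p\geq 1$, the harmonicity of $T_q$, and elementary Gamma-quotients -- at the price of the extra decomposition step and the final bookkeeping; your explicit discussion of the case $M\in-2\mN$ is a bonus that the paper handles only implicitly through the same $1/\Gamma$ convention.
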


\begin{proof}
We prove the statement for $P_{p}Q_{q}$ with $P_p$ a bosonic homogeneous polynomial of degree $p$ and $Q_q$ a homogeneous element of $\Lambda_{2n}$ of degree $q$. For ease of notation we will in general write $\Gamma(k+1)/\Gamma(k-l+1)=(\partial_{u})^lu^k$, even when $l<k$, instead of $0$. We have that
\[
2\int_{R^{m|2n}}P_pQ_q\delta(x^2+1)=\sum_{j=0}^n\int_{\mS^{m-1}}(\frac{\partial}{\partial r^2})^jr^{m-2+p}P_p(\uxi)\int_B\frac{\uxb^{2j}}{j!}Q_q(\uxb)|_{r=1}.
\]
Due to symmetry, the integral of a homogeneous polynomial of odd degree over the unit sphere is zero. Similarly, the Berezin integral also vanishes unless $q$ is even. So we can restrict ourselves to the case $p=2k$, $q = 2l$ and we are reduced to calculating
\begin{eqnarray*}
2\int_{R^{m|2n}}P_{2k}Q_{2l}\delta(x^2+1)&=&[\int_{\mS^{m-1}}P_{2k}(\uxi)][\int_B\frac{\uxb^{2(n-l)}}{(n-l)!}Q_{2l}(\uxb)](\frac{\partial}{\partial {r^2}})^{n-l}r^{m-2+2k}|_{r=1}.
\end{eqnarray*}

Using equation (\ref{Pizzetti}) in the purely bosonic case and equation (\ref{berekeningBer}) we find
\begin{eqnarray*}
2\int_{R^{m|2n}}P_{2k}Q_{2l}\delta(x^2+1)&=&(-1)^k\frac{2\pi^{m/2}}{2^{2k}k!\Gamma(k+m/2)}\Delta_b^k(P_{2k})\;\frac{(-1)^l\pi^{-n}}{2^{2l}l!}\Delta_f^l(Q_{2l})\;\frac{\Gamma(\frac{m}{2}+k)}{\Gamma(\frac{m}{2}+k-n+l)}\\
&=&(-1)^{k+l}\frac{2\pi^{M/2}}{2^{2(k+l)}(k+l)!\Gamma(k+l+M/2)}\Delta^{k+l}(P_{2k}Q_{2l})
\end{eqnarray*}
which equals $\int_{SS}P_{2k}Q_{2l}$ in equation (\ref{Pizzetti}). This completes the proof, as also for the super Pizzetti formula only even degrees of bosonic and fermionic monomials will give a non-zero contribution. 
\end{proof}

By exactly the same methods we can prove that the integral with $H(x^2+1)$ is an extension of the integral over the superball from \cite{DBS5}, yielding
\begin{eqnarray*}
\int_{SB}f&=&\int_{\mB^m}\int_Bf+\sum_{j=0}^{n-1}\int_{\mS^{m-1}}\int_B\frac{\uxb^{2j+2}}{(j+1)!}\left[(\frac{\partial}{\partial r}\frac{1}{2r})^jr^{m-1}f\right]_{r=1}\\
&=&\sum_{k=0}^{\infty} (-1)^k \frac{\pi^{M/2}}{2^{2k} k!\Gamma(k+M/2+1)} (\Delta^{k} f)(0)
\end{eqnarray*}
for $f\in\cP$.

The following equality follows from a straightforward calculation.

\begin{lemma}
\label{2keerR}
For $f\in C^n(\mR^m)_{m|2n}$ and $R\in\mR^+$.
\begin{eqnarray}
\int_{SS(R)}f(x)&=&R^{M-1}\int_{SS}f(Rx).
\end{eqnarray}
\end{lemma}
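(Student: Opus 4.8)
The plan is to reduce the identity $\int_{SS(R)}f(x)=R^{M-1}\int_{SS}f(Rx)$ to a change of variables in the explicit formula for $\int_{SS(R)}$ established in Lemma \ref{uitgewerktint}. Since both sides are linear in $f$ and, by the structure of the Berezin integral, only the fermionic parts $\uxb^{2j}$ survive, it suffices to treat $f$ of the form $f(x)=P(\ux)\,\uxb^{2j}$ with $P\in C^n(\mR^m)$; actually one may even take $P$ arbitrary in $C^n$, since the $r$-derivatives in Lemma \ref{uitgewerktint} act only on the bosonic radial variable. So I would first record the formula
\[
\int_{SS(R)}f=R\sum_{j=0}^n\int_{\mS^{m-1}}\int_B\frac{\uxb^{2j}}{j!}\left[\Big(\frac{\partial}{\partial r^2}\Big)^{j}r^{m-2}f\right]_{r=R},
\]
and do the same for $\int_{SS}g$ with $R$ replaced by $1$ and $f$ replaced by $g(x)=f(Rx)$.

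The key step is to track how the dilation $x\mapsto Rx$ interacts with each ingredient: the bosonic radial variable scales as $r\mapsto Rr$ (so $\frac{\partial}{\partial r^2}$ on a function of $Rr$ picks up a factor $R^{-2}$ per application, i.e. $(\partial_{r^2})^j$ contributes $R^{-2j}$), the anticommuting part satisfies $(\uxb\text{ evaluated at }Rx)^{2j}=R^{2j}\uxb^{2j}$ since $\uxb$ is linear in the ${x\grave{}}_i$, and the explicit radial weight $r^{m-2}$ evaluated after the substitution gives a factor $R^{m-2}$. Collecting these: the $j$-th term of $\int_{SS}f(R\,\cdot)$ equals $R^{2j}\cdot R^{-2j}\cdot R^{m-2}=R^{m-2}$ times the corresponding $j$-th term of $\int_{SS(R)}f$ divided by the leading $R$, and the overall prefactor $R$ present in $\int_{SS(R)}$ but absent (it is $1$) in $\int_{SS}$ accounts for one more power. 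Combining, $\int_{SS(R)}f=R\cdot R^{m-2}\int_{SS}f(R\,\cdot)=R^{M-1}\int_{SS}f(R\,\cdot)$, using $M=m$ on the bosonic side — wait, here I must be careful: the superdimension $M=m-2n$ does not literally appear in Lemma \ref{uitgewerktint}, so the power $m-1$ that comes out of the bosonic bookkeeping must be matched against $M-1$ by absorbing the fermionic scaling, which is exactly what the factor $R^{2j}$ from $\uxb^{2j}$ does not quite do term-by-term. The cleanest route is therefore to pull the factor $R^{2j}$ out of $\uxb^{2j}$ before the substitution but keep it attached to the $r$-powers, so that $(\partial_{r^2})^j$ acting on $R^{2j}r^{m-2+\deg}$ and the scaling of $P(R\uxi)$ combine to give a uniform $R^{m-1}$ independent of $j$; I expect each term to individually produce $R^{m-1}$ once one commits to substituting $u=Rr$ inside the $\mS^{m-1}$-integral rather than scaling piecemeal.

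The main obstacle — and the only place requiring genuine care — is the order of operations between the radial derivatives $(\partial/\partial r^2)^j$ and the dilation $f\mapsto f(Rx)$: these operations do not commute naively, and the factor $R^{2j}$ must be threaded through correctly so that it cancels against the $R^{-2j}$ from differentiating a rescaled argument. I would handle this by writing $f(Rx)$ in spherical-type coordinates as a function of $r^2$ and the angular variables, substituting $v=r^2$, and verifying that $\big[(\partial_v)^j\big(v^{(m-2)/2}\,(\text{stuff})(Rx)\big)\big]_{r=1}=R^{2j}\big[(\partial_v)^j\big(v^{(m-2)/2}(\text{stuff})\big)\big]_{r=R}$ type identities, which is a one-line chain rule once set up. After that the statement falls out by summing over $j$ and recognizing the common factor $R^{m-1}$, with the fermionic degree $2n$ never explicitly entering because it is already baked into the truncation of the sum at $j=n$; the appearance of $M=m-2n$ in the final exponent is, in fact, a red herring at this level — Lemma \ref{uitgewerktint} produces $R^{m-1}$, and the lemma statement's $R^{M-1}$ should be reconciled by noting that the authors intend the unnormalized scaling where only the bosonic measure dilates, i.e. one checks both expressions against the Pizzetti form of Theorem \ref{intPiz}, where $\Delta^k$ scales as $R^{-2k}$ and the homogeneity of $R_k$ as $R^k$, giving $R^{k+M}\cdot R^{-2k}\cdot R^{k-1}$ hmm — the safest verification is simply to apply both sides to a monomial $P_{2k}Q_{2l}$ and use Theorem \ref{intPiz}, which reduces the whole lemma to the scalar identity $(\Delta^{k+l}(P_{2k}Q_{2l}))(0)$ being dilation-invariant against the explicit $R$-powers, a short bosonic-plus-Berezin computation.
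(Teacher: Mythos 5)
Your strategy---reduce to the explicit formula of lemma \ref{uitgewerktint} and push the dilation through each factor---is the right one (the paper itself records no more than a straightforward calculation), but your scaling bookkeeping is wrong in two places, and this drives you to the false conclusion that the exponent should be $m-1$ and that the $M-1$ in the statement is a ``red herring''. First, the chain rule goes the other way: if $g(r)=h(Rr)$ then $\frac{\partial}{\partial r^2}\,g=R^{2}\bigl(\frac{\partial}{\partial u^2}h\bigr)(Rr)$, so $(\frac{\partial}{\partial r^2})^{j}$ applied to the dilated radial profile and evaluated at $r=1$ produces $R^{+2j}$ (not $R^{-2j}$) times the underived expression evaluated at $r=R$; similarly comparing the weight $r^{m-2}$ at $r=1$ with $r^{m-2}$ at $r=R$ gives $R^{-(m-2)}$, not $R^{m-2}$. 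Second, and decisively, the fermionic factor is not ``$R^{2j}$ from $\uxb^{2j}$'': the kernel $\uxb^{2j}/j!$ belongs to the definition of the integral and is never dilated. What is dilated is $f$ itself, and the only Grassmann component of $f(Rx)$ that survives the Berezin integral against $\uxb^{2j}/j!$ has the complementary degree $2n-2j$, hence contributes $R^{2n-2j}$ (so it is also not enough to take the fermionic part of $f$ to be a power of $\uxb^{2}$; only this degree count and linearity matter). Collecting correctly, every term carries $R^{2n-2j}\cdot R^{2j}\cdot R^{-(m-2)}=R^{2-M}$, uniformly in $j$, and with the prefactors ($R$ for $SS(R)$, $1$ for $SS$) this gives $\int_{SS}f(Rx)=R^{1-M}\int_{SS(R)}f$, i.e.\ exactly the lemma. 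The uniformity in $j$ is precisely what requires the fermionic variables to be dilated as well: with only bosonic dilation the $j$-th term would scale by the $j$-dependent factor $R^{2j-m+1}$ and no power law could hold.

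Your closing reinterpretation is therefore incorrect, and a one-line check would have caught it. Take $f=1$: only $j=n$ survives, $\int_B\uxb^{2n}/n!=\pi^{-n}$ by (\ref{berekeningBer}), and $\bigl[(\frac{\partial}{\partial r^2})^{n}r^{m-2}\bigr]_{r=R}=\frac{\Gamma(m/2)}{\Gamma(m/2-n)}R^{m-2-2n}$, so lemma \ref{uitgewerktint}, together with the area $\frac{2\pi^{m/2}}{\Gamma(m/2)}$ of $\mS^{m-1}$, yields $\int_{SS(R)}1=\frac{2\pi^{M/2}}{\Gamma(M/2)}R^{M-1}=R^{M-1}\int_{SS}1$: the exponent is $M-1$, not $m-1$. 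The monomial test you propose but never carry out confirms the same: for $f=P_{2k}Q_{2l}$ the surviving term has $j=n-l$ and $\bigl[(\frac{\partial}{\partial r^2})^{n-l}r^{m-2+2k}\bigr]_{r=R}\propto R^{\,m-2+2k-2(n-l)}$, whence $\int_{SS(R)}f=R^{M-1+2k+2l}\int_{SS}f=R^{M-1}\int_{SS}f(Rx)$, since $f(Rx)=R^{2k+2l}f(x)$. So the lemma is true as stated, with $f(Rx)$ the full super dilation (bosonic and fermionic variables alike), and your proposal as written does not prove it; repaired as above, it becomes the straightforward calculation the authors had in mind.
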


Now we can prove equation (\ref{RmnSS}) for general functions.

\begin{theorem}
\label{connectieBer}
For $f$ a function in $L_1(\mR^m)_{m|2n}\cap C^n(\mR^m)_{m|2n}$ and $M>0$, the following relation holds
\begin{eqnarray*}
\int_{\mR^{m|2n}}f&=&\int_{0}^\infty dR\, R^{M-1}\int_{SS,x}f(Rx).
\end{eqnarray*}
\end{theorem}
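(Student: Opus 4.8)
The plan is to reduce the statement to the corresponding radial decomposition of the bosonic Lebesgue integral, by using the definition of the superspace integral together with the explicit form of the supersphere integral from Lemma~\ref{uitgewerktint}. First I would write out the right-hand side using Lemma~\ref{uitgewerktint} (with general radius $R$, so that $\int_{SS(R)}g = R\sum_{j=0}^n\int_{\mS^{m-1}}\int_B\frac{\uxb^{2j}}{j!}[(\frac{\partial}{\partial r^2})^jr^{m-2}g]_{r=R}$); applied to $g(x)=f(x)$ this turns $\int_0^\infty dR\,R^{M-1}\int_{SS,x}f(Rx)$ into a sum over $j$ of bosonic integrals over $\mR^m$ against $\int_B\frac{\uxb^{2j}}{j!}$ of certain radial derivatives of $f$. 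Alternatively, and more cleanly, I would start from the left: expand $f$ in the fermionic variables as $f=\sum_{j=0}^n f_{(2j)}$ where $f_{(2j)}$ collects the terms with $\uxb^{2j}$ (only even fermionic degree survives the Berezin integral), so that $\int_{\mR^{m|2n}}f=\sum_j\int_{\mR^m}dV(\ux)\int_B f_{(2j)}$, and then treat each summand.

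The key step is to recognize that for each fixed fermionic degree the required identity is exactly the classical polar-coordinates formula in $\mR^m$, $\int_{\mR^m}\phi\,dV(\ux)=\int_0^\infty dr\,r^{m-1}\int_{\mS^{m-1}}\phi(r\omega)\,d\sigma(\omega)$, but applied with a twist coming from the extra powers of $r$ that appear when one passes the $\uxb^2\mapsto r$-type operators through. Concretely, the Berezin integral of $\uxb^{2j}$ against the coefficient functions produces a bosonic function, and the operator $(\frac{\partial}{\partial r^2})^{j}$ appearing in Lemma~\ref{uitgewerktint} must be shown to be the correct "inverse" of the $\delta^{(j)}$ weighting so that, after integrating against $R^{M-1}\,dR$ and using $M=m-2n$, the powers of $R$ telescope correctly to reproduce $\int_{\mR^m}dV(\ux)\int_B$. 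I would verify this by a direct computation on monomials $\ux^{2k}$, i.e. reduce to checking $\int_0^\infty R^{M-1}\int_{SS}R^{2k+2j}(\cdots)\,dR$ against $\int_{\mR^m}r^{2k}(\cdots)dV(\ux)$; since both sides are linear and continuous in $f$ on the relevant space, and since $f\in L_1\cap C^n$ allows interchanging sum and integral, it suffices to match the two on a spanning set. Invoking Lemma~\ref{2keerR} reduces $\int_{SS(R)}$ to $\int_{SS}$ and makes the $R$-dependence completely explicit, after which the $R$-integral is an elementary Gamma-function computation matching the normalization in Pizzetti's formula (\ref{Pizzetti}) and the purely bosonic polar formula.

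The main obstacle I anticipate is not any single hard estimate but the bookkeeping of powers of $r$ and the justification of the Fubini-type interchanges: one must be careful that, although $\int_{SS(R)}f$ is defined via distributions, the closed form from Lemma~\ref{uitgewerktint} involves only $C^n$ data, so the $R$-integral $\int_0^\infty dR\,R^{M-1}(\cdots)$ converges at $R=\infty$ precisely because $f\in L_1(\mR^m)_{m|2n}$ and converges at $R=0$ because $M>0$ (this is exactly where the hypotheses $M>0$ and $f\in L_1\cap C^n$ enter). A secondary subtlety is the case $m=1$, flagged at the start of Section~\ref{superspheresection}, where $\mB^m$ degenerates to an interval; I would note that the computation goes through with the alternative normalization mentioned there, or simply restrict attention to $m\geq 2$ and remark that $m=1$ follows by the same argument up to the constant. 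Once the reduction to the bosonic polar formula per fermionic degree is in place, the proof is a short assembly of Lemma~\ref{uitgewerktint}, Lemma~\ref{2keerR}, and the classical identity.
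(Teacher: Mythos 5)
You assemble the same ingredients as the paper (Lemma \ref{2keerR} to rewrite $R^{M-1}\int_{SS,x}f(Rx)$ as $\int_{SS(R)}f$, then the closed form of Lemma \ref{uitgewerktint}), but the step that actually makes the theorem true is missing. After those substitutions the right-hand side becomes
\begin{equation*}
\int_0^\infty dR\,\sum_{j=0}^n\Bigl(\frac{\partial}{\partial R}\frac{1}{2R}\Bigr)^{j} R^{m-1}\int_{\mS^{m-1}}\int_B\frac{\uxb^{2j}}{j!}\,f(R\uxi,\uxb),
\end{equation*}
and the mechanism is not that the $j$-terms ``telescope to reproduce'' $\int_{\mR^m}dV(\ux)\int_B$: every term with $j\ge 1$ is an exact $R$-derivative, so its $R$-integral reduces to boundary terms, which vanish at $R=0$ because the surviving powers of $R$ have exponent at least $m-2n=M>0$ (this is precisely where $M>0$ and $f\in C^n$ enter) and at $R=\infty$ by the decay tied to integrability. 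Only the $j=0$ term survives, and that term is literally the classical polar-coordinate formula, giving $\int_{\mR^{m|2n}}f$. Your framing ``for each fixed fermionic degree the identity is the polar-coordinates formula'' is only correct for the top fermionic degree; for all lower degrees the left-hand side sees nothing (Berezin integration kills them), so what has to be proved there is a \emph{vanishing} statement, and the total-derivative/integration-by-parts structure is the missing idea.

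Your proposed substitute for this step --- verifying the identity on monomials $\ux^{2k}$ and extending by ``linearity and continuity on a spanning set'' --- does not work as stated. Monomials are not in $L_1(\mR^m)$, and the corresponding $R$-integrals $\int_0^\infty R^{M-1+2k+2j}\,dR$ diverge, so there is nothing to match (the Gamma-function evaluation you have in mind requires a Gaussian weight, as in equation (\ref{RmnSS}), which is a different statement). Moreover, no density/continuity argument is available under the hypotheses $f\in L_1(\mR^m)_{m|2n}\cap C^n(\mR^m)_{m|2n}$: the supersphere side involves $n$ pointwise radial derivatives restricted to spheres, so it is not continuous for any topology in which such a polynomial (or Gaussian-damped polynomial) family is dense and in which the $L_1$ side is also controlled. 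If you replace that paragraph by the observation that the $j\ge1$ terms are exact derivatives with vanishing boundary contributions, your argument collapses to the paper's proof.
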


\begin{proof}
We calculate the right-hand side using lemma \ref{2keerR} and lemma \ref{uitgewerktint}.
\begin{eqnarray*}
\int_{0}^\infty dR\, R^{M-1}\int_{SS,x}f(Rx)&=&\int_{0}^\infty dR\,\int_{SS(R)}f\\
&=&\int_{0}^\infty dR\sum_{j=0}^n\left[\int_{\mS^{m-1}}(\frac{\partial}{\partial r}\frac{1}{2r})^jr^{m-1}\int_B\frac{\uxb^{2j}}{j!}f\right]_{r=R}\\
&=&\int_{0}^\infty dR\sum_{j=0}^n(\frac{\partial}{\partial R}\frac{1}{2R})^j R^{m-1}\int_{\mS^{m-1}}\int_B\frac{\uxb^{2j}}{j!}f(R\uxi,\uxb)\\
&=&\int_{0}^\infty dR\,R^{m-1}\int_{\mS^{m-1}}\int_Bf(R\uxi,\uxb)\\
&=&\int_{\mR^{m|2n}}f.
\end{eqnarray*}
The terms for $j>0$ are zero because of partial integration and because $\lim_{R\to\infty}R^k(\frac{\partial}{\partial R})^lf=0$ for $k\le m-1$ and $\lim_{R\to 0}R^k(\frac{\partial}{\partial R})^lf=0$ for $0<M\le k$.
\end{proof}

\subsection{Green's theorem on the superball}

With the integrations over the supersphere and superball introduced in section \ref{superspheresection1} we can reconstruct the Green's theorem obtained in \cite{DBS5}. This way, we can generalize this theorem which previously only held for polynomials. We will prove this Green's theorem using the super Cauchy formula from theorem \ref{Cauchy}, which moreover makes the proof a lot shorter than in \cite{DBS5}. In this way we also find a link between the Green's theorem on the superball and the super Cauchy fomulas which was formerly unknown. This connection is clear in the bosonic case. Indeed, when we substitute $\alpha=H(\ux^2+1)$ in the bosonic Cauchy formula (\ref{boscauch2}), we find
\begin{equation*}
\int_{\mR^m} [(f\upx)H(\ux^2+1) g+fH(\ux^2+1) (\upx g)]dV(\ux)=2\int_{\mR^m}f\ux\delta(\ux^2+1) g\,dV(\ux),
\end{equation*}
which is equivalent to Green's theorem on the unit ball:
\begin{equation}
\label{bosSS}
\int_{\mB^m} [(f\upx)g+f(\upx g)]dV(\ux)=\int_{\mS^{m-1}}f\ux g\,dS(\ux).
\end{equation}

Now we generalize this to superspace. We first prove the following lemma
\begin{lemma}
\label{lemsuperDirac}
The Dirac operator acting on the Heaviside distribution yields
\begin{eqnarray*}
\px H(x^2+1) &=& 2x \delta(x^2+1)
\end{eqnarray*}
in distributional sense.
\end{lemma}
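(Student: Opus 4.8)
The plan is to deduce the lemma from a single general identity, a chain rule for the truncated Taylor series used in Definition~\ref{superDirac}. Namely, if $\phi$ is any sufficiently differentiable distribution in one real variable and one sets, exactly as in Definition~\ref{superDirac},
\[
\phi(x^2+1):=\sum_{j=0}^{n}\frac{\uxb^{2j}}{j!}\,\phi^{(j)}(\ux^2+1),
\]
then I claim
\[
\px\bigl[\phi(x^2+1)\bigr]=2x\,\phi'(x^2+1),
\]
where $\phi'(x^2+1)$ is again read off as the truncated Taylor series of $\phi'$, i.e. $\sum_{j=0}^{n}\frac{\uxb^{2j}}{j!}\phi^{(j+1)}(\ux^2+1)$. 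Since the displayed expressions for $H$ and $\delta$ in Definition~\ref{superDirac} are precisely $H(x^2+1)$ and $\delta(x^2+1)$ in this notation, and $H'=\delta$ in $\cD'(\mR)$, the lemma is the special case $\phi=H$.

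To prove the identity the plan is to split $\px=\upxb-\upx$ and treat the two summands separately, using that $\uxb^2=\sum_{j=1}^n{x \grave{}}_{2j-1}{x \grave{}}_{2j}$ is an even central element of $\Lambda_{2n}$, so it commutes with everything in sight and $\partial_{{x \grave{}}_l}$ acts on its powers by the ordinary Leibniz rule. For the bosonic part: since $\ux^2+1=1-\sum_k x_k^2$, the classical chain rule (valid in $\cD'$ because $\ux\mapsto 1-|\ux|^2$ is a submersion away from the origin, which does not meet $\{\ux^2+1=0\}$) gives $\partial_{x_k}\phi^{(j)}(\ux^2+1)=-2x_k\,\phi^{(j+1)}(\ux^2+1)$, hence $\upx\phi^{(j)}(\ux^2+1)=-2\ux\,\phi^{(j+1)}(\ux^2+1)$; pulling the (scalar-commuting) factor $\ux$ to the front and summing over $j$ yields $-\upx[\phi(x^2+1)]=2\ux\,\phi'(x^2+1)$. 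For the fermionic part, $\upxb$ annihilates the $\ux$-dependent factors, and from $\upxb\uxb^2=2\uxb$ together with the Leibniz rule one gets $\upxb\uxb^{2j}=2j\,\uxb\,\uxb^{2j-2}$, so that
\[
\upxb\bigl[\phi(x^2+1)\bigr]=2\uxb\sum_{i=0}^{n-1}\frac{\uxb^{2i}}{i!}\,\phi^{(i+1)}(\ux^2+1).
\]
Adding the two contributions gives $\px[\phi(x^2+1)]=2(\ux+\uxb)\phi'(x^2+1)=2x\,\phi'(x^2+1)$, provided the fermionic sum may be extended to $i=n$ without changing its value.

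That last point is the only step requiring genuine care, and it is exactly why the truncated Taylor series of Definition~\ref{superDirac} is compatible with $\px$: the missing term is $2\uxb\,\tfrac{\uxb^{2n}}{n!}\phi^{(n+1)}(\ux^2+1)$, and it vanishes because $\uxb^{2n}$ is a nonzero multiple of the top monomial ${x \grave{}}_1\cdots{x \grave{}}_{2n}\in\Lambda_{2n}$, which is annihilated by left multiplication with every ${x \grave{}}_j$, hence by $\uxb=\sum_j{x \grave{}}_j{e \grave{}}_j$. Once this is noted, the whole computation is bookkeeping with the graded Leibniz rule, and no distributional subtleties arise beyond the classical one-variable fact $H'=\delta$. (As a consistency check, the same computation with $\phi=\mathrm{id}$ recovers the expected $\px x^2=2x$.) Taking $\phi=H$ then yields $\px H(x^2+1)=2x\,\delta(x^2+1)$, which is the assertion.
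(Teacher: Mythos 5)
Your proof is correct and follows essentially the same route as the paper: split $\px=\upxb-\upx$, apply the classical chain rule to the bosonic part and the graded Leibniz rule to $\uxb^{2j}$, then reindex the fermionic sum and absorb the top term using $\uxb^{2n+1}=0$. Stating it as a general truncated-Taylor chain rule for arbitrary $\phi$ and then specializing to $\phi=H$ is only a mild repackaging of the paper's direct computation with $H$.
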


\begin{proof}
We calculate that in distributional sense
\begin{eqnarray*}
\px H(x^2+1)&=&-\upx\sum_{j=0}^nH^{(j)}(\ux^2+1)\frac{\uxb^{2j}}{j!}+\upxb\sum_{j=1}^nH^{(j)}(\uxb^2+1)\frac{\uxb^{2j}}{j!}\\
&=&2\ux\sum_{j=0}^nH^{(j+1)}(\ux^2+1)\frac{\uxb^{2j}}{j!}+2\uxb\sum_{j=1}^nH^{(j)}(1+\uxb^2)\frac{\uxb^{2j-2}}{(j-1)!}\\
&=&2\ux\sum_{j=0}^nH^{(j+1)}(\ux^2+1)\frac{\uxb^{2j}}{j!}+2\uxb\sum_{j=0}^nH^{(j+1)}(1+\uxb^2)\frac{\uxb^{2j}}{j!}\\
&=&2xH^{(1)}(x^2+1)\\
&=&2x\delta(x^2+1).
\end{eqnarray*}
\end{proof}

We then obtain the connection between Cauchy's formula and Green's theorem in superspace

\begin{theorem}
\label{Cauchysphere1}
For $f$ and $g$ in $C^n(\Omega)_{m|2n}\otimes \cC$, with $\Omega$ open and $\mB^{m}\subset \Omega\subset \mR^{m}$, the following holds
\begin{eqnarray*}
\int_{SB}[(f\px)g+f(\px g)]&=&-\int_{SS}fx g.
\end{eqnarray*}
\end{theorem}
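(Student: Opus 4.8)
The plan is to apply the super Cauchy formula of Theorem \ref{Cauchy} with the distribution $\alpha = H(x^2+1)$, which belongs to $\mathcal{E}'\otimes\Lambda_{2n}$ (after multiplying by a smooth cut-off that is identically $1$ on a neighbourhood of $\mathbb{B}^m$, which does not affect anything since $f,g$ are evaluated only near the unit ball and $\mathrm{supp}\, H(\ux^2+1)\cap\Omega\subset\mathbb{B}^m$). With this choice Theorem \ref{Cauchy} reads
\[
\int_{\mR^{m|2n}}\left[(f\widehat{\alpha}\px)g+f\alpha(\px g)\right]dV(\ux)=-\int_{\mR^{m|2n}}f(\alpha\px)g\,dV(\ux).
\]
Here the derivative of $\alpha$ has to be read as $\alpha\px = H(x^2+1)\px$, and because $H(x^2+1)$ is scalar-valued (it is an even element of $\Lambda_{2n}$ times distributions) the subtleties with the "hat" disappear: $f\widehat{\alpha}\px = (f\px)\alpha$ and $\alpha\px$ coincides, up to the sign convention for the right Dirac operator, with $\px\, H(x^2+1)$ computed in the distributional sense. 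So the left-hand side becomes $\int_{SB}[(f\px)g+f(\px g)]$ by Definition \ref{defdef}.

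The key computational input is Lemma \ref{lemsuperDirac}, which gives $\px H(x^2+1)=2x\,\delta(x^2+1)$ in distributional sense; since $H(x^2+1)$ is scalar and even, the right Dirac operator acting on it produces the same thing up to the sign built into the definition $\cdot\px = -\cdot\upxb-\cdot\upx$, and one checks that $H(x^2+1)\px = 2\delta(x^2+1)\,x$ (the factor $x$ now on the right). Substituting this into the right-hand side of the Cauchy formula yields
\[
-\int_{\mR^{m|2n}}f\,(2\delta(x^2+1)\,x)\,g\,dV(\ux)=-2\int_{\mR^{m|2n}}f\,\delta(x^2+1)\,x\,g\,dV(\ux)=-\int_{SS}fxg,
\]
where the last equality is exactly Definition \ref{defdef} of $\int_{SS}$ (the factor $2$ is absorbed). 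Combining the two sides gives the claimed identity.

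The main obstacle is bookkeeping rather than depth: one must be careful that the "non-derived" $\alpha$ appearing in $f\widehat{\alpha}\px$ really does reassemble into $(f\px)\alpha$ and that the scalar, even nature of $H(x^2+1)$ makes the anticommutation issues highlighted in the remark before Lemma \ref{fercauch} vanish, so that all three $\alpha$-terms can be read as ordinary multiplication by the distribution $H(x^2+1)$ or $\delta(x^2+1)$. A second point requiring a line of justification is the reduction from $\mR^{m|2n}$ to a compactly supported $\alpha$: since $f,g\in C^n(\Omega)_{m|2n}\otimes\cC$ are only assumed defined on $\Omega\supset\mathbb{B}^m$, one multiplies $H(x^2+1)$ by a smooth bump equal to $1$ near $\mathbb{B}^m$ and supported in $\Omega$; this changes neither $\int_{SB}$ nor $\int_{SS}$ because both distributions are supported in $\overline{\mathbb{B}^m}\subset\Omega$, and it is harmless that $f,g$ are merely $C^n$ (not $C^\infty$) because Theorem \ref{Cauchy}'s proof only uses finitely many $\ux$-derivatives once $\alpha$ is fixed — indeed $\delta^{(j)}(\ux^2+1)$ for $j\le n$ pairs against at most $n$ derivatives of $f$ and $g$, matching the regularity in Definition \ref{defdef}. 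Once these formalities are in place the theorem follows immediately.
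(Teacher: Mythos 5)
Your proposal is correct and follows essentially the same route as the paper: apply Theorem \ref{Cauchy} with $\alpha=H(x^2+1)$, use Lemma \ref{lemsuperDirac} (noting that for this even, scalar $\alpha$ the right and left Dirac actions agree, so $\alpha\px=2x\,\delta(x^2+1)$), and observe that only $n$ derivatives are needed so $C^n$ regularity suffices. The only superfluous step is the cut-off: $H(x^2+1)$ is already supported in $\overline{\mB^m}\subset\Omega$, hence lies in $\mathcal{E}'\otimes\Lambda_{2n}$ as it stands.
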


\begin{proof}
This theorem is  a special case of theorem \ref{Cauchy} using $\alpha=H(x^2+1)$, lemma \ref{lemsuperDirac} and taking into account that only $n$ derivatives are necessary.
\end{proof}

\begin{remark} 
Comparing theorem \ref{Cauchysphere1} with equation (\ref{bosSS}) we see that $x$ behaves as the outer normal on the supersphere.
\end{remark}

Also for the non dimensionally correct Cauchy formulae it is possible to construct an associated Cauchy-Pompeiu formula. We demonstrate this for the superball.

\begin{theorem}[Cauchy-Pompeiu on the supersphere]
For $f\in C^1(\Omega)_{m|2n}$ with $\mB^m\subset \Omega$ the following holds
\begin{equation*}
\int_{SS}\nu_1^{m|2n}(x-y)xf(x)+\int_{SB}\nu_1^{m|2n}(x-y)\px f(x) 
\\=
\begin{cases}
-f(y)&\uy\in\mathring{\mB^m}\\
0&\uy\in\Omega\backslash \mB^m
\end{cases}
\end{equation*}
\end{theorem}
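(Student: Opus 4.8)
The plan is to mimic the proof of the dimensionally correct Cauchy--Pompeiu formula (Theorem~\ref{CP}) but now starting from the non dimensionally correct Cauchy formula attached to the superball, i.e.\ from Theorem~\ref{Cauchysphere1} rather than from Corollary~\ref{veralgCauchy}. Concretely, I would apply Theorem~\ref{Cauchy} with the distribution $\alpha=H(x^2+1)$ and with the function $f$ replaced by the shifted Cauchy kernel $\nu_1^{m|2n}(x-y)$ acting from the right, so that $f\widehat{\alpha}\px$ in Theorem~\ref{Cauchy} becomes, using Lemma~\ref{lemsuperDirac}, the term $2\,\nu_1^{m|2n}(x-y)\,x\,\delta(x^2+1)$ which integrates to $\int_{SS}\nu_1^{m|2n}(x-y)xf(x)$, while the $f\alpha(\px g)$ term integrates to $\int_{SB}\nu_1^{m|2n}(x-y)\px f(x)$. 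The right-hand side of Theorem~\ref{Cauchy} produces $-\int_{\mR^{m|2n}}\nu_1^{m|2n}(x-y)\,\bigl(\px\nu_1^{m|2n}\bigr)(x-y)\cdot$ — wait, more precisely it produces the term where the super Dirac operator hits $\nu_1^{m|2n}$ itself, and by the defining property $\px\nu_1^{m|2n}(x-y)=\delta(x-y)$ recalled in the Preliminaries this collapses to $\mp f(y)$ by the sampling property of the super delta $\delta(x-y)=\delta(\ux-\uy)\frac{\pi^n}{n!}(\uxb-\uyb)^{2n}$.

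The key steps, in order, would be: (1) record that $\nu_1^{m|2n}(x-y)$ is smooth away from $\uy=\ux$ and combine this with the fact that only $n$ fermionic derivatives ever occur, so that the distributional manipulations of Theorem~\ref{Cauchy} are legitimate even though $\nu_1^{m|2n}$ is only locally integrable at the diagonal (this is the standard excision-of-a-small-ball argument used in the bosonic Cauchy--Pompeiu theorem); (2) substitute $\alpha=H(x^2+1)$ into Theorem~\ref{Cauchy}, invoke Lemma~\ref{lemsuperDirac} to rewrite $\nu_1^{m|2n}(x-y)\,\widehat{H(x^2+1)}\,\px = 2\,\nu_1^{m|2n}(x-y)\,x\,\delta(x^2+1)$, and translate the resulting superspace integrals into $\int_{SS}$ and $\int_{SB}$ via Definitions~\ref{superDirac} and~\ref{defdef}; (3) evaluate the right-hand term $-\int_{\mR^{m|2n}}\nu_1^{m|2n}(x-y)\bigl(H(x^2+1)\px\bigr)f$ by moving the Dirac operator onto the kernel using the fact that it is the fundamental solution, i.e.\ $\bigl(\nu_1^{m|2n}(x-y)\,\px\bigr) = \delta(x-y)$ in the right-acting sense, and then integrate against $H(x^2+1)f(x)$ over $\mR^{m|2n}$; (4) carry out the delta-sampling: $\int_{\mR^{m|2n}}\delta(x-y)H(x^2+1)f(x) = H(y^2+1)f(y)$, and observe that $H(y^2+1)$ (as the characteristic function of the superball, formally extended via Definition~\ref{superDirac}) equals $1$ when $\uy\in\mathring{\mB^m}$ and $0$ when $\uy\in\Omega\backslash\mB^m$, which gives exactly the two cases claimed.

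The main obstacle I expect is step~(1) together with the precise bookkeeping in step~(4): one has to be careful that $H(x^2+1)$ is a formal Taylor expansion $\sum_j \frac{\uxb^{2j}}{j!}\delta^{(j-1)}(\ux^2+1)$ (plus the leading Heaviside term) rather than an honest characteristic function, so the delta-sampling $\int_{\mR^{m|2n}}\delta(x-y)\,H(x^2+1)\,f(x)$ has to be performed term by term in the Berezin integral, and one must check that the fermionic part $\frac{\pi^n}{n!}(\uxb-\uyb)^{2n}$ of $\delta(x-y)$ interacts correctly with the $\uxb^{2j}$ prefactors in $H(x^2+1)$ — in particular that, after the Berezin integral, the only surviving term is the bosonic Heaviside $H(\ux^2+1)$ evaluated at $\ux=\uy$, giving $1$ or $0$ according to $|\uy|<1$ or $|\uy|>1$. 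The smoothness/excision issue near the diagonal is handled exactly as in \cite{DBSCauchy}, so as the authors note, the cleanest presentation is simply to say the proof runs parallel to that of Theorem~\ref{CP} with $\alpha=H(x^2+1)$ in place of $\beta=\exp(\uxb^2/2)-1$, invoking Lemma~\ref{lemsuperDirac} and Theorem~\ref{Cauchysphere1} at the appropriate point.
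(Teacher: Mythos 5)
Your strategy -- apply Theorem \ref{Cauchy} with $\alpha=H(x^2+1)$, use Lemma \ref{lemsuperDirac} for the term where $\alpha$ is differentiated, and use the fundamental-solution property for the term where the kernel is differentiated -- does yield the stated formula, but it is not the route the paper takes, and as written it has two weak points. First, a bookkeeping slip: in Theorem \ref{Cauchy} the hatted term $(f\widehat{\alpha}\px)g$ is by definition the one where $\alpha$ is \emph{not} differentiated, so with $f=\nu_1^{m|2n}(x-y)$ it is this term that produces the $\delta(x-y)$--sampling (hence the $-f(y)$ or $0$), while it is the right-hand side $-\int f(\alpha\px)g$ to which Lemma \ref{lemsuperDirac} applies and which produces the supersphere term $\int_{SS}\nu_1^{m|2n}(x-y)xf$. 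You have these two attributions interchanged; the three terms are the same, so the final identity survives, but the derivation as stated misreads the notation. Second, and more substantively, Theorem \ref{Cauchy} is stated for $f,g\in C^\infty$, so plugging in the singular kernel is not licensed by its hypotheses; your step (1) defers this to ``the standard excision argument,'' which now has to be redone in the presence of the terms $\uxb^{2j}\delta^{(j-1)}(\ux^2+1)/j!$ sitting inside $\alpha$. That is exactly the work the paper avoids.

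The paper's proof instead splits $H(x^2+1)=H(\ux^2+1)+C(x)$, where $C(x)$ collects the fermionic correction terms and is supported on $\mS^{m-1}$. For the purely bosonic Heaviside part it simply invokes the already-proven Cauchy--Pompeiu formula (Theorem \ref{CP}) with $\Sigma=\mB^m$, so the excision near the diagonal is never repeated; for the remainder it applies Theorem \ref{Cauchy} with $\alpha=C(x)$, which is legitimate because $\uy\notin\mS^{m-1}$, hence $\nu_1^{m|2n}(x-y)$ is smooth and two-sided monogenic on the support of $C(x)$ and the kernel-differentiation term drops out. Adding the two identities and using $2x\delta(x^2+1)=2\ux\delta(\ux^2+1)+\px C(x)$ (Lemma \ref{lemsuperDirac}) gives the theorem. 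So the paper buys a short proof by recycling Theorem \ref{CP} and isolating the new distributional ingredients on the sphere, away from the singularity; your direct approach is viable but forces you either to rerun the excision argument of \cite{DBSCauchy} with the extra $\delta^{(j)}$ terms present, or, as you suggest at the end, to rerun that proof verbatim with a new choice of $\alpha$ -- in either case you should make explicit the key fact that the singular point $\ux=\uy$ never meets $\mS^{m-1}$, since that is what makes every manipulation involving $\delta^{(j)}(\ux^2+1)$ against the kernel harmless.
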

\begin{proof}
For this theorem it is useful to split the Cauchy formula in theorem \ref{Cauchysphere1} into a dimensionally correct part and the remainder, by putting
\begin{eqnarray*}
H(x^2+1)&=&H(\ux^2+1)+C(x).
\end{eqnarray*}

Using lemma \ref{lemsuperDirac} we obtain that $2x\delta(x^2+1)=2\ux\delta(\ux^2+1)+\px C(x)$. Now the distribution $C(x)$ has as support $\mS^{m-1}$. Because $\px\nu_1^{m|2n}(x-y)=0$ everywhere for $\ux\in\mS^{m-1}$ (as $\uy\not\in\mS^{m-1}$), theorem \ref{Cauchy} for $\alpha=C(x)$ implies
\begin{eqnarray*}
\int_{\mR^{m|2n}}\nu_1^{m|2n}(x-y)C(x)(\px g)=-\int_{\mR^{m|2n}}\nu_1^{m|2n}(x-y)(C(x)\px)g.
\end{eqnarray*}
When we rewrite theorem \ref{CP} for $\Sigma=\mB^m$ we find
\begin{equation*}
\int_{\mR^{m|2n}}\nu_1^{m|2n}(x-y)2\ux\delta(\ux^2+1)g(x)+\int_{\mR^{m|2n}}\nu_1^{m|2n}(x-y)H(\ux^2+1)(\px g(x))
\\=\begin{cases}
-g(y)&\uy\in\mathring{\mB^m}\\
0&\uy\in\Omega\backslash \mB^m
\end{cases}
\end{equation*}
Putting these two equations together we find the theorem.
\end{proof}

With the following proposition we can easily re-derive the fact that spherical harmonics of different degree are orthogonal with respect to integration over the supersphere (part of theorem \ref{integorth}). This result was obtained in \cite{DBS5} for the case $M \not \in -2\mN$ and for polynomial functions.
\begin{proposition}
\label{Green2}
For $f$ and $g$ in $C^{n+1}(\Omega)_{m|2n}$ and $\mB^m\subset \Omega$, the following relation holds
\begin{eqnarray*}
\int_{SS}(f\mE g-(\mE f)g)&=&\int_{SB}(f\Delta g-(\Delta f)g).
\end{eqnarray*}
\end{proposition}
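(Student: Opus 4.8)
The plan is to mimic the classical bosonic derivation of Green's second identity: start from a symmetric expression involving the super Laplace operator, rewrite it as a divergence, and then apply the Cauchy--Green link from Theorem~\ref{Cauchysphere1} (or, equivalently, Theorem~\ref{Cauchy} with $\alpha=H(x^2+1)$). The key algebraic observation is that $\Delta=\px^2$ and that the super Euler operator satisfies $\mE=x\px-\Gamma$, together with the identity $x^2\Delta=\Delta_{LB}+\mE(M-2+\mE)$ from~(\ref{defLB}); but the cleanest route avoids $\Gamma$ altogether and works directly with $\px$.

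First I would apply Theorem~\ref{Cauchysphere1} twice, once to the pair $(f\px, g)$ and once to the pair $(f, \px g)$, i.e. replace $f$ by $f\px$ in the first instance and $g$ by $\px g$ in the second. This yields
\begin{eqnarray*}
\int_{SB}[((f\px)\px)g+(f\px)(\px g)]&=&-\int_{SS}(f\px)xg,\\
\int_{SB}[(f\px)(\px g)+f(\px^2 g)]&=&-\int_{SS}fx(\px g).
\end{eqnarray*}
Subtracting the two displays makes the mixed term $(f\px)(\px g)$ cancel, and since $\px^2=\Delta$ we are left with
\begin{eqnarray*}
\int_{SB}[(f\Delta)g-f(\Delta g)]&=&\int_{SS}[fx(\px g)-(f\px)xg].
\end{eqnarray*}
Since $f\Delta=\Delta f$ (the Laplacian is two-sided, being the square of the two-sided Dirac operator), the left-hand side is already $\int_{SB}((\Delta f)g-f(\Delta g))$, i.e. $-\int_{SB}(f\Delta g-(\Delta f)g)$, which matches the claimed right-hand side up to sign. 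It then remains to identify the boundary term $\int_{SS}[fx(\px g)-(f\px)xg]$ with $\int_{SS}(f\mE g-(\mE f)g)$.

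The main obstacle is thus this last identification on the supersphere. On $\mS^{m-1}$ one has $r=1$, hence $x^2=-1$ on the bosonic part; the point is that when integrating over $SS$ the radial ``position'' variable $x$ acts, after pairing with the surface element $x$ on the other side, effectively like $-1$, so that $x(\px g)$ and $(\mE g)$ differ only by terms that are radial derivatives which the supersphere integral annihilates. Concretely, I would use $\mE = x\px - \Gamma$ and the relation $x^2 = -1$ together with the fact (implicit in Lemma~\ref{uitgewerktint} and Theorem~\ref{connectieBer}) that $\int_{SS}$ kills radial-derivative terms, or alternatively expand $f$ and $g$ in the Fischer decomposition~(\ref{superFischer}) $f=\sum x^{2j}\cH_k$, on which $\mE$ acts as multiplication by $2j+k$ and on which $x\px$ can be computed explicitly via~(\ref{relationslaplace}) and the definition of $\Gamma$; the cross terms $x^{2j}x^{2j'}$ with the surface factor reduce to $\int_{SS}x^{2(j+j')}\cH_k\cH_{k'}$, which by~(\ref{RmnSS}) and the orthogonality in Theorem~\ref{integorth} pick out matched degrees, on which the scalar factor $\mE$ versus $x\px$ discrepancy is exactly the homogeneity degree. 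Carrying out this bookkeeping is routine once the Cauchy-formula step above is in place, so I expect the write-up to spend most of its effort verifying $\int_{SS}fx(\px g)=\int_{SS}f(\mE g)$ (and the symmetric statement), after which the proposition follows by the subtraction argument.
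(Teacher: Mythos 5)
Your first step is exactly the paper's proof: apply Theorem \ref{Cauchysphere1} to the pairs $(f\px,g)$ and $(f,\px g)$, subtract so that the mixed term $(f\px)(\px g)$ cancels, and thereby equate $\int_{SS}[fx(\px g)-(f\px)xg]$ with the superball integral of the Laplacian difference. The divergence is in the last step, and there your plan has a genuine gap. You propose to establish the identification by proving $\int_{SS}fx(\px g)=\int_{SS}f(\mE g)$ (and its mirror image) separately, arguing that $x\px g$ and $\mE g$ differ only by ``radial derivative terms'' killed by $\int_{SS}$. This is not so: by (\ref{EulerGamma}), $x\px-\mE=\Gamma$ is the \emph{angular} Gamma operator, which is Clifford-bivector valued, not a radial derivative, and $\int_{SS}f\,\Gamma g$ does not vanish for general scalar $f,g$; already purely bosonically, $f=x_1$, $g=x_2$ gives $\int_{\mS^{m-1}}f\,\Gamma g=-e_1e_2\int_{\mS^{m-1}}x_1^2\neq 0$. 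What does hold is $\int_{SS}\Gamma(h)=0$ (Proposition \ref{LB0}), i.e. the vanishing occurs only for the right combination, not term by term, so your proposed auxiliary identity is false and the Fischer-decomposition bookkeeping built on it cannot be carried out.

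The paper's finish is a one-liner which you should adopt instead: write $x\px=\mE+\Gamma$ (and use the analogous right-sided identity for $(f\px)x$); since $f$ and $g$ are scalar-valued and $\mE$, $\Delta$ are scalar operators while $\Gamma$ is not, the $\Gamma$-contributions are precisely the non-scalar (bivector) part of the identity obtained after subtraction, so comparing scalar parts yields the proposition at once --- equivalently, the $\Gamma$-terms combine into $\int_{SS}[f\Gamma g+(\Gamma f)g]=\int_{SS}\Gamma(fg)=0$, which is the form in which Proposition \ref{LB0} can legitimately be invoked. Note also that the sign you leave open (``up to sign'') is not repaired by the supersphere identification: the scalar part of $fx(\px g)-(f\px)xg$ is exactly $f\mE g-(\mE f)g$, so the sign must be settled by carefully tracking the conventions for the right Dirac operator (the extra minus sign in its definition and the identity $(f\px)\px=\Delta f$) rather than deferred to the final step.
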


\begin{proof}
Using theorem \ref{Cauchysphere1} we find
\begin{eqnarray*}
\int_{SB}(\Delta f)g+(f\px)(\px g)&=&-\int_{SS}(f\px)xg
\end{eqnarray*}
and
\begin{eqnarray*}
\int_{SB} (f\px)(\px g)+f(\Delta g)&=&-\int_{SS}fx(\px g).
\end{eqnarray*}
If we subtract these two equations we find
\begin{eqnarray*}
\int_{SS}fx(\px g)-(f\px)xg&=&\int_{SB}(f\Delta g-(\Delta f)g).
\end{eqnarray*}
The theorem now follows from equation (\ref{EulerGamma}) and the fact that $\mE$ and $\Delta$ are scalar operators, while $\Gamma$ is not.
\end{proof}

Finally, we prove that integration over the supersphere behaves as expected with respect to the action of $\Gamma$.

\begin{proposition}
For $f\in C^{n+2}(\Omega)_{m|2n}\otimes\cC$ with $\Omega$ open and $\mS^{m-1}\subset\Omega\subset \mR^m$ one has
\label{LB0}
\begin{eqnarray*}
\int_{SS}\Gamma f&=&0 \qquad \mbox{and} \qquad \int_{SS}\Delta_{LB} f=0.
\end{eqnarray*}
\end{proposition}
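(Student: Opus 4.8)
The plan is to reduce both identities to the Pizzetti formula (\ref{Pizzetti}) — or equivalently to the closed form in Theorem \ref{intPiz} — and to exploit the scalar/operator structure already used in Proposition \ref{Green2}. Since $\int_{SS}$ is defined on $C^n$-functions but involves only finitely many derivatives, it is enough to establish both equalities on polynomials, because the operators $\Gamma$ and $\Delta_{LB}$ are (at most) second order and the closed form in Lemma \ref{uitgewerktint} depends continuously on a bounded number of derivatives of $f$ at $r=1$; so I would first reduce to $f=R_k\in\cP_k$ by linearity and by the Fischer-type expansions. Actually the cleanest route: prove it for $f\in\cP_k$ and then remark that the general $C^{n+2}$ case follows since the integral over the supersphere of $\Gamma f$ and $\Delta_{LB}f$, read off from Lemma \ref{uitgewerktint}, only sees the jet of $f$ of order $\le n+2$ along $\mS^{m-1}$, and this localizes exactly as in Theorem \ref{intPiz}.

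For the statement $\int_{SS}\Gamma f=0$: write $\Gamma=x\px-\mE$ from (\ref{EulerGamma}). On $f\in\cP_k$ we have $\mE f=kf$, so $\Gamma f=x\px f-kf$, which is an element of $\cP_k\otimes\cC$. Now apply the Pizzetti formula: $\int_{SS}\Gamma f=\sum_{j}c_j(\Delta^j\Gamma f)(0)$, and only the $j=0$ term survives a homogeneity argument unless $k$ is even — but $\Gamma f$ is homogeneous of degree $k$ (since $x\px$ preserves degree), hence $(\Delta^j\Gamma f)(0)=0$ unless $2j=k$, and even then $(\Delta^{k/2}\Gamma f)(0)$ must be compared. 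The slicker way is to use Theorem \ref{Cauchysphere1} / Proposition \ref{Green2}: subtracting the two displayed Green identities in the proof of Proposition \ref{Green2} with $g=1$ (so $\px g=0$, $\mE g=0$, $\Delta g=0$) gives $\int_{SS}(-f\px)x\cdot 1=\int_{SB}(-(\Delta f)\cdot 1)$, i.e. $\int_{SS}(f\px)x=\int_{SB}\Delta f$. Pairing this with the analogue obtained by taking $f=1$ yields $\int_{SS}x(\px g)=\int_{SB}\Delta g$, and then $\int_{SS}\Gamma g=\int_{SS}(x\px g-\mE g)=\int_{SB}\Delta g-\int_{SS}\mE g$; but $\int_{SS}\mE g=\int_{SS}(x\px -\Gamma)g$... so I must instead isolate $\Gamma$ by using that $\mE$ and $\Delta$ are \emph{scalar} while $\Gamma$ is not, exactly the remark closing Proposition \ref{Green2}. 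Concretely, from the boxed identity in that proof, $\int_{SS}(fx(\px g)-(f\px)xg)=\int_{SB}(f\Delta g-(\Delta f)g)$; put $f=1$: $\int_{SS}(x(\px g)-\px(xg)\cdot\!)$ — careful, $(1\cdot\px)=\px$ acting right — gives $\int_{SS}(x\px g-(\px x)g-\text{(reordering)})$. Using $\px x=M=(x\px)$ (the scalar identity from the preliminaries) and $x(\px g)=(x\px)g+\Gamma g$-type manipulations, the left side collapses to a multiple of $\int_{SS}\Gamma g$ plus scalar terms that cancel against the right side when $g\mapsto 1$ on the other slot. I expect the bookkeeping of left/right actions of $\px$ and the non-commutativity of $\Gamma$ with $x$ to be the main obstacle; I would handle it by working componentwise in the monogenic decomposition $\cP_k\otimes\cC=\bigoplus_i x^i\cM_{k-i}$ of Lemma \ref{cliffordfischerdecomp}, on which $\Gamma$ acts by a known scalar on each summand (the standard eigenvalue $-i$ on $x^i\cM_{k-i}$ in Clifford analysis, here with $M$ in place of $m$).

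For $\int_{SS}\Delta_{LB}f=0$: this now follows almost formally. By (\ref{GammaLB}), $\Delta_{LB}=\Gamma(M-2-\Gamma)$, and on the summand $x^i\cM_{k-i}$ the operator $\Gamma$ acts as a scalar $\gamma_{i}$, hence $\Delta_{LB}$ acts as the scalar $\gamma_i(M-2-\gamma_i)$; therefore $\Delta_{LB}f$ is again a sum of such pieces and $\int_{SS}\Delta_{LB}f=\sum_i \gamma_i(M-2-\gamma_i)\int_{SS}(x^i\cM_{k-i}\text{-part})$. Alternatively, and more robustly, use (\ref{defLB}): $\Delta_{LB}f=x^2\Delta f-\mE(M-2+\mE)f$. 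Apply $\int_{SS}$: by the defining property $\int_{SS}(x^2 h)=-\int_{SS}h$ of the Pizzetti integral (first bullet of Theorem \ref{uniciteitgeval}, valid for all $M$ once we invoke the uniqueness extension announced for section \ref{nietstand}, or simply check it from the closed form), and on $f\in\cP_k$, $\mE(M-2+\mE)f=k(M-2+k)f$, we get $\int_{SS}\Delta_{LB}f=-\int_{SS}\Delta f-k(M-2+k)\int_{SS}f$. Now $\Delta f\in\cP_{k-2}$, and the Pizzetti coefficients satisfy the recursion making $\int_{SS}\Delta f=-k(k+M-2)\int_{SS}f$ — this is precisely the content of the relation (\ref{relationslaplace}) together with the first property of $T$; substituting gives $\int_{SS}\Delta_{LB}f=k(k+M-2)\int_{SS}f-k(M-2+k)\int_{SS}f=0$. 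This second route avoids the eigenvalue computation entirely and only uses the two algebraic properties of Pizzetti integration already recorded, so I would write the proof along these lines, with the $\Gamma$-statement deduced either from its componentwise scalar action or, symmetrically, from the $f=1$ specialization of the boxed identity in Proposition \ref{Green2}.
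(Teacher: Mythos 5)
You never actually complete the proof of $\int_{SS}\Gamma f=0$, which is the crux of the proposition. The route you sketch and then abandon is precisely the paper's: apply Theorem \ref{Cauchysphere1} with left function $1$ and right function $\px f$ (so $1\px=0$ and none of the left/right bookkeeping you worry about arises), giving $\int_{SS}x\px f=-\int_{SB}\Delta f$; then write $x\px=\Gamma+\mE$, reduce to scalar-valued $f$ by $\cC$-linearity, and observe that $\mE f$ and $\Delta f$ are scalar while $\Gamma f$ has no scalar part, so the non-scalar part of the identity yields $\int_{SS}\Gamma f=0$ --- note that the sign you fretted over is irrelevant for this comparison. The fallback you actually commit to, working componentwise in $\cP_k\otimes\cC=\bigoplus_i x^i\cM_{k-i}$, has two genuine defects: Lemma \ref{cliffordfischerdecomp} requires $M\not\in-2\mN$, whereas Section \ref{superspheresection} (and hence this proposition) explicitly covers all $M$ with $m\not=0$; and the eigenvalue of $\Gamma$ on $x^i\cM_{k-i}$ is not $-i$ (it is $-(k-i)$ for $i$ even and $k-i+M-1$ for $i$ odd), nor do you verify that the resulting weighted combination of supersphere integrals vanishes. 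Incidentally, the Pizzetti computation you abandoned does close quickly: $\Gamma$ commutes with $\Delta$ and $(\Gamma g)(0)=0$ for every polynomial $g$, so every term of (\ref{Pizzetti}) applied to $\Gamma f$ vanishes.

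Your treatment of $\int_{SS}\Delta_{LB}f=0$ via (\ref{defLB}) is a genuinely different and attractive route --- the paper gets it in one line from (\ref{GammaLB}), $\Delta_{LB}=\Gamma(M-2-\Gamma)$, once the $\Gamma$ statement is known --- and your computation on $\cP_k$ is correct, provided the recursion $\int_{SS}\Delta R_k=-k(k+M-2)\int_{SS}R_k$ is checked directly on the Pizzetti coefficients (using $1/\Gamma(z)=z/\Gamma(z+1)$), since deriving it from (\ref{relationslaplace}) through the Fischer decomposition would again exclude $M\in-2\mN$. The remaining gap is your reduction of the $C^{n+2}$ statement to polynomials: ``it localizes exactly as in Theorem \ref{intPiz}'' is an assertion, not an argument. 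You would need density of polynomials, together with all derivatives up to order $n+2$, on a compact neighbourhood of $\mS^{m-1}$, plus the continuity of $f\mapsto\int_{SS}\Gamma f$ and $f\mapsto\int_{SS}\Delta_{LB}f$ read off from Lemma \ref{uitgewerktint}. This can be made rigorous, but the paper bypasses it entirely because Theorem \ref{Cauchysphere1} is already stated for differentiable, non-polynomial functions; I recommend you follow that route.
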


\begin{proof}
It is sufficient to prove the lemma for $f\in C^{n+2}(\Omega)_{m|2n}$. Using equation (\ref{EulerGamma}) and  theorem \ref{Cauchysphere1} we find
\begin{eqnarray*}
\int_{SS}\Gamma f+\mE f&=&\int_{SS}x\px f\\
&=&-\int_{SB}\px\px f\\
&=&-\int_{SB}\Delta f.
\end{eqnarray*}
The first part of the proposition now follows from the fact that $\mE$ and $\Delta$ are scalar while $\Gamma$ is not. The second part follows immediately from equation (\ref{GammaLB}).
\end{proof}

\label{greenball}

\subsection{Other integrations over the supersphere}
\label{nietstand}

In this section we show how the integration over the supersphere we introduced in section \ref{superspheresection1} can be uniquely defined as a functional on $\cP$.

\begin{definition}
The space of integrations over the supersphere $I_{m|2n}$ is the space of all linear functionals $\phi:\cP\to\mR$, with the property that for every $f(x)\in\cP$ 
\begin{equation}
\label{vwint}
\begin{cases}
i.&\phi(x^2f(x))=-\phi(f(x))\\
ii.&\phi(f(g.x))=\phi(f(x))\, ,\;\;\forall g\in\,SO(m)\times Sp(2n).
\end{cases}
\end{equation}
\end{definition}

In \cite{DBE1} it was proven that for $M\not\in-2\mN$, $I_{m|2n}$ is $n+1$ dimensional. Theorem \ref{uniciteitgeval} states that the property $\phi(H_kH_l)=0$ when $k\not= l$, then leaves a unique (up to a constant) integration on superspace. This is given by the Pizzetti formula (\ref{Pizzetti}). It was also proven in \cite{DBE1} that for $M\not\in-2\mN$ there is a unique basis of $I_{m|2n}$, $\{ \int_{SS,j},j=0..n\}$ such that $\int_{SS,0}=\int_{SS}$ and
\begin{eqnarray}
\label{bijkomendebepint}
\int_{SS,j}f_{i,0,0}=\delta_{ij}\;\;\;1\le i,j\le n,
\end{eqnarray}
with $f_{i,0,0}$ from lemma \ref{polythm} and theorem \ref{decompintoirreps}. We can easily see that the functional
\begin{eqnarray*}
\psi_j:P(x)& \to &\int_{SS}f_{j,0,0} P(x)
\end{eqnarray*}
satisfies conditions (\ref{vwint}).  Now, $\int_{SS}f_{i,0,0}f_{j,00}=0$ for $j\not= i$ (theorem \ref{integorth}) and $\int_{SS}f_{i,0,0}f_{i,00}\not=0$ because otherwise $\int_{SS}f_{i,0,0}=\int_{SS}$, see equation (\ref{bijkomendebepint}). So we find $\psi_j=\int_{SS,j}$ up to a constant.

Now in the case $M\in-2\mN$ we can still define integration over the supersphere by the Pizzetti formula (\ref{Pizzetti}). This still has the properties (\ref{vwint}) and the spherical harmonics of different degree are still orthogonal, see \cite{DBS5} (this also follows immediately from proposition \ref{Green2}). It is clear that in that case the first $-M/2$ terms in the summation (\ref{Pizzetti}) are zero. In the following we will search all the functionals which satisfy (\ref{vwint}) and prove that $\int_{SS}$ is still the only one with the property that spherical harmonics of different degree are orthogonal. The methods that we will use are also valid in the case $M\not\in-2\mN$, so the results hold for every superdimension with $m\not= 0$.

\begin{lemma}
\label{dimensiespace}
$I_{m|2n}$ is a finite-dimensional vectorspace of dimension $n+1$.
\end{lemma}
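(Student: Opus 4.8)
The goal is to show $I_{m|2n}$ is a vector space of dimension exactly $n+1$. That it is a vector space is immediate, since conditions $i.$ and $ii.$ in (\ref{vwint}) are linear in $\phi$. The plan is to produce an explicit isomorphism between $I_{m|2n}$ and a space one understands, namely the fermionic analogue on $\Lambda_{2n}$, or more directly to show that any $\phi\in I_{m|2n}$ is determined by its values on a set of $n+1$ representatives and that these values can be chosen freely.

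\emph{Upper bound.} First I would exploit the super Fischer decomposition (Lemma \ref{scalFischer}): since $M$ may lie in $-2\mN$, I should be careful, but the calculation (\ref{relationslaplace}) together with the decompositions in Theorem \ref{decompintoirreps} still lets me write every $P\in\cP$ as a sum of terms $x^{2t}f_{l,p,q}\cH^b_p\otimes\cH^f_q$ (including the $l=0$ pieces $x^{2t}\cH^b_p\otimes\cH^f_q$). Condition $i.$ says $\phi(x^2f)=-\phi(f)$, so $\phi(x^{2t}R)=(-1)^t\phi(R)$; hence $\phi$ is determined by its values on the homogeneous harmonic-type pieces $f_{l,p,q}\cH^b_p\otimes\cH^f_q$. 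Condition $ii.$, $SO(m)\times Sp(2n)$-invariance, forces $\phi$ to vanish on any irreducible component on which the group acts nontrivially without fixed vectors; the only pieces surviving are the scalar ones, i.e.\ $p=q=0$, which by Lemma \ref{polythm} and Theorem \ref{decompintoirreps} are precisely the $f_{l,0,0}$ for $l=0,\dots,n$ (with $f_{0,0,0}=1$). This uses the standard fact that $\cH_k^b$ with $k\ge 1$ has no $SO(m)$-invariant vectors and $\cH_q^f$ with $q\ge 1$ has no $Sp(2n)$-invariant vectors, so $\phi$ kills $\cH^b_p\otimes\cH^f_q$-isotypic components for $(p,q)\ne(0,0)$ after averaging over the group. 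Thus $\phi$ is determined by the $n+1$ numbers $\phi(f_{0,0,0}),\dots,\phi(f_{n,0,0})$, giving $\dim I_{m|2n}\le n+1$.

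\emph{Lower bound.} For the reverse inequality I would exhibit $n+1$ linearly independent functionals. The Pizzetti functional $\int_{SS}$ from (\ref{Pizzetti}) lies in $I_{m|2n}$ (conditions $i.$ and $ii.$ hold, as noted after Theorem \ref{uniciteitgeval} and following from Proposition \ref{Green2}), and more generally each $\psi_j:P\mapsto\int_{SS}f_{j,0,0}P$ satisfies (\ref{vwint}): invariance $ii.$ holds because $f_{j,0,0}$ is $SO(m)\times Sp(2n)$-invariant, and condition $i.$ holds because $\int_{SS}(x^2f_{j,0,0}P)=-\int_{SS}(f_{j,0,0}P)$ by the corresponding property of $\int_{SS}$. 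Linear independence of $\psi_0,\dots,\psi_n$ follows from the orthogonality Theorem \ref{integorth}: $\psi_j(f_{i,0,0})=\int_{SS}f_{j,0,0}f_{i,0,0}$ vanishes for $i\ne j$ and is nonzero for $i=j$ (the latter because $\int_{SS}f_{i,0,0}\ne\int_{SS}$, cf.\ (\ref{bijkomendebepint})), so the matrix $(\psi_j(f_{i,0,0}))_{i,j}$ is diagonal with nonzero entries. Hence $\dim I_{m|2n}\ge n+1$, and combining the two bounds gives equality.

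\emph{Main obstacle.} The delicate point is the upper-bound argument in the case $M\in-2\mN$: the Fischer decomposition Lemma \ref{scalFischer} is stated only for $M\notin-2\mN$, so I cannot directly invoke (\ref{superFischer}) to reduce an arbitrary polynomial to harmonic pieces. I expect this is handled by decomposing $\cP$ instead via the purely bosonic Fischer decomposition tensored with the (finite, always valid) fermionic decomposition (\ref{fermFischer}), and then using (\ref{relationslaplace}) and the polynomials $f_{k,p,q}$ of Lemma \ref{polythm} — which are defined without the $M\notin-2\mN$ hypothesis, requiring only $q<n$ and $k<n-q+1$ — to absorb the mixed $\ux^2,\uxb^2$ factors. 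Making this reduction rigorous, i.e.\ checking that every $SO(m)\times Sp(2n)$-invariant polynomial is, modulo $x^2\cP$, a linear combination of the $f_{l,0,0}$, is the part that needs genuine care; the averaging/invariance step and the lower bound are routine given the earlier results.
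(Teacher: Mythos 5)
Your argument only works when $M\notin-2\mN$, which is precisely the case where the result was already known; for $M\in-2\mN$ (the case this section is written to cover) both of your bounds break, and they break for the same reason. The culprit is your use of the polynomials $f_{j,0,0}$: by remark \ref{fkto}, when $M\in-2\mN$ one has $f_{k,0,0}=0\bmod x^2$, and this degeneracy destroys your argument. Concretely, take $m=2$, $n=1$, so $M=0$: then $f_{1,0,0}=\ux^2+\uxb^2=x^2$, hence for \emph{every} $\phi\in I_{m|2n}$ condition $i.$ gives $\phi(f_{1,0,0})=-\phi(1)$. So the values $\phi(f_{0,0,0}),\dots,\phi(f_{n,0,0})$ do \emph{not} determine $\phi$ (the Pizzetti functional itself has $\int_{SS}1=0$ when $M=0$, so it is sent to the zero tuple while being nonzero), which invalidates your upper bound as stated; and your functionals are not independent, since $\psi_1=\int_{SS}x^2\,\cdot=-\int_{SS}\cdot$ is proportional to $\psi_0$, while the "diagonal entries" $\int_{SS}(f_{i,0,0})^2$ need not be nonzero -- your justification via (\ref{bijkomendebepint}) is exactly the statement that fails for $M\in-2\mN$, as the paper points out. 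The reduction you propose in your "main obstacle" paragraph (every invariant polynomial is, modulo $x^2\cP$, a combination of the $f_{l,0,0}$) is likewise false in this case: for $M=0$, $\uxb^2$ is not congruent to a constant modulo $x^2\cP$, yet $f_{1,0,0}$ itself lies in $x^2\cP$.

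The paper's proof avoids the $f_{j,0,0}$ entirely. It decomposes $\cP$ by the purely bosonic Fischer decomposition tensored with the purely fermionic one (\ref{fermFischer}) -- valid for all $M$ -- and by Schur's lemma and condition $ii.$ reduces $\phi$ to the invariant algebra generated by $\ux^2$ and $\uxb^2$; then, writing $\ux^2=x^2-\uxb^2$ and using condition $i.$, it reduces further to the values $\phi(1),\phi(\uxb^2),\dots,\phi(\uxb^{2n})$, which are free, giving dimension $n+1$ in one stroke. (Correspondingly, the explicit basis constructed afterwards in lemma \ref{spaceFunInt} uses the multipliers $\uxb^{2k}$ rather than $f_{k,0,0}$, with independence coming from the invertibility of $\sum_k a_{k+l}\uxb^{2k}$ in $\Lambda_{2n}$ and from $\int_{SS}\uxb^{2n}\neq 0$.) If you replace your representatives $f_{l,0,0}$ by $1,\uxb^2,\dots,\uxb^{2n}$ and your functionals $\psi_j$ by $P\mapsto\int_{SS}\uxb^{2j}P$, your two-bound strategy can be repaired, but as written the proof does not establish the lemma for $M\in-2\mN$.
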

\begin{proof}
In the case $M\in -2\mN$ there is no Fischer decomposition (\ref{superFischer}). We can, however, still decompose the space $\cP$ into irreducible pieces under the action $SO(m)\times Sp(2n)$ using the purely bosonic (special case of (\ref{superFischer})) and purely fermionic Fischer decomposition in equation (\ref{fermFischer}). This yields
\begin{eqnarray*}
\cP&=&\bigoplus_{k=0}^\infty\bigoplus_{i=0}^\infty\bigoplus_{l=0}^n\bigoplus_{j=0}^{n-k} \ux^{2i}\uxb^{2j}\cH_k^b\cH^f_{l}.
\end{eqnarray*}

 By Schur's lemma and (\ref{vwint}$ii.$) the only summands in the Fischer decompositions that give nonzero contributions to an integration over the supersphere are one-dimensional pieces. So an integration over the supersphere is uniquely determined by its values on the algebra generated by $\{\ux^2,\uxb^2\}$. Now taking into account condition (\ref{vwint}$i.$) we see that an integration over the supersphere is determined entirely by its values on $\{1,\uxb^2,\ldots,\uxb^{2n}\}$. These values can be chosen arbitrarily.
\end{proof}

The proof of lemma \ref{dimensiespace} shows that for an arbitrary element $\alpha$ of the algebra $\mbox{Alg}(\uxb^2)$, generated by $\uxb^2$, there exists an integration over the supersphere $\phi$ for which $\phi(\alpha)\not=0$. We now establish a new basis of $I_{m|2n}$.

\begin{lemma}
\label{spaceFunInt}
The space $I_{m|2n}$ has a basis $\{\phi_k|k=0,\ldots ,n\}$ such that
\[\phi_k(P)=\int_{SS}\uxb^{2k}P\]
for $P\in\cP$.
\end{lemma}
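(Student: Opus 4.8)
The plan is to show that the functionals $\phi_k(P) = \int_{SS} \uxb^{2k} P$ for $k = 0, \ldots, n$ lie in $I_{m|2n}$, and that they are linearly independent; since $\dim I_{m|2n} = n+1$ by Lemma \ref{dimensiespace}, this will establish that they form a basis.

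\textbf{Membership in $I_{m|2n}$.} First I would check that each $\phi_k$ satisfies the two defining conditions (\ref{vwint}). For condition $ii.$, $SO(m)\times Sp(2n)$-invariance, this is immediate: $\uxb^2$ is invariant under $SO(m)\times Sp(2n)$ (as noted in the Preliminaries, $\uxb^2$ is $Sp(2n)$-invariant and it involves no bosonic variables), and $\int_{SS}$ itself is invariant by Theorem \ref{uniciteitgeval} / the construction in Definition \ref{defdef}; hence $\phi_k(P(g.x)) = \int_{SS}\uxb^{2k}P(g.x) = \int_{SS}(\uxb^2)^k|_{g} P(g.x) = \int_{SS}((\uxb^{2k}P)(g.x)) = \int_{SS}\uxb^{2k}P = \phi_k(P)$. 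For condition $i.$, I need $\phi_k(x^2 P) = -\phi_k(P)$, i.e. $\int_{SS}\uxb^{2k}x^2 P = -\int_{SS}\uxb^{2k}P$. Since $\uxb^{2k}$ commutes with $x^2$ (it is built from commuting-among-themselves even elements and $x^2$ is scalar), $\uxb^{2k}x^2 P = x^2(\uxb^{2k}P)$, so the left side is $\int_{SS}x^2(\uxb^{2k}P) = -\int_{SS}\uxb^{2k}P = -\phi_k(P)$, using property $i.$ of $\int_{SS}$ itself. So each $\phi_k \in I_{m|2n}$.

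\textbf{Linear independence.} Suppose $\sum_{k=0}^n c_k \phi_k = 0$, i.e. $\int_{SS}\left(\sum_{k=0}^n c_k \uxb^{2k}\right)P = 0$ for all $P \in \cP$. Taking $P = 1$ gives $\int_{SS}\sum_k c_k \uxb^{2k} = 0$. The element $\alpha = \sum_{k=0}^n c_k \uxb^{2k}$ lies in $\mbox{Alg}(\uxb^2)$, and the remark following Lemma \ref{dimensiespace} states precisely that for any nonzero $\alpha \in \mbox{Alg}(\uxb^2)$ there is an integration over the supersphere $\psi$ with $\psi(\alpha)\neq 0$. But more directly: the proof of Lemma \ref{dimensiespace} shows an integration over the supersphere is determined by its arbitrary values on $\{1,\uxb^2,\ldots,\uxb^{2n}\}$; the vanishing of $\int_{SS}\alpha P$ for all $P$ — in particular for $P$ running over $1, \uxb^2, \ldots, \uxb^{2n}$ — together with the explicit Pizzetti values $\int_{SS}\uxb^{2j} \neq 0$ (the Berezin integral formula (\ref{berekeningBer}) with the Pizzetti normalization gives nonzero constants, at least up to the factor $\Gamma$-issues which cancel) forces, by a triangular/Vandermonde-type argument on the degrees, all $c_k = 0$. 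I would phrase this as: evaluating at $P = \uxb^{2(n-j)}$ isolates the top fermionic component, giving successively $c_0 = 0$, then $c_1 = 0$, etc.

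\textbf{Main obstacle.} The delicate point is the nonvanishing of the relevant supersphere integrals of pure powers of $\uxb^2$, especially when $M \in -2\mN$, where the $\Gamma$-functions in the Pizzetti formula (\ref{Pizzetti}) have poles. I would handle this by invoking formula (\ref{berekeningBer}) and the computation in the proof of Theorem \ref{intPiz}: $\int_{SS}\uxb^{2j}\uxb^{2(n-j)}$ type expressions reduce to a single Berezin integral $\int_B \uxb^{2n} = \pi^{-n}\cdot(\text{nonzero})$ times a bosonic Pizzetti value $\int_{\mS^{m-1}} 1$, and the apparent poles cancel exactly as explained after equation (\ref{RmnSS}). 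Alternatively, and more cleanly, I would argue abstractly: since $\int_{SS}$ is a specific element of the $(n+1)$-dimensional space $I_{m|2n}$ and the map $\phi \mapsto (\phi(1), \phi(\uxb^2), \ldots, \phi(\uxb^{2n}))$ is an isomorphism $I_{m|2n}\to \mR^{n+1}$ (this is the content of the proof of Lemma \ref{dimensiespace}), it suffices to show the $(n+1)\times(n+1)$ matrix with entries $\phi_k(\uxb^{2l}) = \int_{SS}\uxb^{2(k+l)}$ is invertible; this matrix is a Hankel matrix whose anti-diagonal entries $\int_{SS}\uxb^{2n}$ (for $k+l=n$) are nonzero while entries with $k+l > n$ vanish (Berezin integral of degree $> 2n$ is zero), making it anti-triangular with nonzero anti-diagonal, hence invertible.
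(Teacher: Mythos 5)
Your proposal is correct, and its skeleton matches the paper's: membership of the $\phi_k$ in $I_{m|2n}$ (which the paper dismisses as trivial), the dimension count from Lemma \ref{dimensiespace}, and a linear-independence argument whose crux is the single computation $\int_{SS}\uxb^{2n}=\sigma_m\pi^{-n}n!\neq 0$, obtained from the closed formula of Theorem \ref{intPiz} exactly as you do (and indeed valid for all $M$ with $m\neq 0$, so your worry about $\Gamma$-poles is moot here). Where you diverge is the mechanism for independence: the paper takes the lowest nonzero coefficient $a_l$ in a vanishing combination, observes that $\sum_{k=0}^{n-l}a_{k+l}\uxb^{2k}$ is invertible in $\Lambda_{2n}$ (nonzero constant term plus nilpotents), and so reduces to showing $\phi_l\neq 0$ via $\phi_l(\uxb^{2n-2l})=\int_{SS}\uxb^{2n}$; you instead use the isomorphism $\phi\mapsto(\phi(1),\phi(\uxb^2),\ldots,\phi(\uxb^{2n}))$ implicit in the proof of Lemma \ref{dimensiespace} and show the Hankel matrix $\bigl(\int_{SS}\uxb^{2(k+l)}\bigr)_{k,l}$ is anti-triangular (entries with $k+l>n$ vanish since $\uxb^{2j}=0$ for $j>n$) with nonzero anti-diagonal. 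Both arguments rest on the same two structural facts; yours is marginally more explicit (it exhibits the determinant), the paper's avoids invoking the evaluation isomorphism by the invertibility trick. One small remark: your first pass at independence, appealing to the remark after Lemma \ref{dimensiespace} that some integration is nonzero on any given element of $\mbox{Alg}(\uxb^2)$, does not by itself close the argument (it says nothing about $\int_{SS}$ itself); but this is harmless since your "evaluate at $P=\uxb^{2(n-j)}$ successively" and the anti-triangular matrix argument, which are the versions you actually rely on, are sound.
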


\begin{proof}
Because of lemma \ref{dimensiespace} we only have to prove that the $\phi_k$ are $n+1$ linearly independent functionals which satisfy the conditions (\ref{vwint}). Only the linear independence is non-trivial. So suppose 
\begin{eqnarray*}
\sum_{k=l}^na_k\phi_k&=&0
\end{eqnarray*}
with $a_l\not=0$. This means that
\begin{eqnarray*}
\phi_l\left[(\sum_{k=0}^{n-l}a_{k+l}\uxb^{2k})P\right]&=&0
\end{eqnarray*}
for all $P\in\cP$. Because $\sum_{k=0}^{n-l}a_{k+l}\uxb^{2k}$ is invertible in $\Lambda_{2n}$ we obtain $\phi_l=0$. So we still have to prove that none of the functionals in the proposed basis is zero. First we calculate, using theorem \ref{intPiz},
\begin{eqnarray*}
\int_{SS}\uxb^{2n}&=&\int_{\mS^{m-1}}\int_B\uxb^{2n}\\
&=&\sigma_m\pi^{-n}n!\\
&\not=&0.
\end{eqnarray*}

It follows that $\phi_k(\uxb^{2n-2k})\not=0$ and hence $\phi_k\not=0$.
\end{proof}

$\phi_0$ is of course the usual Pizzetti integral over the supersphere. Another interesting case is $\phi_n$. Using definition \ref{defdef} we find that this integration takes the form
\begin{eqnarray*}
\phi_n(P)&=&\int_{\mS^{m-1}}\int_B\uxb^{2n}P\\
&=&\frac{n!}{\pi^n}\int_{\mS^{m-1}}P_0
\end{eqnarray*}
where $P_0$ denotes the purely bosonic part of $P$. So this integration is just the integration of the bosonic part over the bosonic unit sphere. Using theorem \ref{intPiz} we can also define the extension of $\phi_k$ from $\cP$ to the space of functions that are $n-k$ times differentiable. It is also possible to connect the integrations over the supersphere with a Berezin integration over the entire superspace. Using the proof of theorem \ref{connectieBer} we immediately find that
\begin{eqnarray*}
\int_{\mR^{m|2n}}\uxb^{2k}f&=&\int_0^\infty dR\, R^{M+2k-1}\phi_k[f(Rx)],
\end{eqnarray*}
for $M+2k>0$ and an $f\in L_1(\mR^m)_{m|2n}\cap C^n(\mR^m)_{m|2n}$.

Now we will prove that also in the case $M\in-2\mN$ the orthogonality of spherical harmonics of different degree uniquely determines $\int_{SS}$ in the space of integrations over the supersphere. First we need some lemmata.

\begin{lemma}
\label{nieuwef}
For a superdimension $M=-2t$ with $t\in\mN$, one has
\[
f_{k,t+1,0}\not=0\mod x^2,
\]
with $f_{k,t+1,0}$ as defined in lemma \ref{polythm}, for $1\le k \le n$.
\end{lemma}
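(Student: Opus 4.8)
The plan is to use the explicit form of the polynomial $f_{k,p,q}$ from Lemma \ref{polythm} and to show that, when $p = t+1$ (with $M = m - 2n = -2t$, i.e.\ $m/2 = n - t$), the coefficients of the monomials $\ux^{2k-2s}\uxb^{2s}$ in $f_{k,t+1,0}$ do not all vanish modulo $x^2 = \ux^2 + \uxb^2$. Recall that
\[
f_{k,t+1,0} = \sum_{s=0}^k \binom{k}{s}\frac{(n-s)!}{\Gamma(\tfrac{m}{2}+t+1+k-s)}\,\ux^{2k-2s}\uxb^{2s},
\]
and with $m/2 = n-t$ the Gamma factor becomes $\Gamma(n+1+k-s)=(n+k-s)!$. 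So $f_{k,t+1,0}=\sum_{s=0}^k \binom{k}{s}\frac{(n-s)!}{(n+k-s)!}\ux^{2k-2s}\uxb^{2s}$, which is a nonzero polynomial provided each coefficient is finite and at least one is nonzero --- here every coefficient is manifestly a positive rational, so $f_{k,t+1,0}\neq 0$ as an element of $\cP$. The real content is the qualifier ``$\bmod\ x^2$''.

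First I would set up the right framework for ``$\bmod\ x^2$'': since $\uxb^{2n+2}=0$ in $\Lambda_{2n}$, one has $\uxb^{2j}\equiv(-1)^j\ux^{2j}\cdot(\text{stuff})$ only in a limited range, and more usefully, modulo $x^2$ one may substitute $\uxb^2 \equiv -\ux^2$, so that $f_{k,t+1,0}\bmod x^2$ becomes $\big(\sum_{s=0}^{k}\binom{k}{s}\frac{(n-s)!}{(n+k-s)!}(-1)^s\big)\ux^{2k}$, at least naively. But one must be careful: the reduction $\uxb^2\equiv -\ux^2\pmod{x^2}$ is legitimate as long as one does not exceed the nilpotency degree, and since $s\le k\le n$ this is fine within $f_{k,t+1,0}$ itself; however $\ux^{2k}$ need not be zero modulo $x^2$ (it isn't, since $x^2$ is not a zero-divisor on the bosonic polynomials in the relevant degree), so it suffices to show the scalar coefficient
\[
c_k \;=\; \sum_{s=0}^{k}\binom{k}{s}\frac{(n-s)!}{(n+k-s)!}\,(-1)^s
\]
is nonzero. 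Thus the lemma reduces to a purely combinatorial identity: $c_k\neq 0$ for $1\le k\le n$.

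To evaluate $c_k$ I would rewrite $\frac{(n-s)!}{(n+k-s)!}=\frac{1}{(n+k-s)(n+k-s-1)\cdots(n-s+1)}=\frac{\Gamma(n-s+1)}{\Gamma(n-s+k+1)}=B(n-s+1,k)/\Gamma(k)$ and use the Beta-integral representation $\frac{(n-s)!}{(n+k-s)!}=\frac{1}{(k-1)!}\int_0^1 u^{\,n-s}(1-u)^{k-1}\,du$. Then
\[
c_k=\frac{1}{(k-1)!}\int_0^1 (1-u)^{k-1}\sum_{s=0}^k\binom{k}{s}(-1)^s u^{\,n-s}\,du
=\frac{1}{(k-1)!}\int_0^1 (1-u)^{k-1}u^{\,n-k}\Big(\sum_{s=0}^k\binom{k}{s}(-u^{-1})^s u\cdot\Big)\!,
\]
i.e.\ the inner sum is $u^{\,n-k}(u-1)^k$ up to sign, giving $c_k=\frac{(-1)^k}{(k-1)!}\int_0^1 (1-u)^{2k-1}u^{\,n-k}\,du=\frac{(-1)^k}{(k-1)!}B(n-k+1,2k)$, which is a nonzero number whenever $n-k+1>0$, i.e.\ $k\le n$. (For $k\ge 1$ the exponent $n-k\ge 0$, so the integral converges and is strictly positive.) This establishes $c_k\neq 0$ for $1\le k\le n$, and hence $f_{k,t+1,0}\not\equiv 0\pmod{x^2}$.

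The main obstacle I anticipate is the bookkeeping around ``$\bmod\ x^2$'': one must justify that reducing $\uxb^2$ to $-\ux^2$ is the correct notion (equivalently, that $\{\ux^{2k}\bmod x^2\}$ together with the pure-fermionic tail span a complement to $x^2\cP$ in the relevant bidegree, so that the vanishing of $f_{k,t+1,0}$ mod $x^2$ is equivalent to $c_k=0$), and that no hidden cancellation comes from $\uxb^{2n+2}=0$ within the range $s\le k\le n$ --- it does not, precisely because $k\le n$. Once that reduction is in place, the remainder is the Beta-function computation above, which is routine. An alternative to the Beta-integral, if one prefers to stay finite, is to recognize $\sum_s\binom{k}{s}\frac{(n-s)!}{(n+k-s)!}(-1)^s$ via the Vandermonde/Chu identity as a ratio of Pochhammer symbols and check it is a finite nonzero rational; either route closes the argument.
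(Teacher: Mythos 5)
Your argument is correct, but it is a genuinely different proof from the one in the paper. The paper argues structurally: assuming $f_{k,t+1,0}=x^{2i}g_{2k-2i}$ with $g_{2k-2i}\not\equiv 0 \bmod x^2$, it applies $\Delta$ to $f_{k,t+1,0}H^b_{t+1}$ (which is harmonic by lemma \ref{polythm}) and uses relation (\ref{relationslaplace}); with $M=-2t$ and $p=t+1$ the resulting coefficient is $2i(4k-2i)\neq 0$, forcing $x^2\mid g_{2k-2i}$, a contradiction. That route never touches the explicit coefficients and, as remark \ref{fkto} notes, extends for free to all $M\ge -2t$ and to odd $M$. You instead specialize the explicit formula at $m/2=n-t$, reduce modulo $x^2$ via $\uxb^2\equiv-\ux^2$, and evaluate the single scalar $c_k=\sum_{s=0}^k\binom{k}{s}\frac{(n-s)!}{(n+k-s)!}(-1)^s=\frac{(-1)^k}{(k-1)!}B(n-k+1,2k)\neq 0$ for $1\le k\le n$; the Beta-integral computation is correct (and even gives the explicit value of $f_{k,t+1,0}$ mod $x^2$), but it is tied to this particular specialization. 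The one step you should tighten is the claim that $c_k\ux^{2k}\notin x^2\cP$ for $k\le n$: your appeal to ``$x^2$ is not a zero-divisor on the bosonic polynomials'' is not quite the relevant statement (the issue is membership in the ideal $x^2\cP$, not zero-divisors). It is easily repaired: if $\ux^{2k}=x^2h$, expanding $h$ by fermionic degree forces $\ux^2 h_{2k}=(-1)^k\uxb^{2k}$ at top fermionic degree, which is impossible since $\uxb^{2k}\neq 0$ for $k\le n$ and has constant coefficients; with that in place your reduction ``$f_{k,t+1,0}\equiv 0 \bmod x^2 \Leftrightarrow c_k=0$'' is fully justified and the proof closes.
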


\begin{proof}
If $f_{k,t+1,0}=0\mod x^2$, there would exist a $g_{2k-2i}\in\cP_{2k-2i}$, $0\le i\le k$ so that 
\begin{eqnarray*}
f_{k,t+1,0}&=&x^{2i}g_{2k-2i}
\end{eqnarray*}

with $g_{2k-2i}\not=0\mod x^2$. Now, because of equation (\ref{relationslaplace}) and lemma \ref{polythm} we find for a $H^b_{t+1}\in\cH^b_{t+1}$,
\begin{eqnarray*}
\Delta f_{k,t+1,0}H^b_{t+1}&=&0\\
&=&2i(4k-4i+2t+2+M+2i-2)x^{2i-2}g_{2k-2i}H_{t+1}^b+x^{2i}\Delta (g_{2k-2i}H_{t+1}^b)\\
&=&2i(4k-2i)x^{2i-2}g_{2k-2i}H_{t+1}^b+x^{2i}\Delta (g_{2k-2i}H_{t+1}^b).
\end{eqnarray*}

Now because $4k-2i\ge 2k>0$ we find that up to a nonzero constant $g_{2k-2i}H_{t+1}^b \sim x^{2}\Delta (g_{2k-2i}H_{t+1}^b)$. So $g_{2k-2i}$ would have a factor $x^2$ which is contradictory to the assumption that $g_{2k-2i}\not=0\mod x^2$.
\end{proof}

\begin{remark}
\label{fkto}
It is clear from the proof that the lemma can be made more general, $f_{k,t+1,0}\not=0\mod x^2$ as long as $M\ge -2t$. So in particular for $M>0$, $f_{k,t,0}\not=0\mod x^2$ for every $t$. By a similar calculation the same also holds for $M$ odd. The fact that $f_{k,0,0}=0\mod x^2$ when $M\in-2\mN$ is the reason why in that case (\ref{bijkomendebepint}) is not a basis for $I_{m|2n}$.
\end{remark}

Now we establish yet another basis for $I_{m|2n}$, suitable for the case $M \in -2\mN$.
\begin{lemma}
\label{nieuwebasisSS}
The functionals $\int_{SS}$ and
\begin{eqnarray*}
\int_{SS}f_{k,t+1,0}\;\;\;1\le k\le n
\end{eqnarray*}
form a basis of $I_{m|2n}$ if $M=-2t$.
\end{lemma}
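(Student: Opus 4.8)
The plan is to use Lemma~\ref{dimensiespace}, which gives $\dim I_{m|2n}=n+1$; the proposed list consists of exactly $n+1$ functionals, so it is enough to show that each of them lies in $I_{m|2n}$ and that they are linearly independent.

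First I would check membership. The functional $\int_{SS}$ satisfies (\ref{vwint}) (this is already noted in the text). For the functional $\psi_k(P):=\int_{SS}f_{k,t+1,0}P$ with $1\le k\le n$, the polynomial $f_{k,t+1,0}=f_{k,t+1,0}(\ux^2,\uxb^2)$ is central and $SO(m)\times Sp(2n)$-invariant, so $\psi_k$ inherits (\ref{vwint}$ii$) from $\int_{SS}$; and since $x^2$ is scalar, $\psi_k(x^2P)=\int_{SS}x^2(f_{k,t+1,0}P)=-\int_{SS}(f_{k,t+1,0}P)=-\psi_k(P)$, giving (\ref{vwint}$i$). Hence $\int_{SS},\psi_1,\dots,\psi_n\in I_{m|2n}$.

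For linear independence I would expand these $n+1$ functionals in the basis $\{\phi_j\}_{j=0}^n$ of Lemma~\ref{spaceFunInt}, where $\phi_j(P)=\int_{SS}\uxb^{2j}P$ and, in particular, $\int_{SS}=\phi_0$. Writing $\ux^2=x^2-\uxb^2$ and using property (\ref{vwint}$i$) in the form $\int_{SS}\bigl((x^2)^aQ\bigr)=(-1)^a\int_{SS}Q$, each monomial $\ux^{2a}\uxb^{2b}$ occurring in $f_{k,t+1,0}$ contributes, after multiplication by $P$ and application of $\int_{SS}$, a linear combination of $\phi_c(P)$ with $c\le a+b$, the top term $c=a+b$ carrying coefficient $(-1)^a$. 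Since $f_{k,t+1,0}$ has total degree $k$ in $(\ux^2,\uxb^2)$, this yields $\psi_k=\sum_{j=0}^{k}c_{kj}\phi_j$, where $c_{kk}$ is exactly the scalar $\lambda_k$ determined by $f_{k,t+1,0}\equiv\lambda_k\,\uxb^{2k}\pmod{x^2}$. Therefore, relative to $\{\phi_0,\dots,\phi_n\}$, the coordinate matrix of $(\int_{SS},\psi_1,\dots,\psi_n)$ is lower triangular with diagonal $(1,\lambda_1,\dots,\lambda_n)$. By Lemma~\ref{nieuwef} we have $f_{k,t+1,0}\not\equiv 0\pmod{x^2}$ for $1\le k\le n$, hence $\lambda_k\neq 0$; the matrix is invertible and the functionals form a basis of $I_{m|2n}$.

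The hard part is precisely the nonvanishing of the diagonal entries $\lambda_k$, i.e.\ that reducing $f_{k,t+1,0}$ modulo $x^2$ leaves a nonzero multiple of $\uxb^{2k}$ --- and this is exactly what Lemma~\ref{nieuwef} provides, which is the reason that lemma was proved just beforehand; the remaining steps are routine bookkeeping with the scaling relation (\ref{vwint}$i$).
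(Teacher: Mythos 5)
Your proposal is correct and follows essentially the same route as the paper: both reduce $\int_{SS}f_{k,t+1,0}\,\cdot$ modulo $x^2$ (equivalently, substitute $\ux^2\mapsto-1-\uxb^2$ using property (\ref{vwint}$i$)), obtain a triangular change of coordinates relative to the basis $\{\phi_j\}$ of Lemma \ref{spaceFunInt}, and invoke Lemma \ref{nieuwef} to see that the diagonal entries (the $\uxb^{2k}$-coefficients) are nonzero. Your explicit bookkeeping of the top coefficient $\lambda_k$ and the membership check are fine and match the paper's argument in substance.
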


\begin{proof}
By (\ref{vwint}$i.$), $\int_{SS}f_{k,t+1,0}(\ux^2,\uxb^2)\cdot=\int_{SS}f_{k,t+1,0}(-1-\uxb^2,\uxb^2)\cdot$. Now, suppose \[f_{k,t+1,0}(-1-\uxb^2,\uxb^2)=\sum_{s=0}^ka_s(-1-\uxb^2)^{k-s}\uxb^{2s}\] is of degree $<k$ in $\uxb^2$. This means that
\begin{eqnarray*}
\sum_{s=0}^ka_s(-1)^{k-s}&=&0,
\end{eqnarray*}
so
\begin{eqnarray*}
\sum_{s=0}^ka_s(-\uxb^2)^{k-s}\uxb^{2s}&=&0\\
&=&f_{k,t+1,0}(-\uxb^2,\uxb^2).
\end{eqnarray*}
This would mean that $f_{k,t+1,0}=0\mod x^2$, which leads to a contradiction by lemma \ref{nieuwef}. So $f_{k,t+1,0}(-1-\uxb^2,\uxb^2)$ contains a factor $\uxb^{2k}$. This means we can easily show that going from the basis in lemma \ref{spaceFunInt} to the proposed basis is an invertible transform.
\end{proof}

\begin{lemma}
For $M=-2t$, $H^b_{t+1}\in\cH^b_{t+1}$ a bosonic spherical harmonic of degree $t+1$ and $f_{j,t+1,0}$ as defined in lemma \ref{polythm} the following holds
\label{ffnot0}
\begin{eqnarray*}
\int_{SS}f_{j,t+1,0}H^b_{t+1}\,f_{k,t+1,0}H^b_{t+1}&=&\delta_{jk}C_k,
\end{eqnarray*}
with $C_k\not=0$ for $1\le k\le n$.
\end{lemma}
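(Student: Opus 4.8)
Set $F_l:=f_{l,t+1,0}\,H^b_{t+1}$ for $0\le l\le n$. Since $M=-2t$ forces $m=2(n-t)\ge 2$, the space $\cH^b_{t+1}$ is nonzero, and since $q=0$ makes $\cH^f_0=\mR$, Lemma \ref{polythm} tells us each $F_l$ is a nonzero spherical harmonic; as $f_{l,t+1,0}$ is homogeneous of degree $2l$ we have $F_l\in\cH_{2l+t+1}$. The quantity to be studied is thus $\int_{SS}F_jF_k$, where $F_jF_k=f_{j,t+1,0}f_{k,t+1,0}(H^b_{t+1})^2\in\cP$.

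\textbf{Off-diagonal case $(j\ne k)$.} The harmonics $F_j$ and $F_k$ have distinct degrees, so I would apply Proposition \ref{Green2} with $f=F_j$, $g=F_k$: using $\mE F_l=(2l+t+1)F_l$ and $\Delta F_l=0$ this gives $2(k-j)\int_{SS}F_jF_k=\int_{SB}\bigl(F_j\Delta F_k-(\Delta F_j)F_k\bigr)=0$, hence $\int_{SS}F_jF_k=0$.

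\textbf{Diagonal case.} It remains to prove $C_k:=\int_{SS}F_k^2\ne 0$ for $1\le k\le n$. First, by property $i.$ of integration over the supersphere (as in the proof of Lemma \ref{nieuwebasisSS}) one may replace $\ux^2$ by $-1-\uxb^2$ throughout $f_{k,t+1,0}^2$, so that $C_k=\int_{SS}P_k(\uxb^2)^2(H^b_{t+1})^2$ with $P_k(u):=f_{k,t+1,0}(-1-u,u)$; by Lemma \ref{nieuwef} this polynomial has degree exactly $k$. Next, from the explicit form of $\int_{SS}$ in Theorem \ref{intPiz}, together with $\int_B\uxb^{2\ell}B=\pi^{-n}n!\,\delta_{\ell,n}B$ for bosonic $B$ and the identity $m/2+t=n$, a direct computation gives
\[
\int_{SS}\uxb^{2q}(H^b_{t+1})^2=\pi^{-n}n!\binom{n}{q}\int_{\mS^{m-1}}(H^b_{t+1})^2\,d\sigma\qquad(q\ge 0),
\]
with the convention $\binom{n}{q}=0$ for $q>n$ (consistent with $\uxb^{2q}=0$ there). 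Writing $P_k(u)^2=\sum_q R_qu^q$ and summing, and using $\sum_qR_q\binom{n}{q}=[u^n]\bigl((1+u)^nP_k(u)^2\bigr)$, I obtain
\[
C_k=\pi^{-n}n!\Bigl(\int_{\mS^{m-1}}(H^b_{t+1})^2\,d\sigma\Bigr)\,[u^n]\Bigl(f_{k,t+1,0}(-1-u,u)^2(1+u)^n\Bigr).
\]
Since $H^b_{t+1}\ne 0$ the first factor is strictly positive, so everything reduces to the nonvanishing of the displayed coefficient.

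\textbf{Evaluating the coefficient.} For this last step the plan is to use the Beta-integral representation
\[
f_{k,t+1,0}(\ux^2,\uxb^2)=\frac{1}{(k-1)!}\int_0^1 w^{\,n-k}(1-w)^{k-1}\,(w\ux^2+\uxb^2)^k\,dw\qquad(1\le k\le n),
\]
which follows by writing $(n-s)!/(n+k-s)!=\tfrac{1}{(k-1)!}B(n-s+1,k)$ and resumming the binomial expansion of $(w\ux^2+\uxb^2)^k$. Substituting this into $P_k$, rescaling $u\mapsto s/(1-s)$ and using homogeneity of $f_{k,t+1,0}$, the coefficient becomes $[s^n]\bigl(f_{k,t+1,0}(-1,s)^2(1-s)^{-2k-1}\bigr)$, and an elementary residue computation at $s=1$ yields the closed form
\[
[u^n]\Bigl(f_{k,t+1,0}(-1-u,u)^2(1+u)^n\Bigr)=\frac{(-1)^k\,(n-k)!}{2\,(n+k)!},
\]
which is nonzero for $1\le k\le n$; hence $C_k\ne 0$ and the lemma follows. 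The main obstacle is precisely this last identity: it is an elementary but somewhat laborious binomial evaluation — one can alternatively prove it by induction on $k$ using a contiguous relation for the $f_{k,t+1,0}$, or invoke a hypergeometric-summation routine — and establishing it is the only genuinely computational part of the argument.
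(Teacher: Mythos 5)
Your route is genuinely different from the paper's. The paper never computes $C_k$: it applies the separation property coming from the proof of lemma \ref{dimensiespace} to the element $f_{k,t+1,0}(-1-\uxb^2,\uxb^2)(1+\uxb^2)^{t+1}\in\mbox{Alg}(\uxb^2)$ (some $\phi\in I_{m|2n}$ does not annihilate it), expands that $\phi$ in the basis of lemma \ref{nieuwebasisSS}, and kills every term with $j\neq k$ by theorem \ref{integorth}, so the surviving diagonal term is forced to be nonzero. You instead evaluate $C_k$ directly. Your reductions are all correct: the off-diagonal vanishing via proposition \ref{Green2}, the substitution $\ux^2\mapsto-1-\uxb^2$ under $\int_{SS}$ (the same manipulation the paper uses via property $i.$), the evaluation $\int_{SS}\uxb^{2q}(H^b_{t+1})^2=\pi^{-n}n!\binom{n}{q}\int_{\mS^{m-1}}(H^b_{t+1})^2$ using $m/2+t=n$, the Beta representation of $f_{k,t+1,0}$, and the passage to $[s^n]\bigl(f_{k,t+1,0}(-1,s)^2(1-s)^{-2k-1}\bigr)$. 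If completed, your argument even yields more than the lemma, namely the explicit value $C_k=(-1)^k\pi^{-n}n!\,\frac{(n-k)!}{2(n+k)!}\int_{\mS^{m-1}}(H^b_{t+1})^2$, whereas the paper's indirect argument only gives $C_k\neq0$.

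There is, however, one gap, and you name it yourself: the decisive identity $[u^n]\bigl(f_{k,t+1,0}(-1-u,u)^2(1+u)^n\bigr)=\frac{(-1)^k(n-k)!}{2(n+k)!}$ is asserted, not proven, and the entire nonvanishing conclusion rests on it. The identity is in fact correct (it checks for small $(n,k)$, and the residue computation you sketch does go through: the rational function $f_{k,t+1,0}(-1,s)^2s^{-n-1}(1-s)^{-2k-1}$ has zero residue at infinity, so the coefficient equals minus the residue at $s=1$, which your Beta representation makes computable), but a complete proof must actually carry out that evaluation or the induction you allude to. Note that the paper's structural argument is designed precisely to avoid needing any closed form; if you do not want to do the computation, you can graft its separation-plus-orthogonality step onto your setup and drop the identity, at the cost of losing the explicit constant.
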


\begin{proof}
The factor $\delta_{jk}$ follows from theorem \ref{integorth}. Because $f_{k,t+1,0}(-1-\uxb^2,\uxb^2)\,(1+\uxb^2)^{t+1}$ is an element of $\mbox{Alg}(\uxb^2)$ there exists an integration over the supersphere $\phi$ for which $\phi(f_{k,t+1,0}(-1-\uxb^2,\uxb^2)\,(1+\uxb^2)^{t+1})\not=0$. This means that at least one of the functionals of lemma \ref{nieuwebasisSS} will not give zero. So we calculate, using (\ref{vwint}$i.$)
\begin{eqnarray*}
\int_{SS}f_{j,t+1,0}f_{k,t+1,0}(-1-\uxb^2,\uxb^2)\,(1+\uxb^2)^{t+1}&=&(-1)^{t+1}\int_{SS}f_{j,t+1,0}f_{k,t+1,0}(\ux^2,\uxb^2)(\ux^2)^{t+1}.
\end{eqnarray*}

Now we normalize the bosonic spherical harmonic such that $\int_{\mS^{m-1}}H^b_{t+1}H^b_{t+1}=1$. Because of definition \ref{defdef} we see that
\begin{eqnarray*}
\int_{SS}f_{j,t+1,0}f_{k,t+1,0}\,(\ux^2)^{t+1}&=&\int_{SS}f_{j,t+1,0}H^b_{t+1}\,f_{k,t+1,0}H^b_{t+1}.
\end{eqnarray*}

This is zero whenever $j\not=k$ by theorem \ref{integorth}. Because there has to be a functional $\phi$ for which this is not zero we find $\int_{SS}f_{j,t+1,0}H^b_{t+1}f_{j,t+1,0}H^b_{t+1}\not=0$.
\end{proof}

Now we have all tools necessary to solve problem \textbf{P2}.
\begin{theorem}
\label{uniek}
For a general superdimension $M$, with $m\not=0$, the integral over the supersphere $\int_{SS}$, defined in definition \ref{defdef}, is the unique element of $I_{m|2n}$ with the property that 
\begin{eqnarray*}
\int_{SS}H_kH_l&=&0
\end{eqnarray*}
when $k\not=l$, for $H_j\in\cH_j$.
\end{theorem}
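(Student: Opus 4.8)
The plan is to expand an arbitrary $\phi\in I_{m|2n}$ satisfying the orthogonality hypothesis in a convenient basis of $I_{m|2n}$ and then use a non-degeneracy pairing to kill every coefficient except the one belonging to $\int_{SS}$; the argument is essentially the same for both parities of $M$, only the basis changing. First one should recall that $\int_{SS}$ itself has the orthogonality property (noted after Proposition \ref{Green2}), so it suffices to show that every $\phi\in I_{m|2n}$ with $\phi(H_kH_l)=0$ for $k\neq l$ is a scalar multiple of $\int_{SS}$.

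I would treat first the new case $M=-2t$, $t\in\mN$; then $m$ is even, $m\geq2$ and $n\geq t+1$, so nonzero bosonic harmonics $H^b_{t+1}\in\cH^b_{t+1}$ exist, and by Lemma \ref{polythm} $Q_j:=f_{j,t+1,0}H^b_{t+1}$ is a nonzero spherical harmonic of degree $2j+t+1$ for $1\le j\le n$. Using the basis of $I_{m|2n}$ from Lemma \ref{nieuwebasisSS} one writes
\[
\phi=c_0\int_{SS}+\sum_{k=1}^{n}c_k\,\psi_k,\qquad \psi_k(P):=\int_{SS}f_{k,t+1,0}\,P .
\]
Now fix $j\in\{1,\dots,n\}$ and evaluate $\phi$ on $H^b_{t+1}\,Q_j$. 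Since $H^b_{t+1}\in\cH_{t+1}$ and $Q_j\in\cH_{2j+t+1}$ have different degrees (this is where $j\geq1$ enters), the hypothesis gives $\phi(H^b_{t+1}\,Q_j)=0$. On the other hand, all factors are scalar, so $\psi_k(H^b_{t+1}Q_j)=\int_{SS}(f_{k,t+1,0}H^b_{t+1})\,Q_j=\delta_{jk}C_j$ by Lemma \ref{ffnot0}, while the $c_0$-term $\int_{SS}H^b_{t+1}Q_j$ vanishes because $\int_{SS}$ annihilates products of harmonics of different degree. Hence $0=\phi(H^b_{t+1}Q_j)=c_jC_j$, and $C_j\neq0$ (Lemma \ref{ffnot0}) forces $c_j=0$ for every $j=1,\dots,n$, i.e. $\phi=c_0\int_{SS}$.

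The case $M\notin-2\mN$ would be handled by the same computation, with $f_{j,0,0}$ in place of $f_{j,t+1,0}$ (legitimate since then $f_{j,0,0}\neq0\bmod x^2$, cf. Remark \ref{fkto}) and the basis $\{\int_{SS,j}\}_{j=0}^n$ of (\ref{bijkomendebepint}) in place of that of Lemma \ref{nieuwebasisSS}; there the dual-basis relation $\int_{SS,k}f_{j,0,0}=\delta_{jk}$ plays the role of Lemma \ref{ffnot0}, and $\phi(f_{j,0,0})=0$ because $f_{j,0,0}\in\cH_{2j}$ is orthogonal to $1\in\cH_0$. (Alternatively one can invoke Theorem \ref{uniciteitgeval} after a normalization, the only caveat being that $\phi(1)=0$ would force $\phi\equiv0$, which follows from the Fischer decomposition (\ref{superFischer}) and orthogonality against $1\in\cH_0$.) The one step carrying real weight is the non-degeneracy $C_j\neq0$ of Lemma \ref{ffnot0}, which reduces to $f_{j,t+1,0}\neq0\bmod x^2$ (Lemma \ref{nieuwef}); that is the point where the relation (\ref{relationslaplace}) for $\Delta$ acting on $x^{2t}R_k$ is used to exclude a factor $x^2$, and also the point where the hypothesis $m\neq0$ really enters, through the availability of a nonzero $H^b_{t+1}$ to test against. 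I expect this to be the main obstacle; once it is in hand the uniqueness follows as above, uniformly in the superdimension.
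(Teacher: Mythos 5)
Your proposal is correct and follows essentially the same route as the paper: expand $\phi$ in the basis of Lemma \ref{nieuwebasisSS}, test against $H^b_{t+1}\,f_{k,t+1,0}H^b_{t+1}$, and use the non-degeneracy $C_k\neq0$ of Lemma \ref{ffnot0} to kill all coefficients except that of $\int_{SS}$. The only (harmless) deviation is that for $M\notin-2\mN$ you fall back on the basis (\ref{bijkomendebepint}) or Theorem \ref{uniciteitgeval}, whereas the paper simply notes via Remark \ref{fkto} that the same argument applies verbatim.
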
 

\begin{proof}
We only have to prove the theorem for $M\in-2\mN$, although the proof is also valid for every $M$ by remark \ref{fkto}. We know that $\int_{SS}$ satisfies the property $\int_{SS}H_kH_l=0$ from theorem \ref{integorth}. Now, by lemma \ref{nieuwebasisSS} every integration over the supersphere is of the form
\begin{eqnarray*}
\phi&=&\sum_{j=0}^na_j\int_{SS}f_{j,t+1,0}\cdot
\end{eqnarray*}

Because $H^b_{t+1}\in\cH^b_{t+1}\subset\cH_{t+1}$ and $f_{k,t+1,0}H_{t+1}^b\in \cH_{2k+t+1}$, $\phi(H_{t+1}^b\, f_{k,t+1,0}H_{t+1}^b)=C_k a_k$ should be zero. As $C_k$ is not zero by lemma \ref{ffnot0}, we find $a_j=0$ whenever $j\not=0$, so $\phi=a_0\int_{SS}$.
\end{proof}

\section{Radon transform}

The bosonic Radon transform integrates elements of $\cS(\mR^m)$ over hyperplanes. Some properties of this transform can  be found e.g. in \cite{MR1723736} and \cite{MR709591}. The Radon transform of a function $f\in\cS(\mR^m)$ is given by
\begin{eqnarray*}
\cR_{m}(f)(\uy,p)&=&\int_{\mR^m}\delta(\langle \ux , \uy\rangle+p)f(\ux),
\end{eqnarray*}
with $\langle\ux,\uy\rangle$ the inner product between two vectors as defined in (\ref{inprod}). Using the Fourier transform of the Dirac distribution we find that (with $\cF_{m|0}^-$ as in equation (\ref{Four}))
\begin{eqnarray}
\label{FourierRadon}
\cR_{m}(f)(\uy,p)&=&(2\pi)^{m/2-1}\int_{-\infty}^\infty dr\, \exp(ipr)[\cF^-_{m|0}(f)(r\uy)].
\end{eqnarray}

Because of this property, it was hinted in \cite{DBS9}  that the Radon transform on superspace could be introduced using the super Fourier transform (\ref{Four}) on $\cS(\mR^{m})_{m|2n}$ developed there. This has a few shortcomings that we will be able to solve using distributions in superspace. To define the Radon transform in superspace we take again the Taylor expansion of the Dirac distribution, yielding
\begin{eqnarray*}
\delta(\langle x,y \rangle+p)&=&\sum_{j=0}^{2n}\delta^{(j)}(\langle\ux ,\underline{y}\rangle+p)\frac{\langle\uxb,\underline{y}\grave{}\rangle^j}{j!}.
\end{eqnarray*}

We then have the following definition of the super Radon transform:
\begin{definition}
The Radon transform of a function $f\in \cS(\mR^{m})_{m|2n}$ is given by
\begin{eqnarray*}
\cR_{m|2n}(f)(y,p)=\int_{\mR^{m|2n}}\delta(\langle x,y \rangle+p)f(x).
\end{eqnarray*}
\label{defRadon}
\end{definition}

Now we show that this definition is equivalent with the definition given in \cite{DBS9}, hence solving problem \textbf{P5}.

\begin{theorem} 
\label{fourierradon}
For every $f\in \cS(\mR^{m|2n})$ the Radon transform is given by
\begin{eqnarray*}
\cR_{m|2n}(f)(y,p)&=&(2\pi)^{M/2-1}\int_{-\infty}^\infty dr\,\exp(ipr)[\cF^-_{m|2n}(f)(ry)],
\end{eqnarray*}
for every superdimension $M$ with $m\not=0$.
\end{theorem}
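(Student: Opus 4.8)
The plan is to deduce the statement from the purely bosonic identity (\ref{FourierRadon}) by exploiting that the anticommuting part $\langle\uxb,\uyb\rangle$ of the super inner product $\langle x,y\rangle$ (see (\ref{inprod})) is nilpotent, so that the truncated Taylor expansion used in definition \ref{defRadon} is in fact a \emph{full} exponential.

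First I would record the following distributional identity in superspace. From the Fourier representation of the one-dimensional Dirac distribution and its derivatives, $\delta^{(j)}(t)=\frac{1}{2\pi}\int_{-\infty}^{\infty}(ir)^{j}e^{irt}\,dr$, together with the Taylor expansion of $\delta(\langle x,y\rangle+p)$ preceding definition \ref{defRadon}, one obtains
\[
\delta(\langle x,y\rangle+p)=\frac{1}{2\pi}\int_{-\infty}^{\infty}e^{ir(\langle\ux,\uy\rangle+p)}\sum_{j=0}^{2n}\frac{\bigl(ir\langle\uxb,\uyb\rangle\bigr)^{j}}{j!}\,dr=\frac{1}{2\pi}\int_{-\infty}^{\infty}e^{ir(\langle x,y\rangle+p)}\,dr,
\]
because $\langle\uxb,\uyb\rangle^{2n+1}=0$ turns the truncated series into $e^{ir\langle\uxb,\uyb\rangle}$. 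Thus the super Dirac distribution has formally the same Fourier representation as its bosonic counterpart, with $\langle\ux,\uy\rangle$ replaced by $\langle x,y\rangle$.

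Next I would substitute the expansion of definition \ref{defRadon} term by term into the integral over $\mR^{m|2n}$ (the Berezin integral composed with Lebesgue integration, cf.\ (\ref{superint})). Setting $g_{j}(\ux):=\int_{B}\frac{\langle\uxb,\uyb\rangle^{j}}{j!}f(x)$, a Schwartz function of $\ux$ with coefficients in $\Lambda_{2n}$, and using $\delta^{(j)}(\langle\ux,\uy\rangle+p)=\partial_{p}^{j}\delta(\langle\ux,\uy\rangle+p)$ together with (\ref{FourierRadon}) applied to $g_{j}$, differentiation under the integral sign gives
\[
\int_{\mR^{m}}\delta^{(j)}(\langle\ux,\uy\rangle+p)g_{j}(\ux)\,dV(\ux)=(2\pi)^{m/2-1}\int_{-\infty}^{\infty}(ir)^{j}e^{ipr}\,\cF^{-}_{m|0}(g_{j})(r\uy)\,dr,
\]
which is legitimate since $r\mapsto\cF^{-}_{m|0}(g_{j})(r\uy)$ is Schwartz for fixed $\uy\neq0$. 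Summing over $j=0,\dots,2n$ and carrying the finite sum back inside the Berezin integral, the factor $\sum_{j}\frac{(ir\langle\uxb,\uyb\rangle)^{j}}{j!}$ again collapses to $e^{ir\langle\uxb,\uyb\rangle}$, whence
\[
\cR_{m|2n}(f)(y,p)=(2\pi)^{m/2-1}\int_{-\infty}^{\infty}e^{ipr}\Bigl[(2\pi)^{-m/2}\int_{\mR^{m|2n},x}e^{ir\langle x,y\rangle}f(x)\Bigr]dr.
\]

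Finally I would identify the bracketed quantity with the super Fourier transform: recalling that $\cF^{-}_{m|2n}$ carries the normalisation factor $(2\pi)^{-M/2}$, one has $\int_{\mR^{m|2n},x}e^{ir\langle x,y\rangle}f(x)=(2\pi)^{M/2}\cF^{-}_{m|2n}(f)(ry)$, and collecting powers of $2\pi$ yields $(2\pi)^{m/2-1}(2\pi)^{-m/2}(2\pi)^{M/2}=(2\pi)^{M/2-1}$, exactly the asserted constant. Since $M=m-2n$, the discrepancy $(2\pi)^{-n}$ between the bosonic and super Fourier normalisations is precisely what converts $m/2-1$ into $M/2-1$, so the argument is valid for every $M$ with $m\neq0$ (the degenerate case $\uy=0$ holding in the sense of distributions in $p$). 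I expect the only delicate point to be the interchange of the $r$-integral with $\int_{\mR^{m|2n}}$; this is exactly what the reduction to the already-established bosonic formula (\ref{FourierRadon}) circumvents, so no analytic input beyond the Schwartz decay of $f$ is needed.
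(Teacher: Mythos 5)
Your proof is correct and follows essentially the same route as the paper: the central step, rewriting $\delta(\langle x,y\rangle+p)$ as $\frac{1}{2\pi}\int_{\mR}dr\,\exp(ir(\langle x,y\rangle+p))$ by noting that nilpotency of $\langle\uxb,\uyb\rangle$ turns the truncated Taylor series into a genuine exponential, is exactly the identity the paper's proof rests on. The rest of your argument (term-by-term reduction to the bosonic formula and the $(2\pi)$-normalisation bookkeeping) merely makes explicit what the paper leaves as a ``direct consequence''.
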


\begin{proof}

This theorem is a direct consequence of the following equality of elements of $(\cS(\mR^{m})_{m|2n})' \cong \cS(\mR^m)'\otimes\Lambda_{2n}$,
\begin{eqnarray*}
\delta(\langle x,y\rangle+p)&=&\sum_{j=0}^{2n}\delta^{(j)}(\langle\ux , \underline{y}\rangle+p)\frac{\langle\uxb,\underline{y}\grave{}\rangle^j}{j!}\\
&=&\frac{1}{2\pi}\sum_{j=0}^n\frac{\langle\uxb,\uyb\rangle^j}{j!}\int_{\mR}dr\,(ir)^j\exp(ir(\langle\ux ,\uy\rangle+p))\\
&=&\frac{1}{2\pi}\int_{\mR}dr\,\exp(ir\langle\uxb,\uyb\rangle)\exp(ir(\langle\ux,\uy\rangle+p))\\
&=&\frac{1}{2\pi}\int_{\mR}dr\,\exp(ir(\langle x,y\rangle+p)).
\end{eqnarray*}
\end{proof}

Further properties of the Radon transform in superspace will be obtained in a subsequent paper.

\end{document}